\def\namedlabel#1#2{\begingroup
    #2%
    \def\@currentlabel{#2}%
    \phantomsection\label{#1}\endgroup
}
\newcommand{\poly}{\mathsf{poly}}
\newcommand{\eps}{\varepsilon}
\newtheorem{lemma}{Lemma}[section]
\newtheorem{theorem}[lemma]{Theorem}
\newtheorem{construction}[lemma]{Construction}
\newtheorem{definition}[lemma]{Definition}
\newtheorem{corollary}[lemma]{Corollary}
\newtheorem{remark}[lemma]{Remark}
\newtheorem{algo}[lemma]{Algorithm}
\begin{document}

\allowdisplaybreaks

\begin{titlepage}
\def\thepage{}

\title{Block Edit Errors with Transpositions: Deterministic Document Exchange Protocols and Almost Optimal Binary Codes}
\author{Kuan Cheng  \thanks{kcheng17@jhu.edu.\ Department of Computer Science, Johns Hopkins University. Supported by NSF Grant CCF-1617713.} \and Zhengzhong Jin  \thanks{zjin12@jhu.edu.\ Department of Computer Science, Johns Hopkins University.\ Partially supported by NSF Grant CCF-1617713.}\and Xin Li  \thanks{lixints@cs.jhu.edu.\ Department of Computer Science, Johns Hopkins University. Supported by NSF Grant CCF-1617713.} \and Ke Wu \thanks{AshleyMo@jhu.edu.\ Department of Computer Science, Johns Hopkins University.} }

\maketitle \thispagestyle{empty}


Document exchange and error correcting codes are two fundamental problems regarding communications. In the first problem, Alice and Bob each holds a string, and the goal is for Alice to send a short sketch to Bob, so that Bob can recover Alice's string. In the second problem, Alice sends a message with some redundant information to Bob through a channel that can add adversarial errors, and the goal is for Bob to correctly recover the message despite the errors. In both problems, an upper bound is placed on the number of errors between the two strings or that the channel can add, and a major goal is to minimize the size of the sketch or the redundant information. In this paper we focus on deterministic document exchange protocols and binary error correcting codes.

Both problems have been studied extensively.\ In the case of Hamming errors (i.e., bit substitutions) and bit erasures, we have explicit constructions with asymptotically optimal parameters. However, other error types are still rather poorly understood.\ In a recent work \cite{CJLW18}, the authors constructed explicit deterministic document exchange protocols and binary error correcting codes for edit errors with almost optimal parameters.\ Unfortunately, the constructions in \cite{CJLW18} do not work for other common errors such as block transpositions. 

In this paper, we generalize the constructions in \cite{CJLW18} to handle a much larger class of errors. These include bursts of insertions and deletions, as well as block transpositions. Specifically, we consider document exchange and error correcting codes where the total number of block insertions, block deletions, and block transpositions is at most $k \leq \alpha n/\log n$ for some constant $0<\alpha<1$. In addition, the total number of bits inserted and deleted by the first two kinds of operations is at most $t \leq \beta n$ for some constant $0<\beta<1$, where $n$ is the length of Alice's string or message. We construct explicit, deterministic document exchange protocols with sketch size $ O( (k  \log n +t) \log^2 \frac{n}{k\log n + t} )$ and explicit binary error correcting code with $O(k \log n \log \log \log n+t)$ redundant bits. As a comparison, the information-theoretic optimum for both problems is $\Theta(k \log n+t)$. As far as we know, previously there are no known explicit deterministic document exchange protocols in this case, and the best known binary code needs $\Omega(n)$ redundant bits even to correct just \emph{one} block transposition \cite{796406}.\footnote{We note that by combining the techniques in \cite{HS17c} and \cite{haeupler2017synsimucode}, one can get an explicit binary code that corrects $k$ block transpositions with $\widetilde{O}(\sqrt{kn})$ redundant bits. However to our knowledge this result has not appeared anywhere in the literature, and moreover it requires at least $\widetilde{\Omega}(\sqrt{n})$ redundant bits even to correct one block transposition.}

\end{titlepage}

\section{Introduction}
In communications and more generally distributed computing environments, questions arises regarding the synchronization of files or messages. For example, a message sent from one party to another party through a channel may get modified by channel noise or adversarial errors, and files stored on distributed servers may become out of sync due to different edit operations by different users. In many situations, these questions can be formalized in the framework of the following two fundamental problems.

\begin{itemize}
\item \emph{Document exchange.} In this problem, two parties Alice and Bob each holds a string $x$ and $y$, and the two strings are within distance $k$ in some metric space. The goal is for Alice to send a short sketch to Bob, so that Bob can recover $x$ based on his string $y$ and the sketch. 

\item \emph{Error correcting codes.} In this problem, two parties Alice and Bob are linked by a channel, which can change any string sent into another string within distance $k$ in some metric space. Alice's goal is to send a message to Bob. She does this by sending an encoding of the message through the channel, which contains some redundant information, so that Bob can recover the correct message despite any changes to the codeword. 
\end{itemize}

These two problems are closely related. For example, in many cases a solution to the document exchange problem can also be used to construct an error correcting code, but the reverse direction is not necessarily true. In both problems, a major goal is to is to minimize the size of the sketch or the redundant information. For applications in computer science, we also require the computations of both parties to be efficient, i.e., in polynomial time of the input length. In this case we say that the solutions to these problems are \emph{explicit}. Here we focus on deterministic document exchange protocols and error correcting codes with a binary alphabet, arguably the most important setting in computer science.

Both problems have been studied extensively, but the known solutions and our knowledge vary significantly depending on the distance metric in these problems. In the case of Hamming distance (or Hamming errors),  we have a near complete understanding and explicit constructions with asymptotically optimal parameters. However, for other distance metrics/error types, our understanding is still rather limited.

An important generalization of Hamming errors is edit errors, which consist of bit insertions and deletions. These are strictly more general than Hamming errors since a bit substitution can be replaced by a deletion followed by an insertion. Edit errors can happen in many practical situations, such as reading magnetic and optical media, mutations in gene sequences, and routing packets in Internet protocols. However, these errors are considerably harder to handle, due to the fact that a single edit error can change the positions of all the bits in a string.

Non-explicitly, by using a greedy graph coloring algorithm or a sphere packing argument, one can show that the optimal size of the sketch in document exchange, or the redundant information in error correcting codes is roughly the same for both Hamming errors and edit errors. Specifically, suppose that Alice's string or message has length $n$ and the distance bound $k$ is relatively small (e.g., $k \leq n/4$), then for both Hamming errors and edit errors, the optimal size in both problems is $\Theta(k \log (\frac{n}{k}))$ \cite{Levenshtein66}. For Hamming errors, this can be achieved by using sophisticated linear Algebraic Geometric codes \cite{hoholdt1998algebraic}, but for edit errors the situation is quite different. We now describe some of the previous works regarding both document exchange and error correcting codes for edit errors.

\paragraph{Document exchange.} Orlitsky \cite{185373} first studied the document exchange problem for generally correlated strings $x, y$. Using the greedy graph coloring algorithm mentioned before, he obtained a deterministic protocol with sketch size $O(k \log n)$ for edit errors, but the running time is exponential in $k$. Subsequent improvements appeared in \cite{CormodePSV00}, \cite{irmak2005improved},  and \cite{Jowhari2012EfficientCP}, achieving sketch size $O(k \log(\frac{n}{k}) \log n)$ \cite{irmak2005improved} and $O(k \log^2 n \log^* n)$ \cite{Jowhari2012EfficientCP} with running time $\tilde{O}(n)$. A recent work by Chakraborty et al. \cite{Chakraborty2015LowDE} further obtained sketch size $O(k^2 \log n)$ and running time $\tilde{O}(n)$, by using a clever randomized embedding from the edit distance metric to the Hamming distance metric. Based on this work, Belazzougui and Zhang \cite{BelazzouguiZ16} gave an improved protocol with sketch size $O(k (\log^2 k+\log n))$,  which is asymptotically optimal for $k =2^{O(\sqrt{\log n})}$. The running time in \cite{BelazzouguiZ16} is $\tilde{O}(n+\poly(k))$.

Unfortunately, all of the above protocols, except the one in \cite{185373} which runs in exponential time, are randomized. Although randomized protocols are still useful in practice, having deterministic ones would certainly bring much more benefits. Furthermore, randomized protocols are also not suitable for the applications in constructing error correcting codes. However, designing an efficient deterministic protocol appears quite tricky, and it was not until 2015 when Belazzougui \cite{Belazzougui2015EfficientDS} gave the first deterministic protocol even for $k>1$. The protocol in \cite{Belazzougui2015EfficientDS} has sketch size $O(k^2 + k \log^2 n)$ and running time  $\tilde{O}(n)$.

\paragraph{Error correcting codes.} As fundamental objects in both theory and practice, error correcting codes have been studied extensively from the pioneering work of Shannon and Hamming. While great success has been achieved in constructing codes for Hamming errors, the progress on codes for edit errors has been quite slow despite much research. A work by Levenshtein \cite{Levenshtein66} in 1966 showed that the Varshamov-Tenengolts code \cite{VT65}  corrects one deletion with an optimal redundancy of roughly $\log n$ bits, but even correcting two deletions requires $\Omega(n)$ redundant bits. In 1999, Schulman and Zuckerman \cite{796406} gave an explicit asymptotically good code, that can correct up to $\Omega(n)$ edit errors with $O(n)$ redundant bits. However the same amount of redundancy is needed even for smaller number of errors. For more earlier works on this subject, we refer the reader to the survey by Mercier et al. \cite{MercierBT10}.

In recent years there have been several works trying to improve the situation. Specifically, a line of work by Guruswami et.\ al \cite{7835185}, \cite{7541373}, \cite{BukhV16} constructed explicit codes that can correct $1-\eps$ fraction of edit errors with rate $\Omega(\eps^5)$ and alphabet size $\poly(1/\eps)$; and codes that can correct $1-\frac{2}{t+1}-\eps$ fraction of errors with rate $(\eps/t)^{\poly(1/\eps)}$ for a fixed alphabet size $t \geq 2$. Another line of work by Haeupler et al. \cite{haeupler2017synchronization}, \cite{HS17c}, \cite{CHLSW18} introduced and constructed a combinatorial object called \emph{synchronization string}, which can be used to transform standard error correcting codes into codes for edit errors by increasing the alphabet size.\ Via this transformation, \cite{haeupler2017synchronization} achieved explicit codes that can correct $\delta$ fraction of edit errors with rate $1-\delta-\eps$ and alphabet size exponential in $\frac{1}{\eps}$, which approaches the singleton bound. All of these works however require a relatively large alphabet size. 

In the case of binary alphabets, for any fixed constant $k$, a recent work by Brakensiek et.\ al \cite{8022906} constructed an explicit code that can correct $k$ edit errors with $O(k^2 \log k \log n)$ redundant bits. This is asymptotically optimal when $k$ is a fixed constant, but the construction in \cite{8022906} only works for constant $k$, and breaks down for larger $k$ (e.g., $k=\log n$). Based on his deterministic document exchange protocol, Belazzougui \cite{Belazzougui2015EfficientDS} also gave an explicit code that can correct up to $k$ edit errors with $O(k^2 + k \log^2 n)$ redundant bits. Finally, the work by Haeupler et. al \cite{haeupler2017synsimucode} constructed explicit codes that can correct $\delta$ fraction of edit errors with rate $1-\Theta(\sqrt{\delta \log(1/\delta)})$, whereas the (non-explicit) optimal rate is $1-\Theta(\delta \log(1/\delta))$.

In a very recent work by the authors \cite{CJLW18}, we significantly improved the situation. Specifically, we constructed an explicit document exchange protocol with sketch size $O(k \log^2 \frac{n}{k}) $, which is optimal except for an additional $\log \frac{n}{k}$ factor. This also implies an explicit binary code that can correct $\delta$ fraction of edit errors with rate $1-\Theta(\delta \log^2(1/\delta))$, which is optimal up to an additional $\log(1/\delta)$ factor. These two results are also independently obtained by Haeupler \cite{haeupler2018optimal}. We also constructed explicit codes for $k$ edit errors with $O(k \log n)$ redundant bits, which is optimal for $k \leq n^{1-\alpha}$, any constant $0<\alpha<1$. These results bring our understanding of document exchange and error correcting codes for edit errors much closer to that of standard Hamming errors.

However, the constructions in \cite{CJLW18} and \cite{haeupler2018optimal} do not work for other common types of errors, such as \emph{block transpositions}. Given any string $x$, a block transposition takes an arbitrary substring $z$ of $x$, cuts it to make $x$ become $\tilde{x}$, and then finds a different position in $\tilde{x}$ and insert $z$ as a block into $\tilde{x}$. These errors happen frequently in distributed file systems and Internet protocols. For example, it is quite common that a user, when editing a file, moves a whole paragraph in the file to somewhere else; and in Internet routing protocols, packets can often get rearranged during the process. Block transpositions also arise naturally in biological processes, where a subsequence of genes can be moved in one step during mutation. In the setting of document exchange or error correcting codes, it is easy to see that even a single transposition of a block with length $t$ can result in $2t$ edit errors, thus a naive application of document exchange protocols or codes for edit errors will result in very bad parameters.

In this paper we consider document exchange protocols and error correcting codes for edit errors as well as block transpositions. In fact, even for edit errors we also consider a larger, more general class of errors. Specifically, we consider edit errors that happen in \emph{bursts}. This kind of errors is also pretty common, as most errors that happen in practice, such as in wireless or mobile communications and magnetic disk readings, tend to be concentrated. We model such errors as \emph{block} insertions and deletions, where in one operation the adversary can insert or delete a whole block of bits. It is again easy to see that this is indeed a generalization of standard edit errors. Therefore, in this paper we study three \emph{block edit} operations: block insertions, block deletions, and block transpositions. However, in addition to the bound $k$ on such operations, we also need to put a bound on the total number of bits that the adversary can insert or delete, since otherwise the adversary can simply delete the whole string in one block deletion. Therefore, we model the adversary as follows.

\paragraph{Model of the adversary.} For some parameters $k$ and $t$ and an alphabet $\Sigma$, a $(k, t)$ block edit adversary is allowed to perform three kinds of operations: block insertion, block deletion and block transposition. The adversary is allowed to perform at most $k$ such operations, while the total number of symbols inserted/deleted by the first two operations is at most $t$. We also use $(k, t)$ block edit errors to denote errors introduced by such an adversary. All our results focus on the case of binary alphabet, but in our protocols and analysis we will be using larger alphabets.

We note that by the result of Schulman and Zuckerman \cite{796406}, to correct $\Omega(n/\log n)$ block transpositions one needs at least $\Omega(n)$ redundant bits. Thus we only consider $k \leq \alpha n/\log n$ for some constant $0< \alpha<1$. Similarly, we only consider $t \leq \beta n$ for  some constant $0< \beta<1$ since otherwise the adversary can simply delete the whole string. We also note the following subtle difference between the three block edit operations. While we need a bound $t$ on the total number of bits that the adversary can insert or delete, for block transposition an adversary can choose to move an \emph{arbitrarily long} substring. Therefore, we need to consider the three operations separately, and cannot simply replace a block transposition by a block deletion followed by a block insertion.

Edit errors with block transpositions have been studied before in several different contexts. For example, Shapira and Storer \cite{ShapiraS02} showed that finding the distance between two given strings under this metric is $\mathsf{NP}$-hard, and they gave an efficient algorithm that achieves $O(\log n)$ approximation. Interestingly, a work by Cormode and Muthukrishnan \cite{CormodeM07} showed that this metric can be embedded into the $\mathsf{L_1}$ metric with distortion $O(\log n \log^* n)$; and they used it to give a near linear time algorithm that achieves $O(\log n \log^* n)$ approximation for this distance, something currently unknown for the standard edit distance. Coming back to document exchange and error correcting codes, in our model, we show in the appendix that non-explicitly, the information optimum for both the sketch size of document exchange, and the redundancy of error correcting codes, is $\Theta(k \log n+t)$.

\paragraph{Related previous work on block transpositions.} When it comes to more general errors such as block transpositions, as far as we know, there are no known explicit deterministic document exchange protocols. The only known randomized protocols which can handle edit errors as well as block transpositions are the protocol of \cite{irmak2005improved}, which has sketch size $O(k \log(\frac{n}{k}) \log n)$; and the protocol of \cite{Jowhari2012EfficientCP}, which has sketch size $\tilde{O}(k \log^2 n)$. The protocol of \cite{irmak2005improved} uses a recursive tree structure and random hash functions, while the protocol of \cite{Jowhari2012EfficientCP} is based on the embedding of Cormode and Muthukrishnan \cite{CormodeM07}. We stress that both of these protocols are randomized, and there are very good reasons why it is not easy to modify them into deterministic ones. Specifically, unlike in our previous work \cite{CJLW18} and the work of Haeupler \cite{haeupler2018optimal}, a direct derandomization of the hash functions used in \cite{irmak2005improved} (for example by using almost $k$-wise independent sample space) does \emph{not} give a deterministic protocol, because block transpositions will make the computation of a matching problematic. We shall discuss this in more details when we give an overview of our techniques. On the other hand, the embedding of Cormode and Muthukrishnan \cite{CormodeM07} results in an exponentially large dimension, thus directly sending a sketch deterministically will result in a prohibitively large size. This is why the protocol of \cite{Jowhari2012EfficientCP} has to perform a dimension reduction first, which is necessarily randomized.

Similarly, the only previous explicit codes that can handle edit errors as well as block transpositions are the work of Schulman and Zuckerman \cite{796406}, and the work of Haeupler et al.  \cite{HS17c}. Both can recover from $\Omega(n/\log n)$ block transpositions with $\Omega(n)$ redundant bits (\cite{HS17c} can also recover from block replications), but \cite{796406} has a binary alphabet while \cite{HS17c} has a constant size alphabet. However the work of Schulman and Zuckerman \cite{796406} also needs $\Omega(n)$ redundant bits even to correct one block transposition. We further note that by combining the techniques in \cite{HS17c} and \cite{haeupler2017synsimucode}, one can get an explicit binary code that corrects $k$ block transpositions with $\widetilde{O}(\sqrt{kn})$ redundant bits. However to our knowledge this result has not appeared anywhere in the literature, and moreover it requires at least $\widetilde{\Omega}(\sqrt{n})$ redundant bits even to correct one block transposition. We note that however none of the previous works mentioned studied edit errors that can allow block insertions/deletions.

\subsection{Our results}
In this paper we construct explicit  document exchange protocols, and error correcting codes for adversaries discussed above. We have the following theorems.
\begin{theorem}
There exist constants $\alpha, \beta \in (0, 1)$ such that for every $n, k, t \in \mathbb{N}$ with $k \leq \alpha n/\log n, t \leq \beta n$, there exists an explicit  binary document exchange protocol with sketch size   $ O( (k  \log n +t) \log^2 \frac{n}{k\log n + t} )$, against a $(k, t)$ block edit adversary. 
\end{theorem}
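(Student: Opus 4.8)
The plan is to follow the hierarchical, syndrome‑based template of the edit‑error protocol of \cite{CJLW18} (itself a descendant of Irmak et al.\ \cite{irmak2005improved}), but to replace its order‑preserving alignment step — which is exactly what collapses under transpositions — with a matching primitive that is insensitive to reordering, and then to derandomize that primitive. The structural fact driving everything is that a $(k,t)$ block edit sequence turns $x$ into a string $y$ that can be cut at only $O(k)$ places into maximal runs, each of which is either a verbatim (possibly displaced) substring of $x$ or one of the newly inserted blocks, with at most $t$ inserted or deleted symbols in total: each transposition contributes at most three such cut points, each block insertion/deletion at most two more, and between cut points $y$ agrees with a contiguous piece of $x$. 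So although the usual dynamic‑programming alignment of $x$ and $y$ is unavailable (finding the exact distance is even $\mathsf{NP}$‑hard, cf.\ \cite{CormodeM07}), the number of ``broken'' blocks at any fixed scale is still only $O(k)$ plus $O(t/b)$ when the block length is $b$.

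Concretely, conceptually arrange $x$ in a tree whose level‑$\ell$ nodes are the consecutive blocks of some length $b_\ell$ (with branching factor tuned, as in \cite{CJLW18}, to roughly $n/N$ where $N=k\log n+t$, and leaves of length $O(1)$). At level $\ell$ Alice sends a syndrome of the sequence of hashes of her length‑$b_\ell$ blocks, with error/erasure‑correction capacity $O(k+t/b_\ell)$. Bob computes the hash of every window of $y$; by the run structure, all but $O(k+t/b_\ell)$ of Alice's level‑$\ell$ blocks coincide with some window of $y$ (a block is missed only if it straddles a cut point or lies inside a deleted region), so Bob can fill in all but that many entries of Alice's hash sequence, decode the syndrome to recover the rest, and then recurse into precisely the blocks he could not already reconstruct verbatim, bottoming out by sending the $O(1)$‑size unresolved leaves in the clear together with an $O(t)$‑bit symbol‑level redundancy to pin down the stray inserted/deleted bits. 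Summing geometric scales, with hashes of $O(\log n)$ bits and $O(\log\frac{n}{N})$ levels, optimized exactly as in \cite{CJLW18}, yields the claimed $O\!\big(N\log^2\frac{n}{N}\big)$; the extra $\log n$ relative to the pure‑edit bound $O(k\log^2\frac{n}{k})$ is precisely the cost of naming, for each of the $k$ operations, an arbitrary position in $\{1,\dots,n\}$ so that the located runs of $y$ can be reassembled in the order they occur in $x$.

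\textbf{Main obstacle.} The crux — and the reason a direct derandomization of \cite{irmak2005improved} fails — is making the matching step both \emph{correct} and \emph{deterministic} in the presence of transpositions. With random hashes Bob may safely declare ``block $B$ of $x$ occurs at position $p$ of $y$'' whenever the hashes agree, since a spurious agreement has probability $1/\poly(n)$ and a union bound over the $O(n^2)$ (block, position) pairs suffices no matter how the adversary acts. Deterministically there is no monotonicity to fall back on: a surviving block can legitimately reappear in several places of $y$, so a bad hash could produce a locally consistent but globally wrong assembly. I would resolve this by using a hash family drawable from an $O(\log n)$‑bit seed with collision probability $1/\poly(n)$ — an $\eps$‑biased or almost $\Delta$‑wise independent sample space with $\eps=1/\poly(n)$, whose seed Alice simply appends to the sketch — and arguing, via a single union bound over $\poly(n)$ events (legitimate, since $x$ and the entire adversarial edit process are fixed), that no window of $y$ other than a genuine surviving block of $x$ collides with a block of $x$; appending the $x$‑index of a block to its content before hashing makes the hash certify intended position as well as content, further ruling out wrong assemblies. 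Once this transposition‑robust matching is in place and shown to be computable in $\poly(n)$ time, the remaining items — propagating the $t$‑budget across levels, keeping the count of unresolved blocks at $O(k+t/b_\ell)$ at every scale, and the running‑time bookkeeping — go through as in \cite{CJLW18}.
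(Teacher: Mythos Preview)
Your high–level architecture (multi–level hashing with error–correcting syndromes, as in \cite{CJLW18} and \cite{irmak2005improved}) matches the paper's, but the way you propose to resolve the ``main obstacle'' has a genuine gap. You want Alice to pick an $O(\log n)$–bit seed so that ``no window of $y$ other than a genuine surviving block of $x$ collides with a block of $x$,'' and you justify this by a union bound over $\poly(n)$ events ``since $x$ and the entire adversarial edit process are fixed.'' But Alice does not see $y$; her seed must be a deterministic function of $x$ alone, and therefore must work simultaneously for \emph{every} $y$ reachable from $x$ by $(k,t)$ block edits. Since the adversary may insert $t$ arbitrary bits, there are $2^{\Omega(t)}$ such $y$'s, each contributing $\poly(n)$ potential collision events, so the union bound you need is over $2^{\Omega(t)}\cdot\poly(n)$ events, forcing the seed (and hence the hash output) to have $\Omega(t)$ bits --- far too large. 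Put differently, Alice has no way to \emph{verify}, given only $x$, that a candidate seed avoids collisions between blocks of $x$ and windows of a $y$ she has never seen. Appending the $x$–index before hashing does not help: the problematic collisions are between a block of $x$ and a \emph{newly created} window of $y$ that is not a substring of $x$ at all.

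What the paper actually does is weaker but verifiable: Alice searches for a seed making the hash \emph{collision–free for $x$}, meaning no two distinct length–$p$ substrings of $x$ share a hash (this is checkable from $x$ alone in $\poly(n)$ time). With such a hash, a wrong match $(i,j)$ forces $y[j,j+p)$ to be a string that is \emph{not} a substring of $x$; the number of such ``new'' non–overlapping windows in $y$ is then bounded by $O(k+t/b)$ using the cut–point structure you described. This still leaves the algorithmic problem of how Bob actually computes the matching: the paper shows that the right target is a maximum \emph{non–overlapping} non–monotone matching, which it cannot solve exactly, so it runs a greedy $1/3$–approximation three times to get a degree–$3$ overlapping matching of size at least $2/3$ of optimum. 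Because overlaps accumulate across levels, the number of wrong/unmatched blocks at level $i$ is $O\big(i\,(k+t/\log n)\big)$ rather than your claimed $O(k+t/b_\ell)$; summing $i$ over $L=O(\log\frac{n}{k\log n+t})$ levels is precisely what produces the $\log^2$ in the final bound. Your sketch elides both the ``collision–free for $x$'' idea and the approximation/overlap bookkeeping, and as written the derandomization step does not go through.
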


This is the first explicit binary document exchange protocol for block edit errors.\ The sketch size matches the randomized protocols of \cite{irmak2005improved} and \cite{Jowhari2012EfficientCP} up to an additional $\log \frac{n}{k\log n + t }$ factor, and is optimal up to an additional $\log^2 \frac{n}{k\log n + t }$ factor. Using this protocol, we can construct the following error correcting code.

\begin{theorem}
There exist constants $\alpha, \beta \in (0, 1)$ such that for every $n, k, t \in \mathbb{N}$ with $k \leq \alpha n/\log n, t \leq \beta n$, there exists an explicit binary error correcting code with message length $n$ and codeword length $n+  O( (k  \log n +t) \log^2 \frac{n}{k\log n + t} )$, against a $(k, t)$ block edit adversary. 
\end{theorem}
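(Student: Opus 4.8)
\medskip
\noindent\textbf{Proof plan.} Write $s = k\log n + t$, so the goal is redundancy $O(s\log^2(n/s))$, which matches (up to a constant) the sketch size of the document exchange protocol above; the plan is the standard ``message plus protected sketch'' reduction from codes to document exchange, with extra care devoted to letting the decoder recover the sketch from a codeword whose pieces have been permuted by block transpositions. Given a message $m\in\bit^n$, first compute the document exchange sketch $z$ of $m$ from the first theorem, pad it so that $|z| = \Theta(s\log^2(n/s))$ and also $|z| = \Omega(k\log n + t)$ (a constant-factor change), and then protect $z$ against block edit errors by encoding it into $\bar z$ with an explicit constant-rate binary code that corrects $(O(k),O(t))$ block edit errors on strings of length $\Theta(|z|)$; such a code follows from the asymptotically good construction of Schulman and Zuckerman \cite{796406}, the padding guaranteeing that $k$ is below the required fraction of $|z|/\log|z|$ and $t$ below the required fraction of $|z|$. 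In addition, give $\bar z$ internal structure making it self-locating: cut $z$ into $\Theta(s/\log(n/s))$ blocks of length $\Theta(\log(n/s))$, prepend to each block its index and a fixed synchronization marker, and protect this block sequence with a further outer code tolerating $O(k)$ block-level edit errors. The codeword is $\Enc(m) = m\circ\bar z$, of length $n + O(s\log^2(n/s))$ as required; the indices, markers and outer-code symbols cost only a constant factor over $|z|$.

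On the decoding side, Bob receives $c'$, a $(k,t)$ block edit corruption of $m\circ\bar z$. The $\le k$ block operations cut the codeword into $O(k)$ pieces, so the content of $\bar z$ surfaces in $c'$ as $O(k)$ fragments interleaved with content of $m$ and with $\le t$ bits of adversarial junk. Using the synchronization markers and indices, Bob collects the (corrupted, reordered) blocks of $\bar z$, corrects the $O(k)$ that were split, deleted, or polluted via the outer code, and hence recovers $z$ exactly. He then deletes from $c'$ the bits identified as belonging to $\bar z$, leaving a string $\hat y$ that is within $(O(k),O(t))$ block edit errors of $m$ --- $m$'s content has only been permuted by $\le k$ transpositions, shortened by $\le t$ bits, and scarred in $O(k)$ places --- and finally runs the recovery algorithm of the document exchange protocol on $(\hat y, z)$, which outputs $m$. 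Choosing the code's constants $\alpha,\beta$ to be sufficiently small multiples of the document exchange protocol's constants makes all the ``$O(k)\le\alpha_{\mathrm{DE}}n/\log n$'' and ``$O(t)\le\beta_{\mathrm{DE}}n$'' conditions hold, and every step runs in $\poly(n)$ time.

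The hard part will be making $\bar z$ robustly self-locating inside $c'$. Block transpositions fragment $\bar z$ into $O(k)$ pieces, so a single separator pattern is useless; and one cannot afford to ``sanitize'' the arbitrary message $m$ to avoid long runs or a fixed pattern, since that would cost up to $\Omega(n/\log n)$ extra bits, far above the budget when $k$ and $t$ are small. In particular, substrings of $m$ may coincidentally imitate blocks of $\bar z$. The indices-plus-markers-plus-outer-code design is meant to handle this: genuine blocks can be gathered and reordered after fragmentation, and the outer code's distance should absorb both the $O(k)$ genuinely damaged blocks and the spurious ``blocks'' harvested from $m$. When this still does not determine the decomposition uniquely, Bob can enumerate the $\poly(n)$ plausible interpretations, run the decoder above on each, and output any $\hat m$ with $\Enc(\hat m)$ within $(k,t)$ block edit distance of $c'$; the minimum distance of the full construction then guarantees this $\hat m$ equals $m$. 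Verifying that the number of plausible interpretations stays polynomial, and that the block-level outer code together with the inner Schulman--Zuckerman code jointly tolerates the combined corruption, is the crux of the argument.
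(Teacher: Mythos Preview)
Your high-level reduction (message plus protected sketch, then recover the sketch first and run Bob's side of the document exchange) is exactly what the paper does, but there is a real gap in how you propose to locate the sketch inside the corrupted codeword.

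You explicitly rule out ``sanitizing'' the message $m$ to avoid a fixed pattern, on the grounds that it ``would cost up to $\Omega(n/\log n)$ extra bits.'' This is the key misconception. The paper \emph{does} sanitize $m$, and it costs only $O(\log n)$ bits: one XORs $m$ with the output of an $\eps$-almost $O(\log n)$-wise independent generator, and searches (in $\poly(n)$ time, since the seed is $O(\log n)$ bits) for a seed $r$ such that $m\oplus\mathsf{PRG}(r)$ contains no occurrence of the buffer string $\mathsf{buf}=0^{\ell_{\mathsf{buf}}-1}1$ with $\ell_{\mathsf{buf}}=\Theta(\log n)$. The seed $r$ is appended to the sketch. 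Now the decoder can scan $c'$ for the pattern $\mathsf{buf}$ and know that \emph{every} occurrence (up to the $O(k)$ that were created or destroyed by the $k$ block operations) marks a genuine block of the encoded sketch; the Schulman--Zuckerman code then only has to absorb $O(k\log n+t)$ bit-level edit errors and $k$ transpositions, which it does.

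By contrast, your indices-plus-markers-plus-outer-code scheme does \emph{not} control the number of spurious blocks harvested from $m$. Since $m$ is adversarial and completely unsanitized, it may contain $\Theta(n/\log n)$ substrings that are bit-for-bit identical to valid (marker, index, payload) blocks of $\bar z$, with whatever indices the adversary likes. Your outer code is designed for $O(k)$ block-level errors, not $\Theta(n/\log n)$; and the fallback of ``enumerate the $\poly(n)$ plausible interpretations'' is unjustified --- with $\Theta(n/\log n)$ planted fakes the number of consistent decompositions is not polynomial, and even checking a single candidate $\hat m$ by testing whether $\Enc(\hat m)$ is within $(k,t)$ block edit distance of $c'$ is itself an $\mathsf{NP}$-hard computation \cite{ShapiraS02}. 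So both the ``outer code absorbs the spurious blocks'' step and the ``enumerate and verify'' fallback fail as written. The fix is precisely the $O(\log n)$-seed PRG sanitization you dismissed.
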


For small $k, t$ we can actually achieve the following result, which gives better parameters. 

\begin{theorem}
There exist constants $\alpha, \beta \in (0, 1)$ such that for every $n, k, t \in \mathbb{N}$ with $k \leq \alpha n/\log n, t \leq \beta n$, there exists an explicit binary code with message length $n$ and codeword length $n+O(k\log n\log\log \log n+t)$, against a $(k, t)$ block edit adversary. 
\end{theorem}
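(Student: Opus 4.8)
The plan is to bootstrap the document exchange protocol of the first theorem into a code via the standard ``message plus protected sketch'' template, with buffers for re-synchronization, and then to drive down the sketch's overhead by recursion. First I would replace the message $m\in\bit^n$ by a re-encoding $\hat m$ that contains no run of $0$'s longer than $c\log n$; this is a standard, essentially free transformation costing $o(n)$ extra bits. The codeword is $c=\hat m\,\|\,0^{a}\,\|\,\mathsf{Code}'(z)$, where $z$ is the document exchange sketch of $\hat m$ produced by the first theorem, the buffer length is $a=\Theta(\log n+k\log n+t)$ --- large enough that a $(k,t)$ block-edit adversary can neither erase it nor create a competing long $0$-run inside $\hat m$ --- and $\mathsf{Code}'$ is an auxiliary binary code correcting $(k,t)$ block edit errors on inputs of length $|z|$. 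To decode, Bob locates the surviving part of the $0$-buffer (using that $\hat m$ has no long $0$-run and only $k$ blocks plus $t$ loose bits were touched), which splits the received word into a corrupted $\hat m$ --- possibly contaminated by $O(k)$ fragments that block transpositions carried across the buffer, which we charge as $O(k)$ extra block edits and $O(t)$ extra loose bits, absorbed by shrinking $\alpha,\beta$ --- and a corrupted $\mathsf{Code}'(z)$. Bob decodes the latter to recover $z$ exactly, then runs the document exchange decoder of the first theorem on the corrupted $\hat m$ with side information $z$ to recover $\hat m$, hence $m$.

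Everything hinges on choosing $\mathsf{Code}'$ economically. Instantiating $\mathsf{Code}'$ with the code of the second theorem already works and gives $R(n)=O(a)+|z|+R(|z|)$ with $|z|=O(p\log^2(n/p))$, $p=k\log n+t$; since $|z|$ collapses to $O(p\cdot(\log\log(n/p))^{O(1)})$ after one step and then doubly-exponentially fast, the series telescopes to $O(p\log^2(n/p))$ --- which merely reproves the second theorem. To reach $O(k\log n\log\log\log n+t)$ one must protect $z$ far more cheaply at every level. The leverage is that the sketch of the first theorem is itself a concatenation of $O(\log(n/p))$ layers, each of near-optimal size $O(k\log n+t)$; instead of wrapping all of $z$ in one auxiliary code I would protect the layers separately and recurse the construction on a single layer, whose length is already $O(k\log n+t)$. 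This replaces the recursion above by one whose size parameter drops from $n$ to $\poly(k\log n+t)$ in a bounded number of steps; balancing the recursion depth against the per-level buffer-and-indexing overhead --- iterated-logarithm bookkeeping that bottoms out after about $\log\log\log n$ levels, each contributing $O(k\log n+t)$ --- yields the claimed $O(k\log n\log\log\log n+t)$. A parallel, arguably cleaner route bypasses the full sketch: keep $O(\log(n/p))$ hash layers over blocks of geometrically varying size, encode the active $O(k)$ entries of each layer with a Reed--Solomon or algebraic-geometry outer code ($O(k\log n)$ bits per layer), and recurse on the resulting hash string; the recursion's fast shrinkage again supplies the triple log.

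The main obstacle --- the one the paper itself flags for the Irmak et al. protocol --- is that block transpositions break the ``matching'' on which both the buffer-location step and the layer-by-layer decoding implicitly rely: a transposition can move a long substring, so the correspondence between sent and received positions is no longer monotone, and a deletion-only argument does not go through. I would avoid recomputing any matching at the code level, treating the document exchange decoder of the first theorem as a black box (it is precisely the component engineered to be transposition-robust), and instead establish two combinatorial facts: (i) a single block transposition disturbs only $O(1)$ of the block boundaries and buffer segments used for re-synchronization, so after $k$ of them the re-segmentation and the induced block-level permutation amount to a few contiguous interval moves, recoverable with $O(k)$ anchor symbols; and (ii) the contamination crossing the main buffer, and the splicing errors at layer boundaries, inflate the pair $(k,t)$ only by constant factors, so the error budget survives every level of the recursion. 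Pinning down (i) and (ii) with explicit constants, and checking that the per-level buffer and indexing overhead is genuinely $o(k\log n)$ so that it does not swamp the $\log\log\log n$ factor, is where the real work lies; the remainder is assembling known components.
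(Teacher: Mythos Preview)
Your overall template (PRG-mask the message, append a buffered encoding of a document exchange sketch, decode sketch first then message) is the right genre, but two real gaps separate your outline from a proof.

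\textbf{The single buffer does not survive block transpositions.} You use one long $0$-run to separate $\hat m$ from $\mathsf{Code}'(z)$ and assert that anything carried across by a transposition can be charged as ``$O(k)$ extra block edits and $O(t)$ extra loose bits.'' This is false: a single block transposition may move a substring of \emph{arbitrary} length, not bounded by $t$. If the adversary transposes a length-$\Theta(n)$ chunk of $\hat m$ past the buffer, the received ``sketch region'' now contains $\Theta(n)$ inserted bits, far beyond what any code of length $O(k\log n+t)$ can absorb. The paper handles exactly this by a different buffering scheme: it cuts $\mathsf{Code}'(z)$ into blocks of length $\log n$ and prepends a short marker $0^{\ell_{\mathsf{buf}}-1}1$ to \emph{each} block. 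Then a transposition, no matter how long the moved substring, creates or destroys only $O(1)$ markers, hence only $O(1)$ block-level insertions/deletions of $O(\log n)$ bits each into the reconstructed sketch stream; symmetrically the message side sees $O(1)$ extra block edits of $O(\log n)$ bits. That localization argument (Lemmas~\ref{skDecode}--\ref{msgDecode}) is the crux, and your single-buffer variant does not supply it.

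\textbf{The $\log\log\log n$ factor does not come from recursion on the sketch.} Your plan is to recurse the code construction on individual hash layers of size $O(k\log n+t)$ and argue the recursion terminates in $O(\log\log\log n)$ steps. Even granting the buffering issue above, the arithmetic is not there: at each level you still need to protect against $(k,t)$ block edit errors, and $t$ does not shrink, so the layer size does not collapse doubly-exponentially as you claim. The paper takes a completely different route: it designs a \emph{new} document exchange protocol (Theorem~\ref{randmainthm}) that works only for $B$-distinct strings and achieves sketch size $O(k\log n\log\log\log n+t)$ directly, with no recursion on the code. The mechanism is a two-stage protocol: Stage~I uses a partial parsing tree in the style of Cormode--Muthukrishnan (alphabet reduction plus landmarks), applied twice with thresholds $T=\Theta(\log n)$ and $T'=\Theta(\log\log n)$, to partition $x$ into blocks of size $\Theta(\log n\cdot\poly(\log\log n))$ with the property that $(k,t)$ block edits on $x$ induce only $(k,O(t/(TT')+k\log T'))$ block edits on the sequence of block-prefixes; a Reed--Solomon-type sketch of this sequence then costs $O(k\log n\log\log\log n+t)$. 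Stage~II runs the dyadic hashing for only $O(\log\log\log n)$ further levels to shrink blocks from $\poly\log n$ down to $O(\log n)$. The triple log is the depth of Stage~II, not of any code-level recursion, and getting it relies on the parsing-tree locality analysis (Lemmas~\ref{lemma:block-local}--\ref{lemma:partition-reduce}) and on $B$-distinctness, neither of which appears in your proposal.
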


In the case of small $k, t$, these results significantly improve the result of Schulman and Zuckerman \cite{796406}, which needs $\Omega(n)$ redundant bits even to correct one block transposition, and the result obtained by combining the techniques in \cite{HS17c} and \cite{haeupler2017synsimucode}, which needs $\widetilde{\Omega}(\sqrt{n})$ redundant bits even to correct one block transposition. The redundancy here is also optimal up to an extra $\log \log \log n$ factor or $\log^2 \frac{n}{k\log n + t }$ factor.

As a special case, we obtain the following corollaries for standard edit errors with block transpositions.

\begin{corollary}
There exist a constant $\alpha \in (0, 1)$ such that for every $n, k \in \mathbb{N}$ with $k \leq \alpha n/\log n$, there exists an explicit binary document exchange protocol with sketch size $O(k \log n \log^2\frac{n}{k \log n }) $, against an adversary who can perform $k$ edit operations or block transpositions.
\end{corollary}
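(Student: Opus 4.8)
The plan is to obtain this corollary as an immediate specialization of our document exchange result, Theorem 1.1 (the first theorem of Section~1.1, against a $(k,t)$ block edit adversary). The crucial observation is that a standard edit operation on a binary string --- the insertion or deletion of a single bit --- is exactly a block insertion or block deletion with block length $1$, and a block transposition is already one of the three operations allowed in our model. Hence an adversary that performs at most $k$ edit operations and block transpositions is a particular case of a $(k,t)$ block edit adversary; moreover we may take $t \le k$, since each of the first two kinds of operation changes the total length by exactly $1$, while block transpositions preserve the total length. So the setting of the corollary is strictly weaker than that of Theorem 1.1 with the budget $t$ set to $k$.

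First I would apply Theorem 1.1 with the given $n$, with the given bound $k \le \alpha n/\log n$, and with the total-insertion/deletion budget set to $t := k$. The hypothesis $t \le \beta n$ required by Theorem 1.1 then holds automatically, since $t = k \le \alpha n/\log n \le \beta n$ for all sufficiently large $n$; for the finitely many remaining small values of $n$ the statement is vacuous, as Alice can simply send her string $x$ in full at cost $O(1)$. This produces an explicit, deterministic, binary document exchange protocol with sketch size $O\bigl((k\log n + t)\log^2\tfrac{n}{k\log n + t}\bigr)$ against this adversary, and the single constant $\alpha$ in the corollary's statement is exactly the $\alpha$ supplied by Theorem 1.1 (the $\beta$ can be dropped because the budget is no longer a free parameter).

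It remains to simplify the bound using $t \le k$. We have $k\log n \le k\log n + t \le 2k\log n$, so the first factor is $\Theta(k\log n)$; and $k\log n + t \ge k\log n$ gives $\tfrac{n}{k\log n + t} \le \tfrac{n}{k\log n}$, hence $\log^2\tfrac{n}{k\log n + t} \le \log^2\tfrac{n}{k\log n}$ --- a positive quantity, since $k \le \alpha n/\log n$ forces $\tfrac{n}{k\log n} \ge 1/\alpha > 1$. Multiplying the two estimates yields sketch size $O\bigl(k\log n\,\log^2\tfrac{n}{k\log n}\bigr)$, which is the claimed bound, and explicitness and determinism are inherited verbatim from Theorem 1.1.

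I do not expect any genuine obstacle here: all of the substantive work is already carried out in the proof of Theorem 1.1, and the only points requiring a moment's care are the reduction of single-bit edits to length-$1$ block operations (together with the resulting choice $t \le k$ of the budget) and the trivial boundary regime where $n$ is small or $\tfrac{n}{k\log n}$ is close to $1$, both of which are absorbed into the $O(\cdot)$ notation.
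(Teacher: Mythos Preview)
Your argument is correct and matches the paper's approach: the corollary is stated in the paper as an immediate special case of the main document exchange theorem (Theorem~1.1) with no separate proof given, and your reduction---treating each single-bit edit as a length-$1$ block insertion/deletion so that $t\le k$ and then simplifying $k\log n+t=\Theta(k\log n)$---is exactly the intended specialization.
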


\begin{corollary}
There exist a constant $\alpha \in (0, 1)$ such that for every $n, k \in \mathbb{N}$ with $k \leq \alpha n/\log n$, there exists an explicit binary error correcting code with message length $n$ and codeword length $\mathsf{min}\{n+O(k \log n \log^2\frac{n}{k \log n }), n+O(k\log n\log\log \log n)\}$, against an adversary who can perform $k$ edit operations or block transpositions.
\end{corollary}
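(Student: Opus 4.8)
The plan is to derive this corollary directly from Theorems~1.2 and~1.3 by viewing the adversary here as a special case of the $(k,t)$ block edit adversary. First I would observe that every operation the present adversary is allowed to perform is an instance of a block edit operation in the earlier model: a single bit insertion or deletion is exactly a block insertion or deletion of a block of length $1$, and it contributes at most $1$ to the budget on the total number of bits inserted or deleted; a block transposition is already one of the three block edit operations and contributes nothing to that budget. Hence an adversary that performs at most $k$ edit operations or block transpositions is precisely a $(k,t)$ block edit adversary with $t \le k$.

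Next I would check the parameter ranges. Let $\alpha',\beta'$ be the constants supplied by Theorems~1.2 and~1.3, and set $\alpha=\min\{\alpha',\beta'\}$, intersecting the constraints coming from both theorems. Then $k \le \alpha n/\log n \le \alpha' n/\log n$, and $t \le k \le \alpha n/\log n \le \alpha n \le \beta' n$, so both theorems are applicable with these parameters.

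Now I would simply invoke the two constructions and keep the shorter one. Applying Theorem~1.2 with $t \le k$, and using that $k\log n \le k\log n + t \le 2k\log n$ (so $n/(k\log n+t)$ and $n/(k\log n)$ differ by a factor at most $2$, whence the $\log^2$ of either is within a constant factor of the other), gives an explicit code of codeword length $n + O\big((k\log n + t)\log^2\frac{n}{k\log n+t}\big) = n + O\big(k\log n\,\log^2\frac{n}{k\log n}\big)$. Applying Theorem~1.3 with $t \le k \le k\log n$ gives an explicit code of codeword length $n + O(k\log n\log\log\log n + t) = n + O(k\log n\log\log\log n)$. Taking whichever of these two explicit codes has the smaller redundancy yields the claimed bound $\min\{\,n + O(k\log n\log^2\frac{n}{k\log n}),\ n + O(k\log n\log\log\log n)\,\}$, and explicitness is inherited since both underlying constructions are explicit.

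The only point requiring care, rather than a genuine obstacle, is the reduction in the first paragraph: one must make sure a single edit operation is captured by the block edit model with the correct accounting — contributing exactly $1$, not more, to the $t$-budget — so that the $t$-dependent terms in Theorems~1.2 and~1.3 are all absorbed into the $k\log n$ terms. There is no additional combinatorial or algorithmic content here: both the document-exchange-based code of Theorem~1.2 and the improved small-$k$ code of Theorem~1.3 already handle the general $(k,t)$ model, and edit errors together with block transpositions form a strict sub-case.
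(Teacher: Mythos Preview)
Your proposal is correct and matches the paper's intended (implicit) derivation: the paper states this corollary without a separate proof, as it is meant to follow immediately from Theorems~1.2 and~1.3 by specializing to the case $t\le k$, exactly as you do. Your handling of the constants and the absorption of the $t$-terms into the $k\log n$-terms is fine.
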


\begin{remark}
As illustrated by our theorems and corollaries, the sketch size in our document exchange protocol or the number of redundant bits in our error correcting codes do \emph{not} depend on the size of a block in block transpositions, they only depend on the number of such operations performed.\ In contrast, the sketch size or the number of redundant bits do depend on the size of a block in block insertions or deletions. This again shows that we cannot simply treat a block transposition as a block deletion followed by a block insertion, because that will lead to a sketch size dependent on the block size.
\end{remark}

\subsection{Overview of our techniques}

In this section we provide an informal, high-level overview of our techniques. 
One important difference between this work and previous works is that in this work, we cannot use several recently introduced synchronization techniques, such as synchronization strings \cite{haeupler2017synchronization}, self-matching hash functions \cite{CJLW18}, or synchronization hash functions \cite{CJLW18}. The reason is that synchronization strings are designed for relatively large alphabets (e.g., constant size), and often result in worse parameters when translating into the binary alphabet; while self-matching hash functions and synchronization hash functions are specifically tailored for standard edit errors, and they break down once block transpositions are allowed. Instead, for document exchange we rely on the basic recursive tree structure used in \cite{irmak2005improved} and improved in \cite{CJLW18}, together with a new and more sophisticated way to \emph{approximate} maximum \emph{non-monotone, non-overlapping} matchings in the computation; and for error correcting codes we combine the string parsing techniques in \cite{CormodeM07} with our framework in \cite{CJLW18}. We start giving more details by describing our document exchange protocol.

\paragraph{Document exchange.} We first briefly describe the construction in \cite{CJLW18}. The protocol has $O(\log \frac{n}{k})$ levels where $k$ is the number of edit errors between Alice's string $x$ and Bob's string $y$.  Throughout the protocol, Bob always maintains a string $\tilde{x}$ (his current guess of Alice's string $x$). In the $i$-th level, both Alice and Bob partition their strings $x$ and $\tilde{x}$ evenly into $O(2^{i} k)$ blocks, i.e., in each subsequent level they divide a block in the previous level evenly into two blocks.\ The following invariance is maintained: in each level, at most $ck$ blocks are different between $x$ and $\tilde{x}$, where $c$ is a universal constant.

This property is satisfied at the beginning of the protocol, and maintained for subsequent levels as follows: in each level Alice constructs an appropriate hash function based on her string $x$. This function has a short description. Alice then hashes every block of $x$ and sends some redundancy of the hash values together with the description of the hash function to Bob. The redundancy here is computed by a systematic error correcting code that can correct $ck$ Hamming errors, whose alphabet corresponds to the output of the hash function. Bob, after receiving the redundancy of the hash values and the short description, first uses the hash function to hash every block of $\tilde{x}$. Since $\tilde{x}$ and $x$ differ in at most  $c k$ blocks, Bob can use the redundancy to correctly recover all the hash values. He then uses dynamic programming to find a maximum \emph{monotone} matching between $x$ and $y$ under the hash function and the hash values, and uses the matched blocks of $y$ to fill the corresponding blocks of his string $\tilde{x}$. The analysis shows that the Bob can correctly recover all blocks of $x$ except at most $c k/2$ of them. Thus in the next level $\tilde{x}$ is the same as $x$ except for at most $c k$ blocks. At the end of the protocol when the size of each block has become small enough (i.e., $O(\log \frac{n}{k})$), Alice can just send a sketch for $ck$ Hamming errors of the blocks to let Bob finally recover $x$.

Our starting point is to try to generalize the above protocol. However, one immediate difficulty is to handle block transpositions. The protocol of  \cite{CJLW18} actually performs badly for such errors. To see this consider the following example: the adversary simply moves the first $0.4 n$ bits of $x$ to the end.  Since the protocol in \cite{CJLW18} tries to find the maximum monotone matching in each level, Bob can only recover the last $0.6 n$ bits of $x$ since this gives the maximum monotone matching. In this case, one single error has cost roughly half of the string; while as a comparison, for standard edit errors, the protocol in \cite{CJLW18} lets Bob recover all except $O(1)$ blocks if there is only one edit error.

To resolve this issue, we  make several important changes to the protocol in \cite{CJLW18}. The first major change is that, in each level, instead of having Bob find the maximum \emph{monotone} matching between $x$ and $y$ using the hash values, we let Bob find the maximum \emph{non-monotone} matching. However, the hash functions used in \cite{CJLW18} are not suitable for this purpose, since the hash functions there actually allow a small number of collisions in the hash values of blocks of $x$, and the use of these hash functions in \cite{CJLW18} relies crucially on the property of a monotone matching. Instead, here we strengthen the hash function to ensure that there is no collision, by using a slightly larger output size. We call such hash functions \emph{collision free} hash functions.
\begin{definition}[Collision free hash functions]

Given $n, p, q \in  \mathbb{N}, p \leq n$ and a string $x \in \{0,1\}^n$, we say a function $h:\{0,1\}^{p} \rightarrow \{0,1\}^{q}$ is collision free (for $x$), if for every $i,j\in [n - p+1]$, $h(x[i, i+p)) = h(x[j, j+p))$ if and only if $ x[i, i+p) = x[j, j+p) $. Here $x[i, j)$ denotes the substring of $x$ which starts at the $i$'th bit and ends at the $j-1$'th bit.

\end{definition}
This definition guarantees that if the hash function we used is collision free, then any two different substrings of $x$ cannot have the same hash values.

We show that a collision free hash function can be constructed by using a $\frac{1}{\poly n}$-almost $ O(\log n)$-wise independent generator with seed length (number of random bits used) $O(\log n)$. This can work since for each pair of distinct substrings, their hash values are the same with probability $1/\poly(n)$. Since there are at most $O(n^2)$ pairs, a union bound shows the existence of collision free hash functions. To get a deterministic hash function, we check each possible seed to see if the corresponding hash function is collision free, which can be done by checking if every pair of different substrings of $x$ have different hash values. Note that there are at most $O(n^2)$ pairs and the seed length of the generator is $O(\log n)$, so this can be done in polynomial time. 

However, even a non-monotone matching under collision free hash functions is not enough for our purpose. The reason is that in the matching, we are trying to match every well divided block of $x$ to every possible block of $y$ (not necessarily the blocks obtained by dividing $y$ evenly into disjoint blocks), because we have edit errors here. If we just do this in the naive way, then the matched blocks of $y$ can be \emph{overlapping}. Using these overlapping blocks of $y$ to fill the blocks of $\tilde{x}$ is problematic, since even a single edit error or block transposition can create many new (overlapping) blocks in $y$ (which can be as large as the length of the block in each level). These new blocks are all possible to be matched, and then we won't be able to maintain an upper bound of $O(k)$ on the different blocks between $x$ and $\tilde{x}$.

To solve this, we need to insist on computing a maximum \emph{non-overlapping}, non-monotone matching.
\begin{definition}[Non-overlapping (non-monotone) matching]

Given $n, n', p, q \in \mathbb{N}, p\leq n, p\leq n'$, a function $h:\{0,1\}^{p} \rightarrow \{0,1\}^{q}$ and two strings $x\in \{0,1\}^{n}, y\in \{0,1\}^{n'}$, a (non-overlapping) matching between $x$ and $y$ under $h$ is a sequence of matches (pairs of indices) $w = ((i_1, j_1), \ldots (i_{|w|}, j_{|w|}))$ s.t.
\begin{itemize}
\item for every $k\in [|w|]$,
\begin{itemize}
\item $i_k = 1+ p l_k \in [n]$ for some $l_k$, i.e., each $i_k$ is the starting index of some block of $x$, when $x$ is divided evenly into disjoint blocks of length $p$,
\item $j_{k} \in [n']$,
\item $ h(x[i_k, {i_k} + p)) = h(y[j_k ,{j_k} + p))$.
\end{itemize}

\item $i_1, \ldots, i_{|w|}$ are distinct.

\item Intervals $ [j_k, j_k+ p ), k\in [|w|]  $, are disjoint.

\end{itemize}
\end{definition}

Under this definition, we can indeed show a similar upper bound on the number of different blocks between $x$ and $\tilde{x}$ in each level, if Bob finds the maximum non-overlapping matching. However, another technical difficulty arises: how to compute a maximum non-overlapping, non-monotone matching efficiently. This is unclear since the standard algorithm to compute a maximum matching only gives a possibly overlapping matching, while the dynamic programming approach in \cite{CJLW18} only works for a monotone matching. 

Computing the maximum non-overlapping, non-monotone matching turns out to be a hard task, and we were not able to find an efficient algorithm that accomplishes this exactly.\ Instead, we consider an algorithm that \emph{approximates} the maximum non-overlapping, non-monotone matching. However, this raises several other issues. The first issue is how to maintain the invariance that in each level $x$ and $\tilde{x}$ only differ in a small number of blocks. For example, consider level $i$ and assuming $x$ is partitioned into $l_i$ blocks, then we would like Bob to obtain a matching of size at least $l_i - O(k + t/\log n)$ (recall $t$ is the total number of bits inserted or deleted). Thus if $k$ and $t$ are small then even a $0.99$ approximation is still far from achieving our goal.

To get around this, we modify the protocol so that in each level Bob only computes a matching for the blocks that are unmatched in the previous level or detected to be incorrectly matched in this level (the detection can be done by comparing the hash values of the block and its matched block). If the number of such blocks can be bounded by some $O(k + t/\log n )$, then we only need a constant factor approximation. To keep the invariance in each level, note that the approximation factor should be larger than $1/2$ since each unmatched or incorrectly matched block will become two blocks in the next level.

Unfortunately, directly achieving such an approximation still seems hard. Thus we further relax the problem to allow some slight overlaps in the matching, i.e., we require that each bit of Bob's string $y$ appears in at most $d$ matched pairs in each level for some small number $d$ (e.g., a constant or $\log n$). We call this a \emph{degree d overlapping} matching (note that a non-overlapping matching is simply a degree 1 overlapping matching). Although this may cause extra errors in the matching, we show that the number of incorrectly matched pairs can be bounded by $O((k  + t/\log n) i )$ (instead of $O(k + t/\log n )$) in level $i$.

To achieve this, we first give a $1/3$ approximation algorithm for the maximum non-monotone, non-overlapping matching. Then we give another algorithm that achieves matching size at least $2/3$ of the maximum non-monotone, non-overlapping matching, while this matching obtained is a degree $3$ overlapping matching. For simplicity we also refer to this as a $2/3$ approximation algorithm.

The $1/3$ approximation is obtained by a greedy algorithm, which starts with an empty matching $w$ and visits $x$'s blocks one by one and tries to match it with a substring in $y$ (according to the hash function and hash values), such that the substring does not overlap with any substring in $y$ that is already matched. If such a matched pair is found then it is added to $w$. The algorithm keeps running until it cannot add any more matched pair. 

To see this indeed gives a $1/3$ approximation, assume the maximum non-monotone, non-overlapping matching is $w^*$.
Each time the algorithm adds a matched pair to $w$, at most $3$ matched pairs in $w^*$ will be excluded from being added to $w$ since they either have overlaps with $y$'s substring in the added pair or correspond to the same block of $x$. As a result, when $|w| < 1/3|w^*| $, there always exist some matched pairs in $w^*$ that can be added to $w$. Thus, at the end of the algorithm, $|w|\geq 1/3|w^*|$. 

Next we show a $2/3$ approximation algorithm that gives a degree 3 overlapping matching. The idea is to run the greedy algorithm for $3$ times, where each time the algorithm is applied to unmatched blocks of $x$ and the entire string $y$. To see the approximation factor, again let $w^*$ be the optimal non-monotone, non-overlapping matching.
After the first time, the matching $w$ has size at least $1/3 |w^*| $. So $|w^*|$ will have at least $|w^*| - |w|$ matched pairses for unmatched blocks in $x$. Therefore after the second time, the size of the matching is at least $ |w| + 1/3 (|w^*| - |w|) \geq  5/9 |w^*|$. Similarly, after the third time, the matching will have size at least $2/3 |w^*|$. As the greedy algorithm is applied three times, each bit of $y$ can appear in at most $3$ matched pairses in $w$.

We now bound the number of incorrectly matched and unmatched blocks in each level. First we claim that each non-monotone non-overlapping matching has at most $O(k + t/\log n)$ incorrectly matched blocks.

This is because by our definition of collision free hash function, if a pair is incorrectly matched then the substring of $y$ must contain some edit operation applied to $x$, since otherwise the pair will definitely have different hash values if they are different. Thus we only need to count how many non-overlapping new substrings in $y$ (i.e. those not equal to any substring of $x$) one can get after $(k, t)$ block edit errors. One insertion or deletion of $t_1$ bits will create at most $ O(1) + O(t/\log n) $ new substrings since the block size is always at least $\log n$. One block transposition will create at most $O(1)$ non-overlapping substrings in $y$ that are not equal to any substring of $x$. So in total there are at most $O(k + t/\log n)$ new non-overlapping substrings in $y$. Similarly, it is easy to generalize this claim, and show that each degree $d$ overlapping matching has at most $O(d(k + t/\log n))$ incorrectly matched blocks.

Now to bound the number of incorrectly matched blocks in level i, notice that the matching we obtained in this level is a degree $3i$ overlapping matching, since in each level we compute a degree $3$ overlapping matching using the entire string $y$ and we combine them together. Thus there are at most $O((k+t/\log n) i)$ incorrectly matched blocks.


The number of unmatched blocks can also be upper bounded by $O((k+t/\log n)i)$ using induction. For the base case, the number of blocks in the first level of Bob is at most $l_1 = O(k+t/\log n)$ so the claim holds. Now assume in level $i-1$, the number of unmatched blocks is $c_1(i-1)(k+t/\log n)$, and the number of incorrectly matched blocks is at most $ c_2(i-1)(k+t/\log n)$, for some constants $c_1, c_2$. In level $i$, once Bob recovers all the correct hash values, he can detect some of the incorrectly matched blocks. Let the total number of detected blocks and unmatched blocks be $s$ with $s \leq   (c_1+c_2)(i-1)(k+t/\log n)$. In our algorithm, these blocks are to be rematched in level $i$, and following our previous argument at least $s - c_3(k + t/\log n)$ of them can be matched in the maximum non-overlapping matching for some constant $c_3$.
By our $2/3$ approximation algorithm, the actual matching $w_i$ we get has size at least $ 2/3(s-  c_3(k + t/\log n))$.  Hence the number of unmatched blocks after this is at most $ s - |w_i| \leq 1/3s + 2/3c_3(k+t/\log n) $. We can set $c_1$ to be large enough s.t. this number is still upper bounded by $c_1i(k+t/\log n)$.

As we have bounded the number of incorrectly matched blocks and unmatched blocks by $O(i(k+t/\log n))$ in level $i$, at the beginning of level $i+1$, Alice can send the redundancy of the hash values of her blocks using a code that corrects $O(i(k+t/\log n))$ errors. This allows Bob to recover all the hash values correctly, and the size of the redundancy is $O(i(k+t/\log n)\log n)$ since the hash function outputs $O(\log n)$ bits. We start the protocol with a block size of $O(\frac{n}{k\log n + t})$ and thus the protocol takes $L=O(\log \frac{n}{k\log n + t})$ levels.\ A straightforward computation gives that the sketch size of our protocol is $O( (k\log n +t) \log^2  \frac{n}{k\log n + t})$.

\paragraph{Error Correcting Codes.}
We now describe how to construct an error correcting code 
from a document exchange protocol for block edit errors. Similar to the construction in \cite{CJLW18}, our starting point is to first encode the sketch of the document exchange protocol using the asymptotically good code by Schulman and Zuckerman \cite{796406}, which can resist edit errors and block transpositions.
Then we concatenate the message with the encoding of the sketch.
When decoding, we first decode the sketch, then apply the document exchange protocol on Bob's side to recover the message using the sketch.

However, here we have an additional issue with this approach:  a block transposition may move some part of the encoding of the sketch to somewhere in the middle of the message, or vice versa. In this case, we won't be able to tell which part of the received string is the encoding of the sketch, and which part of the received string is the original message.

To solve this issue, we use a fixed string $\mathsf{buf} = 0^{\ell_\mathsf{buf}} \circ 1$ as a buffer to mark the encoding of the sketch, for some $\ell_\mathsf{buf}=O(\log n)$. More specifically, we evenly divide the encoding of the sketch into small blocks of length $\ell_\mathsf{buf}$, and insert $\mathsf{buf}$ before every block. Note that this only increases the length of the encoding of the sketch by a constant factor. The reason we use such a small block length is that, even if the adversary can forge or destroy some buffers, the total number of bits inserted or deleted caused by this is still small. In fact, we can bound this by $O(k)$ block insertions/deletions with at most $O(k \log n)$ bits inserted/deleted, for which both the sketch and the encoding of the sketch can handle. 
When decoding, we first recognize all the $\mathsf{buf}$'s. 
Then we take the $\ell_\mathsf{buf}$ bits after each $\mathsf{buf}$ to form the decoding of the sketch, and take the remaining bits as the message.

Unfortunately, this approach introduces two additional problems here.\ The first problem is that the original message may contain $\mathsf{buf}$ as a substring. If this happens then in the decoding procedure again we will be taking part of the message to be in the encoding of the sketch.\ The second problem is that the small blocks of the encoding of the sketch may also contain $\mathsf{buf}$. In this case we will be deleting information from the encoding of the sketch, which causes too many edit errors.

To address the first problem, we turn the original message into a \emph{pseudorandom} string by computing the XOR of the message with the output of a pseudorandom generator. Using a $\frac{1}{\poly n}$-almost $ O(\log n)$-wise independence generator with seed length $O(\log n)$, we can ensure that with high probability $\mathsf{buf}$ does not appear as a substring in the XOR. We can then exhaustively search for a fixed seed that satisfies this requirement, and append the seed to the sketch of the document exchange protocol.
 
To address the second problem, we choose the length of the buffer to be longer than the length of each block in the encoding of the sketch, so that $\mathsf{buf}$ doesn't appear as a substring in any block. This is exactly why we choose the length of the buffer to be $\ell_\mathsf{buf}+1$ while we choose the length of each block to be $\ell_\mathsf{buf}$. 

If we directly apply our document exchange protocol to the construction above, we obtain an error correcting code with $O( (k  \log n +t) \log^2 \frac{n}{k\log n + t} )$ redundant bits.
However, by combining the ideas in \cite{CJLW18} and \cite{CormodeM07}, we can achieve redundancy size  $O(k \log n \log \log \log n + t)$, which is better for small $k$  and $t$.

We first briefly describe the construction of the explicit binary code for $k$ edit errors with redundancy $O(k \log n)$ in \cite{CJLW18}. The construction in \cite{CJLW18} starts by observing that a uniform random string satisfies some nice properties.
For example, with high probability, any two substrings of length some $B = O(\log n)$ are distinct.
\cite{CJLW18} calls this property \emph{$B$-distinct}. The construction in \cite{CJLW18} goes by first transforming the message into a pseudorandom string, which is obtained by computing the XOR of the message with an appropriately designed pseudorandom generator. The construction then designs a document exchange protocol for a pseudorandom string with better parameters, and encodes the sketch of the document exchange protocol to give an error correcting code.

The document exchange protocol for a pseudorandom string in \cite{CJLW18} actually consisted of two stages:
in stage I, Alice uses a fixed pattern $p$ to divide her string into blocks of size $\poly(\log n)$. Next, Alice sends a sketch of size $O(k \log n)$ to help Bob recover the partition of her string.
To achieve this, Bob also divides his string into blocks in the same way that Alice does, by using the same pattern $p$.
Alice creates a vector $V$ where each entry of $V$ is indexed by a binary string of length $B$. Specifically, Alice looks at each block in her partition, and stores the $B$-prefix (the prefix of length $B$) of its next block and the length of the current block in the entry of $V$ indexed by the $B$-prefix of the current block.
This ensures each entry of the vector $V$ has only $O(\log n)$ bits. Bob then creates a vector $V'$ in the same way.
\cite{CJLW18} shows that $V$ and $V'$ differ in  at most $O(k)$ entries, thus Alice can send a sketch of size $O(k \log n)$ using the Reed-Solomon Code to help Bob recover $V$ from $V'$. Once this is done, Bob can use $V$ to obtain a guess of Alice's string, which we call $\tilde{x}$, by using his blocks to fill the blocks of $\tilde{x}$, if they have the same $B$-prefix.

Stage II of the construction in \cite{CJLW18} consists of a constant number of levels. In each level, both parties divide each of their blocks evenly into $O(\log^{0.4} n)$ smaller blocks, and Alice generates a sequence of special hash functions called
\emph{$\epsilon$-synchronization hash functions} 
with respect to her string. The nice properties of these hash functions guarantee that in each level Alice can send $O(k \log n)$ bits to Bob, so that Bob can recover all but $O(k)$ blocks of Alice's string. This stage ends in $O(\log_{\log^{0.4} n}( \poly(\log n)) ) = O(1)$ levels when the final block size reduces to $O(\log n)$, at which point Alice can simply send a sketch of size $O(k \log n)$ for Bob to recover her string $x$.

Checking these two stages, it turns out that stage I can be modified to work for block edit errors as well.\ Intuitively, this is because it is still true that such errors won't cause too many different blocks between $V$ and $V'$. On the other hand, stage II becomes problematic, since the use of $\epsilon$-synchronization hash functions crucially relies on the monotone property of standard edit errors. Allowing block transpositions ruins this property, and it is not clear how to give suitable $\epsilon$-synchronization hash functions to work in this case. 

To solve the issue, in stage II, we can apply the 
deterministic document exchange protocol we developed earlier. This implies an error correcting code of redundancy $O(k \log n \log \log n + t)$. However, we show that we can further improve the redundancy to $O(k \log n \log \log \log n + t)$ by using the string parsing idea in \cite{CormodeM07} to improve the partition in Stage I.

Given an input string, string parsing builds a tree where each leaf corresponds to a symbol of the input string, and each non-leaf node corresponds to a substring of the input string.\ Each node of the tree is associated with a label, which is the hash value of its corresponding substring under some hash function.
The structure of the tree only depends locally on the input string, e.g., an edit error on the input string only affects $O(\log n \log^* n)$ nodes of the tree.  

More specifically, string parsing builds the tree bottom-up from one level to another. The labels in the bottom level are obtained by directly applying the hash function to the symbols. Then, the algorithm builds one level of the tree as follows. The labels of the nodes in the previous level form a string of alphabet size $\poly(n)$. The algorithm first finds all repetitive substrings in this string (we say a substring is repetitive, if it's of the form $a^{l}$, for some $l \ge 2$). 
The remaining substrings satisfy the property that any two adjacent symbols are different, and we say such substrings are \emph{non-repetitive}.
\cite{CormodeM07} then applies an algorithm called alphabet reduction to the non-repetitive substrings, 
and obtains a new non-repetitive string for each substring, where the new alphabet is $\{0, 1, 2\}$. 
In particular, the alphabet reduction works in $\log^*n$ steps, where in each step the alphabet size is reduced from the current size $a$ to $\log a$. The reduction keeps doing this until the alphabet size is a constant. Now for all the new strings obtained, the algorithm finds local maximums and local minimums that are not adjacent to any local maximum as \emph{landmarks}, 
and partition the strings into small blocks of length $2$ or $3$ by using the \emph{landmarks} as the boundaries of the blocks. Finally, for each block, the algorithm builds a new node in this level, whose children are the nodes in the block and whose label is the hash value of the subtree.

Here, in our construction of error correcting codes, we use the idea of string parsing in stage I to partition Alice's string $x$ into small blocks. Our goal is to partition the string into blocks of length roughly $\Theta(\log n \cdot \poly(\log \log n))$, while an edit error on the string can only affect a small number of contiguous blocks. In this way, stage II only takes $O(\log \log \log n)$ levels and the sketch size in stage II is $O(k \log n \log \log \log n + t)$.
Note that each node in the parsing tree depends only locally on the input string. We use this property to bound the number of errors among the small blocks obtained in stage I.

More specifically, instead of building a full parsing tree, we only build a partial parsing tree. 
That is, in each level of the parsing tree, we check the number of leaves under each node. 
If a node has more than $T$ leaves for some threshold $T$, we mark the node as `finish'. 
We say a node is a `frozen' node, if all its adjacent nodes are marked as `finish'.
For each `finish' node, we build a new node in the next level, with the only child being this `finish' node.
We then use these `finish' nodes to divide the string into several substrings, and apply the alphabet reduction to the substrings, choose the landmarks, and partition each substring into small blocks according to the landmarks. 
Then for each small block, we build a new node in the next level, and set the children of the new node to be all nodes in the same block.
We keep doing this until each node is either marked as `finish' or  `frozen'.
Finally, we merge each `frozen' node to the `finish' node on its left or right. 
At the end of this process, we obtain several trees, and we partition the string $x$ into small blocks, where each block consists of all the leaves in a tree.
To further improve the parameters and remove the $O(\log^* n)$ factor, we only do two levels of alphabet reduction in each level of the tree.
However, this will result in an alphabet size of $O(\log \log n)$, which means the tree may have $O(\log \log n)$ children.
Hence, the block size may be as large as $O(T \log \log n)$.
Note that each block depends on $O(\log T)$ blocks on its left and right, since in each level of the partial parsing tree, each node depends locally on a constant number of adjacent nodes.
We prove that, if $y$ is obtained from $x$ by $(k, t)$ block edit errors, then the partition of $y$ can be obtained from the partition of $x$ by $(k, O(t / T + k\log T))$ block edit errors over a larger alphabet. If we set $T = \Theta(\log n)$, then in stage I Alice still needs to send a sketch of $O(k \log n \log \log n + t)$ bits. To further reduce the redundancy, we apply the partial parsing tree method again with another threshold $T' = \Theta(\log \log n)$. Now the errors are reduced to $(k, O(\frac{t}{TT'} + k \log T'))$ block edit errors over a larger alphabet, and the block size increases by a $O(T' \log \log n \log\log\log n)$ factor, and becomes $O(\log n (\log \log n)^2)$.

We show that now in stage I, Alice can send a sketch with $O(\frac{t}{TT'} + k\log T')\cdot O(\log n) = O(k \log n\log\log\log n + t)$ bits; and in stage II, Alice can send a sketch with $O(k \log n \log \log \log n + t)$ bits. So the total sketch size is still $O(k \log n \log \log \log n + t)$. By using the asymptotically encoding of Schulman and Zuckerman \cite{796406} and the buffer $\mathsf{buf}$, the final redundancy of the error correcting code is also $O(k \log n \log \log \log n + t)$.

\subsection{Discussions and open problems.} In this paper we study document exchange protocols and error correcting codes for block edit errors. We give the first explicit, deterministic document exchange protocol in this case, and significantly improved error correcting codes. In particular, for both document exchange and error correcting codes, our sketch size or redundant information is close to optimal.

The obvious open problem is to try to achieve truly optimal constructions, where an interesting intermediate step is to try to adapt the $\eps$-self matching hash functions and the $\eps$-synchronization hash functions in \cite{CJLW18} to  handle block transpositions. More broadly, it would be interesting to study document exchange protocols and error correcting codes for other more general errors.

\paragraph{Organization of the paper.} The rest of the paper is organized as follows. In Section \ref{BTprotocol} we give a deterministic document exchange protocol for block insertions/deletions and block transpositions. In Section \ref{randBTProtocol} we give a document exchange protocol for block insertions/deletions and block transpositions for uniformly random strings. Then in Section \ref{sec:BTcode} we give constructions of codes correcting block insertions/deletions and block transpositions. Finally we give tight bounds of the sketch size or redundancy in Appendix \ref{appendix}.

\section{Preliminaries} \label{sec:prelim}
\subsection{Notations}

$[n]=\{1,2,\dots,n\}$. Let $\Sigma$ be an alphabet (which can also be a set of strings) and $x\in \Sigma^*$ be a string over alphabet $\Sigma$. $x\in\Sigma^n$ is a string over alphabet $\Sigma$ of length $n$. $|x|$ denotes the length of the string $x$.
Let $x[i,j]$ denote the substring of $x$ from the $i$-th symbol to the $j$-th symbol (Both ends included). Similarly $x[i, j)$ denotes the substring of $x$ from the $i$-th symbol to the $j$-th symbol (not included).
We use $x[i]$  to denote the $i$-th symbol of $x$.
The concatenation of $x$ and $x'$ is $x\circ x'$.
The $B$-prefix of $x$ is the first $B$ symbols of $x$.
$x^N$ is the concatenation of $N$ copies of $x$.
For two sets $A$ and $B$, let $A \Delta B$ denotes the symmetric difference of $A$ and $B$.

Usually we use $U_n$ to denote the uniform distribution over $\{0,1\}^n$.

\subsection{Edit errors}

Consider two strings $x, x'\in \Sigma^*$.

\begin{definition}[Edit distance] The edit distance $ED(x,x')$ is the minimum number of edit operations (insertions and deletions) transforming $x$ to $x'$.
\end{definition}

A subsequence of a string $x$ is a string $x'$ s.t. $x'_{1} = x_{j_1}$, $x'_{2} = x_{j_2}$, $\ldots$,  $x'_{l} = x_{j_l}$, $l = |x'|$, $1\leq j_1 < j_2 < \cdots < j_l \leq |x|$.

\begin{definition}[Longest Common Subsequence] The longest common subsequence between $x$ and $x'$ is the longest subsequence which is the subsequence of both $x$ and $x'$, its length denoted by $LCS(x,x')$.
\end{definition}

We have $ED(x,x') = |x|+|x'|-2LCS(x,x')$.

\begin{definition}[Block-Transposition]
Given a string $x \in \Sigma^n$, the $(i, j, l)$-block-transposition operation for $1 \leq i\leq i + l \leq n$ and $j \in  \{0, \cdots , i-1, i + l , \cdots , n\}$ is defined as an operation which removes $x[i, i+l)$ and inserts  $x[i, i+l)$ right after $x[j]$ in the original string $x$ (if $j=0$, then inserts  $x[i, i+l)$ to the beginning of $x$).

\end{definition}

\begin{definition}[Block edit errors]

A block-insertion/deletion (or burst-insertion/deletion) of $b$ symbols to a string $x$ is defined to be inserting/deleting a block of consecutive $b$ symbols to $x$. When we do not need to specify the number of symbols inserted or deleted, we simply say a block-insertion/deletion.

We define $(k, t)$-block-insertions/deletions (to $x$) to be a sequence of $k$ block-insertions/deletions, where  the total number of symbols inserted/deleted is at most $t$. Similarly, we define $(k, t)$-block edit errors to be a sequence of  $k$ block-insertions, deletions, and transpositions, where the total number of symbols inserted/deleted is at most $t$.

\end{definition}

\subsection{Almost k-wise independence}
\begin{definition}[$\eps$-almost $\kappa$-wise independence in max norm \cite{alon1992simple}]
A series of random variables $X_1, \ldots, X_n \in \mathbb{F}$ are $\eps$-almost $\kappa$-wise independent in Maximum norm if $\forall x \in \mathbb{F}^{\kappa}$, \[\forall i_1, i_2, \ldots, i_{\kappa} \in [n], |\Pr[ (X_{i_1}, X_{i_2}, \ldots, X_{i_{\kappa}}) = x] - 2^{-\kappa} | \leq \eps.\]

A function $g: \{0,1\}^{d} \rightarrow \{0,1\}^{n}$ is an $\eps$-almost $\kappa$-wise independence generator in Maximum norm if $Y = g(U_d) =  Y[1] \circ \cdots \circ Y[n]$ are $\eps$-almost $\kappa$-wise independent in Maximum norm.
\end{definition}
For simplicity,  we neglect the term in Maximum norm when speaking of $\eps$-almost $\kappa$-wise independence, unless specified.

\begin{theorem}[$\eps$-almost $\kappa$-wise independence generator \cite{alon1992simple}]
\label{almostkwiseg}
For every $n, \kappa \in \mathbb{N}$, $\eps > 0$, there exists an explicit $\eps$-almost $\kappa$-wise independence generator $g: \{0,1\}^{d} \rightarrow \{0,1\}^n$, where $d = O(\log \frac{\kappa \log n }{\eps})$.

The construction is highly explicit in the sense that, $\forall i\in [n]$, the $i$-th output bit can be computed in time $\poly(\kappa, \log n, \frac{1}{\eps})$ given the seed and $i$.

\end{theorem}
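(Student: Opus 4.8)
The plan is to realize this generator via \emph{$\eps$-biased sample spaces for bounded-weight linear tests}. Call a generator $g:\{0,1\}^d\to\{0,1\}^n$ \emph{$\eps$-biased for weight-$\le\kappa$ tests} if for every nonempty $T\subseteq[n]$ with $|T|\le\kappa$ we have $\bigl|\mathbb{E}_{s}\bigl[(-1)^{\bigoplus_{i\in T}g(s)_i}\bigr]\bigr|\le\eps$, where $s$ is uniform on $\{0,1\}^d$. The first step is the standard Fourier (inclusion--exclusion) expansion: for any indices $i_1<\dots<i_\kappa$ and any $a\in\{0,1\}^\kappa$,
\[
\Pr\bigl[(g(U_d)_{i_1},\dots,g(U_d)_{i_\kappa})=a\bigr]-2^{-\kappa}
=2^{-\kappa}\sum_{\emptyset\neq S\subseteq[\kappa]}(-1)^{\sum_{j\in S}a_j}\,
\mathbb{E}_{s}\Bigl[(-1)^{\bigoplus_{j\in S}g(s)_{i_j}}\Bigr].
\]
Each of the $2^\kappa-1$ summands on the right is a weight-$(\le\kappa)$ linear test, hence has bias at most $\eps$, so the left-hand side has absolute value $<\eps$. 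Thus any generator that is $\eps$-biased for weight-$\le\kappa$ tests is automatically an $\eps$-almost $\kappa$-wise independence generator in the max norm, and it suffices to build such a $g$ with seed length $d=O(\log\frac{\kappa\log n}{\eps})$.

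For the construction I would compose a standard small-bias generator with the parity-check matrix of a good code. Let $m=\lceil\log_2(n+1)\rceil$ and let $H\in\mathbb{F}_2^{\,r\times n}$ be the first $n$ columns of the parity-check matrix of a binary BCH code of length $2^m-1\ge n$ and designed distance $\kappa+1$; then $r=O(\kappa\log n)$ and \emph{any $\le\kappa$ columns of $H$ are linearly independent over $\mathbb{F}_2$}. Let $G:\{0,1\}^d\to\{0,1\}^r$ be an explicit ordinary $\eps$-biased generator (e.g.\ Reed--Solomon concatenated with Hadamard), which has $d=2\log(r/\eps)+O(1)=O(\log\frac{\kappa\log n}{\eps})$. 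Define $g(s)_i:=\langle H_i,\,G(s)\rangle$, the $\mathbb{F}_2$-inner product of the $i$-th column $H_i$ of $H$ with $G(s)$. For nonempty $T\subseteq[n]$ with $|T|\le\kappa$ we have $\bigoplus_{i\in T}g(s)_i=\langle v_T,\,G(s)\rangle$ with $v_T=\sum_{i\in T}H_i\neq 0$ (the columns indexed by $T$ are linearly independent, so no nonempty subset of them sums to zero), and since $G$ is $\eps$-biased and $v_T\neq 0$, $\bigl|\mathbb{E}_s[(-1)^{\langle v_T,G(s)\rangle}]\bigr|\le\eps$. Hence $g$ is $\eps$-biased for weight-$\le\kappa$ tests, and by the first step it is the desired generator.

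For the ``highly explicit'' clause: the $i$-th column of a BCH parity-check matrix is, up to fixed linear maps, $(\alpha^{i},\alpha^{3i},\dots,\alpha^{(\kappa-1)i})$ for a fixed primitive element $\alpha\in\mathbb{F}_{2^m}$, computable by fast exponentiation in time $\poly(\kappa,\log n)$; each coordinate of $G(s)$ is computable in time $\poly(\log r,\log(1/\eps))$; and the inner product costs $O(r)$ further time. So $g(s)_i$ is computable in time $\poly(\kappa,\log n,1/\eps)$ from $s$ and $i$. The one point worth flagging -- and the reason the seed can be this short -- is that the naive route, an honest $\kappa$-wise independent generator ($\eps=0$), already forces $d=\Omega(\kappa\log n)$, exactly the output length of $G$; feeding that construction a \emph{small-bias} seed rather than uniform bits shrinks the seed to $O(\log(\kappa\log n/\eps))$ at the cost of bias $\eps$, and this is legitimate precisely because each parity test $\bigoplus_{i\in T}g(s)_i$ is an $\mathbb{F}_2$-linear test on $G(s)$ that the code's distance forces to be nontrivial. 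Alternatively one may simply invoke this as a theorem of Alon et al.\ \cite{alon1992simple}.
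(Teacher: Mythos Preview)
The paper does not prove this theorem at all: it is stated as a citation of Alon, Goldreich, H\aa stad, and Peralta and used as a black box. So there is no ``paper's own proof'' to compare against.

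Your proposal is correct and is, in fact, essentially the construction from the cited reference. The Fourier/inclusion--exclusion reduction from $\eps$-bias on weight-$\le\kappa$ parities to $\eps$-almost $\kappa$-wise independence in max norm is standard and your bound $2^{-\kappa}(2^\kappa-1)\eps<\eps$ is right. The composition ``BCH parity-check matrix $\circ$ small-bias generator'' is exactly the AGHP approach: $H$ gives a linear map $\{0,1\}^r\to\{0,1\}^n$ with $r=O(\kappa\log n)$ such that every weight-$\le\kappa$ parity on the output pulls back to a \emph{nonzero} parity on the $r$ input bits, and then an $\eps$-biased generator on $r$ bits with seed $O(\log(r/\eps))$ finishes the job. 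Your explicitness argument is also fine. One cosmetic point: you write that linear independence of the columns indexed by $T$ means ``no nonempty subset of them sums to zero''; what you actually need (and what you use) is only that the full sum $\sum_{i\in T}H_i$ is nonzero, which follows directly from linear independence over $\mathbb{F}_2$. In short, your write-up is a faithful expansion of what the paper merely cites.
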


\subsection{Pseudorandom Generator (PRG)}
\begin{definition}[PRG]
A function $g:\{0,1\}^r\rightarrow \{0,1\}^n$ is a pseudorandom generator (PRG) for a function $f:\{0,1\}^n\rightarrow \{0,1\}$ with error $\eps$ if
\[\left|\Pr[f(U_n)=1]-\Pr[f(g(U_r))=1] \right|\leq \eps\]
where
 $r$ is the seed length of $g$.

Usually this is also called that $g$ $\eps$-fools function $f$. Similarly, if $g$ fools every function in a class $\mathcal{F}$ then we say $g$ $\eps$-fools $\mathcal{F}$ .
\end{definition}

\subsection{Error correcting codes (ECC)}
An ECC $C$ for hamming errors is called an $(n, m ,d)$-code if it has code length $n$, message length $m$, and distance $d$. The rate of the code is defined as $\frac{m}{n}$.

We  utilize the following algebraic geometry codes in our constructions.
\begin{theorem}[\cite{hoholdt1998algebraic}]
\label{agcode}
For every $n, m \in \mathbb{N}, m \leq n, d = n-m-O(1), q = \poly(\frac{n}{d})$, there is an explicit $(n, m, d)$-ECC over $\mathbb{F}_q$ with polynomial-time unique decoding.

Moreover, $\forall n, m \in \mathbb{N}$, for every message $x\in \mathbb{F}_q^{m}$, the codeword is $  x\circ z$ with redundancy $z \in \mathbb{F}_q^{n-m}$.

\end{theorem}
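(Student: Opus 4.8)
The plan is to exhibit the promised code as an evaluation code in the Reed--Solomon / algebraic--geometry family. First observe that the hypotheses are restrictive: since $d = n - m - O(1)$ the Singleton defect $n - m + 1 - d$ is an absolute constant, so the code must be (almost) MDS, and a $q$-ary code with $O(1)$ Singleton defect has length $O(q)$; hence in every admissible instance $q = \Omega(n)$. So there are really two regimes.

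In the main regime $n \le q$ I would just use a Reed--Solomon code: fix distinct $a_1,\dots,a_n \in \mathbb{F}_q$ and encode a message-polynomial $f$ of degree $< m$ by $(f(a_1),\dots,f(a_n))$. This is $(n,m,n-m+1)$, hence $d = n-m-O(1)$; it is explicit, and Berlekamp--Welch (or Gao's algorithm) gives polynomial-time unique decoding from $\lfloor(n-m)/2\rfloor$ errors. In the boundary regime $q < n \le q + O(\sqrt q)$ --- which is exactly where $q = \poly(n/d)$ forces $n/d$ close to its smallest admissible value --- plain Reed--Solomon runs out of evaluation points, and I would instead evaluate a Riemann--Roch space on an explicit curve $\mathcal{X}/\mathbb{F}_q$ of genus $g = O(1)$ with at least $n+1$ rational points $P_1,\dots,P_n,P_\infty$ (e.g.\ an explicit elliptic/hyperelliptic curve attaining close to the Hasse--Weil bound, or an explicit member of a Garcia--Stichtenoth-type tower). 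Put $G = (m+g-1)P_\infty$ and $\mathrm{ev}\colon L(G)\to \mathbb{F}_q^{\,n}$, $f\mapsto(f(P_1),\dots,f(P_n))$. Riemann--Roch gives $\dim_{\mathbb{F}_q} L(G) = \deg G - g + 1 = m$ (using $\deg G \ge 2g-1$, valid once $m$ is not tiny; the finitely many tiny cases are handled by hand), while a nonzero $f\in L(G)$ has at most $\deg G = m+g-1 < n$ zeros among the $P_i$, so $\mathrm{ev}$ is injective and every nonzero codeword has weight $\ge n-\deg G = n-m-O(1)$. A basis of $L(G)$ and the points $P_i$ are computable in $\poly(n,\log q)$ time (Riemann--Roch via Hess's algorithm, or directly for $g\le 1$), so the code is explicit.

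For the ``moreover'' clause (systematic form $x\mapsto x\circ z$): the code has dimension $m$, hence an information set of size $m$; reordering the evaluation points so that this set is the first $m$ coordinates and expressing the message in the dual basis of these $m$ coordinate functionals puts the generator matrix in the shape $[\,I_m \mid A\,]$, so $x\in\mathbb{F}_q^m$ is sent to $x\circ(xA)$ with $z=xA\in\mathbb{F}_q^{\,n-m}$. For polynomial-time unique decoding in the AG case I would invoke the standard basic decoding algorithm for one-point AG codes (the error-locator / error-correcting-pair decoder), which corrects $\lfloor(d-1-g)/2\rfloor = \lfloor(d-1)/2\rfloor - O(1)$ errors; decreasing $m$ by an additive constant (absorbed into the $O(1)$ in $d=n-m-O(1)$) buys back the gap and gives unique decoding up to $\lfloor(d-1)/2\rfloor$.

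The step I expect to require the most care is the boundary bookkeeping of the second regime: checking that for every prime power $q=\poly(n/d)$ consistent with $d=n-m-O(1)$ there is indeed an explicit curve over $\mathbb{F}_q$ with $\ge n+1$ rational points whose genus is absorbed into the constant Singleton defect --- i.e.\ quantifying how close to the Hasse--Weil / Drinfeld--Vl\u{a}du\c{t} bound one must get and delineating precisely which sub-regimes of $n/d$ already fall to ordinary Reed--Solomon. The remaining ingredients (Riemann--Roch, injectivity of $\mathrm{ev}$, the systematization, and the decoding radius analysis) are routine and are carried out in full in \cite{hoholdt1998algebraic}.
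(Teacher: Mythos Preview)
The paper does not prove this statement at all: Theorem~\ref{agcode} is quoted as a black-box result from \cite{hoholdt1998algebraic} and is used throughout only via its conclusion (existence of an explicit systematic near-MDS code with efficient decoding). There is nothing to compare your argument against.

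That said, your sketch is essentially the standard derivation and is correct in outline. Two remarks. First, in every use the paper makes of this theorem the alphabet is $\{0,1\}^{\Theta(\log n)}$, i.e.\ $q=\poly(n)\ge n$, so plain Reed--Solomon already covers all the applications; the ``boundary regime'' you worry about never actually arises here, and the citation to AG codes is more of a convenience than a necessity. Second, your reading of the hypotheses is sharper than the paper probably intends: the statement ``$d=n-m-O(1)$, $q=\poly(n/d)$'' is written somewhat loosely, and taken literally it is not satisfiable when $d=\Theta(n)$ (as you implicitly note, constant Singleton defect forces $n=O(q)$, contradicting $q=\poly(n/d)=O(1)$). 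So the careful ``second regime'' bookkeeping you flag as the delicate step is partly an artifact of over-interpreting an informal statement; the intended content is simply that over a sufficiently large field one has explicit systematic codes meeting the Singleton bound up to an additive constant, with efficient unique decoding.
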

%
%
%
%
%
For an  ECC $C\subseteq \{0,1\}^n$ for edit errors, with message length $m$, we usually regard it as having an encoding mapping $Enc:\{0,1\}^m\rightarrow \{0,1\}^n$ and a decoding mapping $Dec:\{0,1\}^*\rightarrow \{0,1\}^m\cup\{Fail\}$.

We say an ECC for edit errors is explicit (or has an explicit construction) if
both encoding and decoding can be computed in polynomial time.

To construct ECCs in following sections, we  use an asymptotically good binary ECC for edit errors by Schulman and Zuckerman \cite{796406}.

\begin{theorem}[\cite{796406}]
\label{asympGoodECCforInsdel}
For every $n\in \mathbb{N}$, there is an explicit binary ECC with codeword length $n$, message length $m=\Omega(n)$,  which can correct up to $k_1=\Omega(n)$ edit errors and $k_2=\Omega(n/\log n)$ block-transpositions.

\end{theorem}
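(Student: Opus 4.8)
The plan is to prove this by the concatenated construction with a synchronization layer from \cite{796406}; since later sections only need a black-box asymptotically good code for this error model, I aim just for the stated parameters. Fix a block length $\ell = \Theta(\log n)$, set $q = 2^{\Theta(\ell)} = \poly(n)$, and let $N = \Theta(n/\log n)$. First I would take the outer code to be an explicit constant-rate code $C_{\mathrm{out}} : \mathbb{F}_q^{K} \to \mathbb{F}_q^{N}$ with relative distance $\Omega(1)$, efficient unique decoding from a constant fraction of errors \emph{and} erasures, and $K = \Theta(N)$ --- a Reed--Solomon code works since $q \ge N$ (or one can use the algebraic geometry code of Theorem~\ref{agcode}). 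Next I would build an explicit ``block code'' $B : \mathbb{F}_q \to \{0,1\}^{\ell}$ whose $q$ images simultaneously satisfy: (i) every $B(\sigma)$ begins with a fixed marker pattern $\pi$ of length $\Theta(\log n)$, say a run $0^{a}$ followed by a $1$; (ii) $\pi$ occurs in a concatenation $B(\sigma_1)\circ B(\sigma_2)\circ \cdots$ only at block starts (comma-freeness); (iii) outside of markers such a concatenation is ``dense'' --- bounded run lengths and roughly balanced in every length-$\ell$ window --- and the images $\{B(\sigma)\}$ are pairwise at edit distance $\Omega(\ell) = \Omega(\log n)$. Such a block code exists by a standard probabilistic/greedy argument (a random-ish string is dense, far from the others, and avoids $\pi$), and since $\ell = O(\log n)$ there are only $\poly(n)$ candidate strings and each condition is checkable in polynomial time, so $B$ is explicit (alternatively one can invoke $\eps$-almost $\kappa$-wise independence, Theorem~\ref{almostkwiseg}). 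The encoder is $\Enc(m) = B(C_{\mathrm{out}}(m)_1)\circ \cdots \circ B(C_{\mathrm{out}}(m)_N)$, of length $n = N\ell$ and message length $m = K\ell = \Theta(n)$, so the redundancy is $\Omega(n)$ as required.

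For decoding a received word $w$ obtained from $c = \Enc(m)$ by $e$ edit errors and $p$ block transpositions, I would locate all occurrences of $\pi$ in $w$, cut $w$ at these occurrences into putative blocks, edit-distance-decode each putative block to its nearest inner codeword (or declare an erasure if it is too far from all $q$ of them), and then run the outer error-and-erasure decoder. The point is to bound the number of wrong or erased outer symbols. Property (iii) is the crux: creating a spurious copy of $\pi$ in a dense region costs $\Omega(\log n)$ edit operations, and destroying a genuine marker likewise costs $\Omega(\log n)$, so $e$ edit errors damage at most $O(e/\log n)$ markers, while a block transposition removes one substring and re-inserts it, creating only $O(1)$ new seam adjacencies, so $p$ transpositions damage at most $O(p)$ further markers. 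Away from damaged markers the putative-block boundaries agree with the true ones, and by (iii) a putative block decodes incorrectly only if it absorbed $\Omega(\log n)$ edit operations (since the inner codewords are $\Omega(\log n)$ apart in edit distance) or it borders one of the $O(1)$ seams of a transposition; a damaged marker, since the true number of blocks is known, merges or splits only $O(1)$ consecutive blocks and can be charged as $O(1)$ outer erasures. Altogether the number of corrupted-or-erased outer symbols is $O(e/\log n) + O(p)$; choosing the hidden constants so that this stays below the error-and-erasure radius $\Omega(N)$ of $C_{\mathrm{out}}$ gives correct recovery whenever $e \le k_1 = \Omega(n)$ and $p \le k_2 = \Omega(n/\log n)$, and every step runs in $\poly(n)$ time.

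The main obstacle is the synchronization analysis that converts ``$w$ is within the $(e,p)$-error ball of $c$'' into ``the greedily parsed symbol sequence of $w$ is close to that of $c$ over $\mathbb{F}_q$, with $O(e/\log n)+O(p)$ errors, erasures, and outer-level indels''. One must make precise the charging scheme that assigns each edit operation to the $O(1)$ blocks it can plausibly affect so that no block is overcharged, argue that missing or spurious markers only ever merge or split a constant number of consecutive blocks (so the outer decoder faces only $O(1)$ outer indels per damaged marker, which the known block count lets us turn into erasures), and verify that the density property survives block transpositions away from their $O(1)$ seams. This is exactly the technical content of \cite{796406}; the one genuinely new observation is that a block transposition localizes to $O(1)$ seams and hence behaves like $O(1)$ ordinary edit bursts, which both explains why it is handled for free and why the transposition budget is $\Omega(n/\log n)$ --- a budget that is in any case essentially forced, since $O(n)$ redundancy can encode the $\Theta(\log n)$-bit descriptions of at most $O(n/\log n)$ transpositions.
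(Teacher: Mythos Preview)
The paper does not prove this theorem at all: it is stated in the preliminaries as a black-box citation of Schulman and Zuckerman \cite{796406}, with no accompanying argument, and is subsequently used only as an off-the-shelf asymptotically good code for the sketch part of the constructions in Section~\ref{sec:BTcode}. So there is nothing in the paper to compare your proposal against.

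That said, your sketch is a faithful outline of the Schulman--Zuckerman construction and analysis: concatenate a constant-rate outer code over $\mathbb{F}_q$ with a carefully chosen inner block code whose images share a marker prefix, are comma-free, are ``dense'' (bounded run lengths so that forging or destroying a marker costs $\Omega(\log n)$ edits), and are pairwise $\Omega(\log n)$ apart in edit distance; then parse by markers and inner-decode block by block. Your accounting that $e$ edits yield $O(e/\log n)$ bad outer symbols and that a transposition contributes only $O(1)$ seam errors is exactly the charging scheme in \cite{796406}, including the observation that transpositions are essentially free modulo the $O(1)$ boundary effects. The one place to be a bit careful is the claim that a damaged marker only merges or splits $O(1)$ consecutive blocks and can be absorbed as $O(1)$ outer erasures: since marker damage can cause outer-level insertions and deletions (not just substitutions), one either needs the outer code to tolerate a small number of symbol indels directly, or one converts these to erasures using the known block count; Schulman and Zuckerman handle this explicitly, and your parenthetical ``which the known block count lets us turn into erasures'' is pointing at the right fix but would need to be spelled out in a full proof.
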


\section{Deterministic document exchange protocol for block edit errors}
\label{BTprotocol}

\begin{definition}[Collision free hash functions]
\label{def:hash}
Given $n, p, q \in  \mathbb{N}, p \leq n$ and a string $x \in \{0,1\}^n$, we say a hash function $h:\{0,1\}^{p} \rightarrow \{0,1\}^{q}$ is collision free (for $x$), if for every $i,j\in [n - p+1]$, $h(x[i, i+p)) = h(x[j, j+p))$ if and only if $ x[i, i+p) = x[j, j+p) $.

\end{definition}

\begin{theorem}
\label{goodhash}
There exists an algorithm  which, on input $n, p, q \in \mathbb{N}, p \leq n, q = c_0\log n$ for large enough constant $c_0$,   $x\in\{0,1\}^n$, outputs a description of a hash function $h:\{0,1\}^{p} \rightarrow \{0,1\}^{q}$ that is collision free for $x$, in time $\poly(n)$, where the description length is $O(\log n)$.

Also there is an algorithm which, given   the description of  $h$ and any $ u\in \{0,1\}^p$, can output $h(u)$ in time $\poly(n)$.

\end{theorem}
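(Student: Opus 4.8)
The plan is to obtain $h$ from an $\eps$-almost $\kappa$-wise independence generator (Theorem \ref{almostkwiseg}) and then derandomize by exhaustive search over seeds. First I would set up the probabilistic construction: fix $\eps = 1/n^{c_1}$ for a suitable constant $c_1$ and $\kappa = O(\log n)$, and let $g : \bit^d \to \bit^{2^p \cdot q}$ be the generator from Theorem \ref{almostkwiseg}, which has seed length $d = O(\log(\kappa \log(2^p q)/\eps)) = O(\log n)$ (here I use that $p \le n$ and $q = c_0 \log n$, so $\log(2^p q) = O(n)$, whose logarithm is $O(\log n)$). Given a seed $s \in \bit^d$, interpret the output $g(s)$ as a table assigning to each string $u \in \bit^p$ a value $h_s(u) \in \bit^q$; this defines a candidate hash function $h_s : \bit^p \to \bit^q$. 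Crucially, evaluating $h_s(u)$ only requires reading the $q$ output bits of $g(s)$ indexed by the block corresponding to $u$, and by the "highly explicit" clause of Theorem \ref{almostkwiseg} each such bit is computable in time $\poly(\kappa, \log(2^p q), 1/\eps) = \poly(n)$ given $s$ and the index. This simultaneously gives the second algorithm claimed in the theorem (evaluation in $\poly(n)$ time from the $O(\log n)$-bit description $s$).

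Next I would argue that a good seed exists. For any two distinct strings $u \ne v$ in $\bit^p$, the pair $(h_s(u), h_s(v))$ is determined by $2q$ output bits of $g$, which are distinct positions (since $u \ne v$ index disjoint blocks), so by $\eps$-almost $\kappa$-wise independence (and $2q \le \kappa$, which holds by choosing the hidden constant in $\kappa = O(\log n)$ large enough relative to $c_0$), we have $\Pr_s[h_s(u) = h_s(v)] \le 2^{-q} + \eps \le n^{-c_0'} + n^{-c_1}$ for a constant $c_0'$ growing with $c_0$. Now the only pairs that matter for collision-freeness of $x$ are pairs of \emph{distinct} length-$p$ substrings of $x$; there are at most $\binom{n}{2} < n^2$ such pairs. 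Taking a union bound over these pairs, the probability that some distinct pair of substrings of $x$ collides under $h_s$ is at most $n^2 \cdot (n^{-c_0'} + n^{-c_1})$, which is $< 1$ once $c_0$ (hence $c_0'$) and $c_1$ are chosen larger than $2$. Hence there exists a seed $s^\ast$ for which $h_{s^\ast}$ is collision free for $x$: indeed for every pair $i, j \in [n-p+1]$, if $x[i,i+p) = x[j,j+p)$ then trivially $h_{s^\ast}(x[i,i+p)) = h_{s^\ast}(x[j,j+p))$, and if they differ then they are among the pairs avoided, so the hash values differ — which is exactly Definition \ref{def:hash}.

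Finally I would describe the derandomizing algorithm: iterate over all $2^d = \poly(n)$ seeds $s$; for each, check whether $h_s$ is collision free for $x$ by enumerating all pairs $(i,j)$ with $i,j \in [n-p+1]$ and verifying that $h_s(x[i,i+p)) = h_s(x[j,j+p))$ exactly when $x[i,i+p) = x[j,j+p)$. There are $O(n^2)$ pairs, each check evaluates $h_s$ twice (each evaluation $\poly(n)$ time, using that we only need the relevant $q$ output bits, not the exponentially long full table) and compares the corresponding length-$p$ substrings of $x$ in $O(p) = O(n)$ time; so each seed is tested in $\poly(n)$ time and the whole search runs in $\poly(n)$. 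Output the description $s^\ast$ of the first seed that passes. The main thing to be careful about — the only real subtlety — is the alphabet/index bookkeeping: we must never materialize the full length-$2^p q$ output of $g$, since $2^p$ is exponential; everything (existence argument, evaluation, and the collision-freeness test) must be phrased in terms of the at most $n^2$ relevant substring-pairs and the corresponding $O(\log n)$-bit output windows of $g$, so that both the runtime and the union bound stay polynomial. Once that is set up, the seed-length bound $d = O(\log n)$ and the $\poly(n)$ runtimes follow directly from Theorem \ref{almostkwiseg}.
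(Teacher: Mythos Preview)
Your proposal is correct and follows essentially the same route as the paper: use an $\eps$-almost $2q$-wise independent generator to define $h_s$ as a table lookup, show a good seed exists via a union bound over the $O(n^2)$ substring pairs, and derandomize by exhaustive search over the $2^{O(\log n)}$ seeds. One minor arithmetic slip: since the event $h_s(u)=h_s(v)$ is a union of $2^q$ point events on $2q$ output bits, the correct bound is $\Pr_s[h_s(u)=h_s(v)] \le 2^{-q} + 2^q\eps$ rather than $2^{-q}+\eps$, so you need $c_1 > c_0 + 2$ (not just $c_1>2$); this does not affect the argument's validity.
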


\begin{proof}

Let $\eps = 1/\poly(n)$ be small enough.
Let $g: \{0,1\}^{d} \rightarrow (\{0,1\}^{q})^{ \{0,1\}^p }$ be an $\eps$-almost $2q$-wise independence generator from Theorem \ref{almostkwiseg} with $d = O(\log \frac{2q \log (2^{p}q)}{\eps})$. Here $g$ outputs $q2^p$ bits and we view the output as an array indexed by elements in $\{0,1\}^{p}$, where each entry is in $\{0,1\}^{q}$.

To construct $h$, we try every seed $v \in \{0,1\}^{d}$.
Let $h(\cdot) = g(v)[\cdot]$. This means that, for every $u\in\{0,1\}^p, h(u)$ is the value of the entry indexed by $u$ in $g(v)$.
For any $i, j\in [n-p+1]$, we check whether $h(x[i, i+p)) = h(x[j, j+p))$ if and only if $ x[i, i+p) = x[j, j+p) $. If this is the case then the algorithm returns $h$. The description of $h$ is the corresponding seed $v$.

Now we show that we can indeed find such a $v$ by exhaustive search. If we let $v$ be chosen uniformly randomly, then by a union bound, the probability that there exists $i,j\in [n - p+1]$ s.t. $h(x[i, i+p)) = h(x[j, j+p))$ but $ x[i, i+p) \neq x[j, j+p) $ is at most $1/\poly(n) \cdot n^2 = 1/\poly(n)$. Thus there exists a $v$ s.t. the corresponding $h$ is collision free.

The exhaustive search is in polynomial time because the seed length is $d = O(\log n)$. The evaluation of $h$ is in polynomial time by Theorem \ref{almostkwiseg}. Thus the overall running time of our algorithm is a polynomial in $n$.
\end{proof}

\begin{definition}[Matching]
\label{def:matching}
Given $n, n', p, q \in \mathbb{N}, p\leq n, p\leq n'$, a function $h:\{0,1\}^{p} \rightarrow \{0,1\}^{q}$ and two strings $x\in \{0,1\}^{n}, y\in \{0,1\}^{n'}$, a  matching (may not be monotone) between $x$ and $y$ under $h$ is a sequence of matches (pairs of indices) $w = ((i_1, j_1), \ldots (i_{|w|}, j_{|w|}))$ s.t.
\begin{itemize}
\item for every $k\in [|w|]$,
\begin{itemize}
\item $i_k = 1+ p l_k \in [n]$ for some $l_k$,
\item $j_{k} \in [n']$,
\item $ h(x[i_k, {i_k} + p)) = h(y[j_k ,{j_k} + p))$,
\end{itemize}

\item $i_1, \ldots, i_{|w|}$ are distinct.

\end{itemize}

A non-overlapping matching is a matching with one more restriction.

\begin{itemize}

\item Intervals $ [j_k, j_k+ p ), k\in [|w|]  $, are disjoint.

\end{itemize}

When considering overlaps,  the matching has overlapping degree $d$, if each bit of $y$ appears in at most
$d$ matched pairs for some small number $d$.
\end{definition}
For a match $(i, j)$, it matches two intervals, one from $x$, the other from $y$. When we say  the $y$'s interval (of the match $(i, j)$), we mean $[j , j+p)$, and similarly the $x$'s interval is $[i, i+p)$.
A match $(i, j)$ in a matching is called a wrong match (or wrong pair) if $x[i, i+p) \neq y[j, j+p)$. Otherwise it is called a correct match (or correct pair). A pair of indices $(i,j)$ is called a potential match between $x$ and $y$ if $h(x[i, i+p)) = h(y[j, j+p))$. It may be  wrong because $x[i, i+p)$ may not be $y[j, j+p)$. When $ x, y $ are clear from the context we simply say $(i,j)$ is  a potential match.

To compute a monotone non-overlapping matching we can use the dynamic programming method in \cite{CJLW18}. But our matching is not necessarily monotone. So this raises the question of how hard this problem is.

It seems difficult to find a polynomial algorithm which can exactly compute it. So instead we use constant approximation techniques. There're two difficulties at the first thought. One is that if we compute the non-overlapping matching over the entire strings, then a constant approximation is too bad since there will be $O(n)$ unmatched blocks. So for each level,  we restrict our attention to blocks that are uncovered and wrongly recovered (but discovered by us). The other problem is that we need the approximation rate to be a large enough constant. To achieve this goal, we actually computing matchings with constant degree.

We start from a $ 1/3 $-approximation algorithm, which is greedy.

\begin{construction}

\label{1/3-approxalgo}

Given $n, n', p, q \in \mathbb{N}, p\leq n, p\leq n'$, a polynomial time computable function $h:\{0,1\}^{p} \rightarrow \{0,1\}^{q}$ and two strings $x\in \{0,1\}^{n}, y\in \{0,1\}^{n'}$, we have the following $ 1/3 $-approximation algorithm for computing the non-overlapping matching.

\begin{enumerate}

\item Let the sequence of matches $w$ be empty;

\item Find $i = 1 + pl \in [n]$ and $j \in [n']$,  where $ l \in \mathbb{N}$, s.t.
\begin{itemize}
\item  $h(x[i, i+p)) = h([j, j+p))$,

\item  $i$ is not in any match (as the first entry) of the current $w$,

\item  $[j, j+p)$ does not overlap with any $[j', j'+p)$ for any $j'$  as the second entry in any matches of the current $w$;
\end{itemize}

\item
If there is such a pair of indices $i, j$, then add the match $(i, j)$ to $w$ and go to step 2; Otherwise, output $w$ and stop.

\end{enumerate}

\end{construction}

\begin{lemma}
\label{1/3-approx}

Construction \ref{1/3-approxalgo} gives a $1/3$-approximation algorithm for computing the non-overlapping matching.
\end{lemma}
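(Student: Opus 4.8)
The plan is to show that the greedy algorithm in Construction \ref{1/3-approxalgo} terminates with a non-overlapping matching $w$ satisfying $|w| \ge \frac13 |w^*|$, where $w^*$ denotes a maximum non-overlapping matching between $x$ and $y$ under $h$. First I would note that the output $w$ is indeed a valid non-overlapping matching: the algorithm only adds a match $(i,j)$ when $h(x[i,i+p)) = h(y[j,j+p))$, when $i$ is the starting index of a block of $x$ and is not already used as a first coordinate, and when $[j,j+p)$ is disjoint from all previously added $y$-intervals; so the two defining restrictions of Definition \ref{def:matching} (distinct $i$'s, disjoint $y$-intervals) are maintained as invariants throughout. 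Moreover the algorithm clearly runs in polynomial time, since each iteration adds one match and there are at most $n/p$ possible first coordinates, and testing the existence of a suitable pair $(i,j)$ is a polynomial search using the polynomial-time computability of $h$.

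The core of the argument is a \emph{charging} argument. Suppose for contradiction that the algorithm halts with $|w| < \frac13 |w^*|$. I would associate to each match $(i,j) \in w$ the set of matches in $w^*$ that it can ``block'': a match $(i',j') \in w^*$ is blocked by $(i,j)$ if either $i' = i$ (same block of $x$), or $[j',j'+p) \cap [j,j+p) \ne \emptyset$. For a fixed $(i,j)$, there is at most one match of $w^*$ with $i' = i$ (since $w^*$ has distinct first coordinates), and since $w^*$ is non-overlapping, the $y$-intervals of $w^*$ are pairwise disjoint of length $p$, so at most two of them can intersect the single interval $[j,j+p)$ of length $p$. Hence each element of $w$ blocks at most $3$ elements of $w^*$, so the total number of matches of $w^*$ blocked by some element of $w$ is at most $3|w| < |w^*|$. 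Therefore there exists a match $(i^*, j^*) \in w^*$ that is not blocked by any match in $w$: its first coordinate $i^*$ appears in no match of $w$, and its $y$-interval $[j^*, j^*+p)$ is disjoint from every $y$-interval of $w$. But then $(i^*, j^*)$ is exactly a pair the algorithm could have added in Step 2 — contradicting the fact that the algorithm halted. Hence $|w| \ge \frac13 |w^*|$.

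The main obstacle — really the only subtle point — is getting the blocking count exactly right, in particular the claim that a fixed length-$p$ interval $[j,j+p)$ can intersect at most two pairwise-disjoint length-$p$ intervals. I would justify this by observing that among disjoint intervals of length $p$ sorted by left endpoint, at most one can have its left endpoint in $[j-p+1, j]$ lying to the ``left'' and at most one starting within $[j, j+p)$; more carefully, if three disjoint length-$p$ intervals all met $[j,j+p)$, their union would be contained in an interval of length less than $3p$ while having total length $3p$, which is impossible for disjoint intervals unless they tile it exactly, and even then only the middle one can be strictly inside while the outer two each share only a boundary region — a short case check shows the count is at most $2$. Combined with the at-most-one bound from the distinct-$i$ condition, this yields the factor $3$. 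Everything else is bookkeeping: the invariants are preserved by construction, and the termination/contradiction step is immediate once the charging bound is in hand.
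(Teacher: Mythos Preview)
Your proposal is correct and follows essentially the same approach as the paper: a charging/blocking argument showing each greedy match can conflict with at most three matches of $w^*$ (one via the shared $x$-block, at most two via overlapping $y$-intervals), so if $|w|<\tfrac13|w^*|$ an unblocked match of $w^*$ would still be addable, contradicting termination. Your treatment is slightly more detailed (explicitly verifying the invariants and polynomial running time, and sketching why only two disjoint length-$p$ intervals can meet a given length-$p$ interval), but the structure and key idea are identical to the paper's proof.
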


\begin{proof}

Suppose $w^*$ is the maximum non-overlapping matching between $x, y$ under $h$.

Every time the greedy algorithm adds a  match $(i, j)$  to $w$, we may delete at most $3$ matches in $w^*$. They may be the match which includes $ [i, i+p)$, or the matches whose intervals of $y$ overlap with $ [j, j+p)$.

Note that in the first case, there can be at most $1$ match of $w^*$   deleted since by definition of matching, $ [i, i+p)$ can only be the $x$'s interval for at most $l$ match  of $w^*$. For the second case, note that since $w^*$ is non-overlapping, there are at most two $y$'s intervals, of matches in $w^*$, overlapping with $ [j, j +p)$.

If $|w| < 1/3|w^*|$, then we can delete less than $|w^*|$ matches in $w^*$.

We claim that the matches left can be selected by the greedy algorithm. Suppose one remaining match is $(i, j)$. Note that $ [i, i+p)$ is not in any match of $w$. Since if it is, then this match should have been deleted. Also note that $  [j, j+p) $ does not overlap with any intervals in matches of $w$. Since if it does, then it also should have been deleted.

As a result, if  $|w| < 1/3|w^*|$, our greedy algorithm will not stop. Also note that every time the algorithm conducts step 2 and 3 it will either increase the current matching size by 1, or stop, and the matching size is $O(n/p)$. So our greedy algorithm will halt in polynomial time.

\end{proof}

Next we give an explicit algorithm which computes a even larger matching (better approximation), but it allows overlaps.

\begin{construction}

\label{2/3-approxalgo}

Given $n, n', p, q \in \mathbb{N}, p\leq n, p\leq n'$, a (polynomial time computable) function $h:\{0,1\}^{p} \rightarrow \{0,1\}^{q}$ and two strings $x\in \{0,1\}^{n}, y\in \{0,1\}^{n'}$,
we have the following algorithm.

\begin{enumerate}

\item Let the matching $w  $ be empty, set $S = \{i = 1+ p l \mid l \in \mathbb{N}, i\in [n]\}$, integer $c = 0$;

\item Conduct Construction \ref{1/3-approxalgo} to compute a matching $w'$ between $x_S$ and $y$ under $h$. Here $x_S$ is the projection of x on intervals in set $S$;

\item Let $w = w\cup w'$;

\item Let $S = S \setminus \{ u \mid \exists (u, v) \in w
\}$;

\item $c= c+1$;

\item If $c \geq 3$, output $w$; Otherwise go to step 2.

\end{enumerate}

\end{construction}

Note that Construction \ref{2/3-approxalgo} is in polynomial time since it simply conducts Construction \ref{1/3-approxalgo} for $3$ times and after each conduction it removes matched blocks of $x$ and only considers the remaining blocks in the next iteration.
So we only need to show its correctness.

\begin{lemma}
\label{2/3-approx}
Construction \ref{2/3-approxalgo} computes a degree $3$ overlapping matching $w$ between $x$ and $y$ under $h$, such that $|w| \geq 2/3 |w^*|$, where $w^*$ is the maximum non-overlapping matching between $x$ and $y$ under $h$.
\end{lemma}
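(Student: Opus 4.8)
The plan is to run an induction on the number of iterations $c$ of Construction~\ref{2/3-approxalgo}, tracking at each stage how much of the optimal matching $w^*$ is still ``available'' to be grabbed by the greedy subroutine. The degree bound is immediate: each of the three iterations contributes a matching $w'$ which is non-overlapping on $y$ (by Lemma~\ref{1/3-approx}, since $w'$ is itself a non-overlapping matching), so any fixed bit of $y$ lies in at most one $y$-interval of each $w'$, hence in at most $3$ matched pairs of the final $w = w^{(1)} \cup w^{(2)} \cup w^{(3)}$. The only real content is the size bound $|w| \ge \tfrac23 |w^*|$.

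For the size bound I would argue as follows. Fix the optimal non-overlapping matching $w^*$. After iteration $1$ we have obtained a matching $w^{(1)}$ on all of $x$'s blocks; by Lemma~\ref{1/3-approx}, $|w^{(1)}| \ge \tfrac13 |w^*|$. Now remove from $S$ all $x$-blocks matched in $w^{(1)}$. The key observation is that $w^*$ restricted to the $x$-blocks \emph{still} in $S$ is a non-overlapping matching between $x_S$ and $y$, of size at least $|w^*| - |w^{(1)}|$ (we discard at most one match of $w^*$ per removed $x$-block). Applying Lemma~\ref{1/3-approx} to this restricted instance, iteration $2$ produces a matching $w^{(2)}$ on the remaining blocks with $|w^{(2)}| \ge \tfrac13(|w^*| - |w^{(1)}|)$, so that after two iterations the total size is at least $|w^{(1)}| + \tfrac13(|w^*| - |w^{(1)}|) = \tfrac13|w^*| + \tfrac23|w^{(1)}| \ge \tfrac13|w^*| + \tfrac29|w^*| = \tfrac59|w^*|$. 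Repeating the same reasoning once more: after iteration $3$ the total is at least $\tfrac13|w^*| + \tfrac23 \cdot \tfrac59|w^*| = \tfrac13|w^*| + \tfrac{10}{27}|w^*| = \tfrac{19}{27}|w^*| \ge \tfrac23|w^*|$, which is the claim. (More cleanly: if $a_c$ denotes the size after $c$ iterations then $a_c \ge a_{c-1} + \tfrac13(|w^*| - a_{c-1})$, i.e. $|w^*| - a_c \le \tfrac23(|w^*| - a_{c-1})$, so $|w^*| - a_3 \le \tfrac{8}{27}|w^*|$ and hence $a_3 \ge \tfrac{19}{27}|w^*| \ge \tfrac23|w^*|$.)

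The step I expect to be the main obstacle — or at least the one needing the most care — is the claim that $w^*$ projected to the surviving blocks of $S$ remains a \emph{legal} non-overlapping matching of the asserted size, so that Lemma~\ref{1/3-approx} can be invoked as a black box on the sub-instance $(x_S, y)$. One must check that restricting the $x$-side to a subset of blocks does not destroy disjointness of the $y$-intervals (it does not, since we only delete matches) and that the block-starting-index condition is preserved under the projection $x_S$ (here one must be slightly careful about how $x_S$ is indexed and how a block of $x_S$ corresponds to a block of $x$ — the hash equalities transfer verbatim because the underlying substrings are unchanged). Once this bookkeeping is in place, the recursion above closes and the lemma follows; I would also remark in passing, as the construction already notes, that polynomial running time is clear since we merely invoke the polynomial-time Construction~\ref{1/3-approxalgo} three times.
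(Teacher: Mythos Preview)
Your proposal is correct and follows essentially the same argument as the paper: both use the key observation that restricting $w^*$ to the still-unmatched $x$-blocks yields a valid non-overlapping matching of size at least $|w^*|$ minus the current total, then apply Lemma~\ref{1/3-approx} iteratively to obtain the recursion $a_c \ge a_{c-1} + \tfrac13(|w^*|-a_{c-1})$ and hence $a_3 \ge \tfrac{19}{27}|w^*| \ge \tfrac23|w^*|$. The degree-$3$ bound is handled identically in both, via the fact that each of the three calls to Construction~\ref{1/3-approxalgo} returns a non-overlapping matching on $y$.
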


\begin{proof}

Let $w_i, i = 1,2,3$ be the matching the algorithm computes after round $i$. Also let $S_i, i =1,2,3$ be the set $S$ after the $i$th round.

By Lemma \ref{1/3-approx},   $|w_1| \geq 1/3|w^*|$. The number of unmatched blocks is $\bar{n} - |w_1| \leq \bar{n} - 1/3|w^*|$, where $\bar{n} = \lfloor n/p \rfloor$ is the total number of blocks of $x$.

The maximum matching between $x_{S_1}$ and $y$ is at least $|w^*| - |w_1|$. This is because that, each of the matched blocks of $x$ by $w_1$, should be among the $x$'s blocks in the matches of $w^*$. There are at most $|w_1|$ of them. So there are still $|w^*| - |w_1|$ remaining matches in $w^*$ which corresponds to blocks in $x_{S_1}$.

Again by Lemma \ref{1/3-approx}, for $i \geq 2$, at least $1/3(|w^*| - |w_{i-1}|)$ blocks of $x_{S_{i-1}}$ will be matched in the $i$th round.

Thus
\begin{eqnarray}
|w_i|&\geq& |w_{i-1}|  + 1/3(|w^*| - |w_{i-1}|)\label{ineq1}\\
	   &=& 1/3|w^*| + 2/3|w_{i-1}| \label{eq1}\\
	   &\geq& (1-(2/3)^{i-1}) |w^*| + (2/3)^{i-1} |w_1|\label{ineq2}\\
	   &\geq& (1-(2/3)^{i-1}) |w^*| + (1/3)(2/3)^{i-1}  |w^*|\label{ineq3}\\
	   &=& (1 - (2/3)^{i})|w^*|\label{eq2}.
\end{eqnarray}
Inequality \ref{ineq1} is due to Lemma \ref{1/3-approx} as explained above. Equality \ref{eq1} is due to a direct computation. \ref{ineq2} is by recursively applying \ref{ineq1} and \ref{eq1} from $i-1$ to $2$. \ref{ineq3} is because  $|w_1| \geq 1/3|w^*|$.

As a result, $|w_3| \geq 19/27 |w^*| \geq 2/3 |w^*|$.

Note that we apply Construction \ref{1/3-approxalgo} for $3$ times, where in each time, it gives a non-overlapping matching. So each entry of $y$ is in at most one of the matches in that round. So finally we get a degree $3$ overlapping matching.
\end{proof}

We now give the following document exchange protocol.

\begin{construction}
\label{dBTprotocol}

The protocol works for every input length $n\in \mathbb{N}$,  every $(k_1, t)$ block-insertions/deletions  $k_2 $ block-transpositions, $k_1, k_2 \leq \alpha n/\log n, t \leq \beta n$, for some constant $\alpha, \beta$. (If $k_1$ or $k_2 > \alpha n/\log n$, or $t > \beta n$, we simply let Alice send her input string.)
Let $k = k_1 + k_2$.

Both Alice's and Bob's algorithms have $L = O(\log \frac{n}{k\log n + t})$ levels.

For every $i\in [L]$, in the $i$-th level,
\begin{itemize}
\item Let the block size be $b_i =  \frac{n}{18 \cdot 2^{i} (k+\frac{t}{\log n}) }   $, i.e., in each level, divide every block of $x$ in the previous level evenly into two blocks. We choose $L$ properly s.t. $b_L = O(\log n)$;

\item The number of blocks $l_i = n/b_i$;
\end{itemize}

Alice: On input $x \in \{0,1\}^n$,
\begin{enumerate}[label*=\arabic*.]

\item For the $i$-th level,
\begin{enumerate}[label*=\arabic*.]

\item Construct a hash function $h_i: \{0,1\}^{b_i} \rightarrow \{0,1\}^{b^{*} = \Theta(\log n)}$ for $x$ by Theorem \ref{goodhash}.

\item Compute the sequence of hash values i.e. $v[i] = ( h_i(x[1, 1+b_i)), h_i(x[1+b_i, 1+2b_i)), \ldots, h_{i}(x[1+(l_i-1)b_i, l_i b_i)))$;

\item Compute the redundancy $ z[i] \in (\{0,1\}^{b^*})^{\Theta( (k+\frac{t}{\log n}) i)} $ for $v[i]$ by Theorem \ref{agcode}, where the code has distance at least $180(k+ \frac{t}{\log n}) i$;
\end{enumerate}

\item Compute the redundancy $z_{\mathrm{final}} \in (\{0,1\}^{b_L})^{\Theta( (k+\frac{t}{\log n}) \log L )} $ for the blocks of the $L$-th level  by Theorem \ref{agcode}, where the code has distance at least $90(k+ \frac{t}{\log n})L$;

\item Send $h = (h_1, \ldots, h_L)$, $z = (z[1], z[2], \ldots, z[L])$, $v[1]$, $z_{\mathrm{final}}$ to Bob.

\end{enumerate}

Bob: On input $y \in \{0,1\}^{O(n)}$ and received $h, z$, $v[1]$, $z_{\mathrm{final}}$,

\begin{enumerate}[label*=\arabic*.]

\item Create $\tilde{x} \in \{0, 1, *\}^{n}$ (i.e. Bob's current version of Alice's $x$), initiating it to be $(*, *, \ldots, *)$;

\item For the $i$-th level where $1 \leq i \leq L-1$,
\begin{enumerate}[label*=\arabic*.]


\item Apply the decoding of Theorem \ref{agcode} on $ h_i(\tilde{x}'[1, 1+b_i) ),  h_i(\tilde{x}'[1+b_i, 1+2b_i)), \ldots,  h_{i}(\tilde{x}'[1+(l_i-1)b_i, l_i b_i)), z[i]$ to get the sequence of hash values $ v[i] $.  Note that $v[1]$ is received directly, thus Bob does not need to compute it;

\item Let $S = \{j \in [n] \mid h_i(\tilde{x}[1+ (j-1)b_i, 1+ j b_i ) ) \neq v[i][j] \mbox{ or } x[1+ (j-1)b_i, 1+ jb_i ) = (*, \ldots, *) \}$;

\item Compute the matching $w_i = ((p_1, p'_1), \ldots, (p_{|w|}, p'_{|w|})) \in ([l_i] \times [|y|])^{|w_i|}$ between $x_S$ and $y$ under $h_i$, using $v[i]$, by Lemma \ref{2/3-approxalgo};

\item Evaluate $\tilde{x}$ according to the matching, i.e. let  $\tilde{x}[p_j, p_j+b_i) = y[p'_j, p'_j + b_i)$, where $p_j, p_j'\in w_i, j\in [|w_i|]$;

\end{enumerate}

\item In the $L$'th level, apply the decoding of Theorem \ref{agcode} on the blocks of $\tilde{x}$ and $ z_{\mathrm{final}}$ to get $x$;

\item Return $x$.

\end{enumerate}

\end{construction}

\begin{lemma}
\label{dBTprotocolmaxMatchingilen}
For every $i$, the maximum non-overlapping matching between $x_S$ and $y$ under $h_i$ has size at least $ |S|-  (2k_1 + 3k_2 + t/\log n)$.

\end{lemma}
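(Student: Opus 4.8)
The plan is to analyze how the set $S$ relates to a natural matching built from the string $y$ and the errors that produced it. Recall $y$ is obtained from $x$ by $(k_1,t)$ block-insertions/deletions and $k_2$ block-transpositions. Fix level $i$ with block size $b_i$ and block count $l_i$. First I would argue that for \emph{every} block of $x$ (not just those in $S$), namely for each $j \in [l_i]$, the substring $x[1+(j-1)b_i, 1+jb_i)$ appears as a substring of $y$ \emph{unless} that block was ``touched'' by one of the error operations, i.e. it overlaps the position of an insertion, a deletion, or one of the two cut/paste boundaries of a transposition. The point is that a block-insertion of some symbols splits at most one block of $x$ and creates at most one ``gap''; a block-deletion likewise; and a block-transposition, since it only cuts at one place and pastes at another, damages at most two blocks' worth of contiguity in $y$. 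So the number of blocks of $x$ that fail to survive as a contiguous substring of $y$ is at most $k_1 + 2k_2$ from the transposition cut/paste points plus the deletions — but I should be careful with the precise accounting; the target bound $2k_1 + 3k_2 + t/\log n$ suggests the intended count is: each of the $k_1$ block-insertions destroys $O(1)$ blocks, each of the $k_2$ transpositions destroys $O(1)$ blocks at each of its two ends, and the total $t$ bits deleted destroy at most $t/b_i \le t/\log n$ additional blocks since $b_i \ge b_L = \Omega(\log n)$; adding these gives the stated $2k_1 + 3k_2 + t/\log n$.

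Next, for each such ``surviving'' block $j$, pick an occurrence of $x[1+(j-1)b_i,1+jb_i)$ in $y$, say starting at position $q_j$. Because $h_i$ is collision free for $x$ (Theorem~\ref{goodhash}), we have $h_i(x[1+(j-1)b_i,1+jb_i)) = h_i(y[q_j, q_j+b_i))$, so $(1+(j-1)b_i, q_j)$ is a legitimate potential match under $h_i$ — indeed a correct one. The only remaining obstacle is that these chosen intervals $[q_j,q_j+b_i)$ in $y$ may overlap each other, whereas a non-overlapping matching needs them disjoint. Here I would use the fact that the surviving blocks of $x$, laid out in $x$, are contiguous runs between the damaged blocks, and within such a run the natural choice is to take the occurrences as they literally sit inside $y$ (the image of that run under the edit operations), which are automatically disjoint and in order. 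Across different runs, the images sit in disjoint parts of $y$ after at most $k$ transpositions rearrange them — more carefully, the $k_1 + k_2$ surviving runs get mapped to at most that many disjoint intervals of $y$, so choosing occurrences run-by-run yields a globally non-overlapping family. Restricting this family to blocks whose index lies in $S$ gives a non-overlapping matching between $x_S$ and $y$ of size at least $|S| - (2k_1 + 3k_2 + t/\log n)$, since at most $2k_1 + 3k_2 + t/\log n$ block indices are excluded for being damaged.

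The main obstacle I anticipate is the bookkeeping in the disjointness/ordering argument: making precise that the surviving blocks decompose into $O(k)$ maximal runs, that each run has a canonical non-overlapping occurrence inside $y$, and that these occurrences across runs can be chosen simultaneously disjoint despite the transpositions having reshuffled the order. I expect the cleanest way to handle this is to track a fixed optimal alignment realizing the $(k_1,t)$ block-insertions/deletions and $k_2$ block-transpositions, and to define $q_j$ directly from where block $j$'s symbols land under that alignment; then disjointness and the collision-free property are both immediate, and only the counting of damaged blocks needs the $b_i \ge \log n$ observation. I would also double-check the constant: a transposition has two boundaries but each boundary damages the (at most) one block straddling it plus possibly its neighbor, which is where the coefficient $3$ on $k_2$ (rather than $2$) plausibly comes from, and similarly whether insertions contribute $1$ or $2$ per operation — the statement's coefficients ($2k_1$, $3k_2$) should fall out of that careful local analysis.
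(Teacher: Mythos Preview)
Your approach is essentially the paper's: count how many blocks of $x$ are ``corrupted'' by the operations (so fail to survive intact as substrings of $y$), and match each uncorrupted block to its own image in $y$ under a fixed realization of the edits, which gives disjoint intervals automatically. Two small clarifications: you do not need collision-freeness here---once a block survives, the two strings are literally equal, so any function agrees on them; and the exact constants fall out as follows: a block-transposition has \emph{three} breaking points (the two ends of the removed interval and the paste destination), each corrupting at most one block, giving $3k_2$; and a block-deletion of $t_j$ bits touches at most $\lceil t_j/b_i\rceil+1\le t_j/b_i+2$ blocks, so summing over all $k_1$ insertions/deletions (insertions touch at most one block, which is dominated by this bound) yields $2k_1+t/b_i\le 2k_1+t/\log n$.
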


\begin{proof}


Note that block-insertions do not delete bits. One block insertion can corrupt at most one block. For block-deletions, assume that the $j$-th block-deletion delete $t_j$ bits. This can corrupt (delete a block totally or delete part of a block) at most $\lceil t_j/b_i \rceil +1 $   blocks. So the total number of corrupted blocks is at most $ \sum_{j=1}^{k_1}( \lceil t_j/b_i \rceil +1)  \leq  2 k_1 +  t/b_i \leq 2k_1 + t/\log n$.

On the other hand, $k_2$ block-transpositions can corrupt at most $3 k_2$ blocks, because one block-transposition can only corrupt the two blocks at the end of the transposed substring and another block which contains the position that is the destination of the transposition.

As a result, the total number of corrupted blocks is at most $2k_1 + 3k_2 + t/\log n$. After corruption, uncorrupted blocks can be matched to its corresponding blocks (before corruption) in $x$.
So there exists a matching between $x_S$ and $y$ under $h_i$ having size at least $ |S| -  (2k_1 + 3k_2 + t/\log n)$.

\end{proof}

\begin{lemma}
\label{dBTprotocolMatchingLen}
For every $i$, if $v[i]$ is correctly computed by Bob, then  $|w_i| \geq 2/3 (|S| -   (2k_1 +  3k_2 + t/\log n))$.

\end{lemma}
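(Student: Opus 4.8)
The plan is to simply chain together the two approximation results already established with the combinatorial bound from Lemma \ref{dBTprotocolmaxMatchingilen}. First I would invoke Lemma \ref{dBTprotocolmaxMatchingilen} to assert that the maximum non-overlapping matching $w^*$ between $x_S$ and $y$ under $h_i$ satisfies $|w^*| \geq |S| - (2k_1 + 3k_2 + t/\log n)$. The only hypothesis needed to make this meaningful in the present setting is that $h_i$ behaves well on $y$'s substrings, but note that Lemma \ref{dBTprotocolmaxMatchingilen} is a purely existential statement about \emph{some} matching of uncorrupted blocks, so it holds unconditionally; the role of the hypothesis ``$v[i]$ is correctly computed by Bob'' is instead to guarantee that the algorithm Bob actually runs has access to the true hash values $v[i]$, hence computes a matching with respect to the same function and values that define $w^*$.

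Next, I would apply Lemma \ref{2/3-approx} (the correctness of Construction \ref{2/3-approxalgo}) to the inputs $x_S$ and $y$ under $h_i$. That lemma says the matching $w_i$ returned is a degree-$3$ overlapping matching with $|w_i| \geq \tfrac{2}{3}|w^{**}|$, where $w^{**}$ is the maximum non-overlapping matching between $x_S$ and $y$ under $h_i$ — which is exactly the $w^*$ from the previous paragraph. Combining the two inequalities gives
\[
|w_i| \;\geq\; \tfrac{2}{3}\,|w^*| \;\geq\; \tfrac{2}{3}\bigl(|S| - (2k_1 + 3k_2 + t/\log n)\bigr),
\]
which is the claimed bound.

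The only point requiring a little care — and the step I expect to be the main (minor) obstacle — is the bookkeeping matching of notation between the constructions and the protocol: verifying that ``compute the matching $w_i$ between $x_S$ and $y$ under $h_i$ using $v[i]$'' in step 2.3 of Construction \ref{dBTprotocol} is literally an invocation of Construction \ref{2/3-approxalgo} with $h = h_i$, and that the set $S$ restricting $x$'s blocks is respected correctly (the construction matches blocks of $x$ indexed in $S$ to substrings of $y$, using the received hash values $v[i]$ in place of $h_i$ evaluated on $x$). Since $v[i]$ is assumed correctly computed, these received values agree with the true values $h_i(x[\cdot,\cdot))$ on every block, so the potential matches used by the algorithm coincide with the potential matches under $h_i$, and Lemma \ref{2/3-approx} applies verbatim. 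No new calculation beyond composing the two displayed inequalities is needed.
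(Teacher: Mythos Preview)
Your proposal is correct and follows exactly the paper's own proof: invoke Lemma~\ref{dBTprotocolmaxMatchingilen} for the lower bound on the maximum non-overlapping matching, then apply Lemma~\ref{2/3-approx} to obtain the $2/3$ factor. Your added remark about why the hypothesis ``$v[i]$ is correctly computed'' is needed (so that Bob's algorithm matches against the true hash values and hence Lemma~\ref{2/3-approx} applies) is a welcome clarification that the paper leaves implicit.
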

\begin{proof}

By Lemma \ref{dBTprotocolmaxMatchingilen},  the maximum non-overlapping matching between $x_S$ and $y$ under $h_i$ has size at least $ |S|-   (2k_1 +  3k_2 + t/\log n)$. By Lemma \ref{2/3-approx}, $|w_i| \geq 2/3 (|S|-  (2k_1 + 3k_2 + t/\log n))$.

\end{proof}

%
%
%
%
%
%

\begin{lemma}
\label{dBTprotocolNWM}
For every $i$, if $v[1], \ldots, v[i ]$ are correctly recovered, then in the $i$-th level the number of wrongly recovered blocks of $x$ is at most $ 3i(2k_1 + 3k_2 + \frac{t}{\log n}) $.

\end{lemma}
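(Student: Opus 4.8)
The plan is to prove Lemma~\ref{dBTprotocolNWM} by induction on the level $i$, tracking the ``budget'' of newly introduced errors that the matching can create. The key quantity to control is the number of \emph{wrong matches} that can appear in the degree-$3$ overlapping matching $w_i$ produced in level $i$ by Construction~\ref{2/3-approxalgo}. Recall that since $h_i$ is collision free for $x$ (Theorem~\ref{goodhash}), a match $(p_j,p_j')$ is wrong only if $y[p_j',p_j'+b_i)$ is a substring of $y$ that does \emph{not} equal any length-$b_i$ substring of $x$ (otherwise collision freeness would force equality). So the first step is to bound the number of ``new'' length-$b_i$ substrings of $y$ that are not substrings of $x$, counting them with multiplicity consistent with a degree-$d$ overlapping matching. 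Following the overview, I would argue that each block-insertion or block-deletion operation creates at most $O(1)+O(t_j/b_i)$ new non-overlapping substrings (since $b_i\ge\log n$), each block-transposition creates at most $O(1)$, and for a degree-$d$ overlapping matching this bound scales by a factor $d$: a degree-$d$ overlapping matching between $x_S$ and $y$ under $h_i$ has at most $O(d(k_1+k_2+t/\log n))$ wrong matches. With the precise constants in the construction this should give at most $d(2k_1+3k_2+t/\log n)$ wrong pairs for a degree-$d$ matching.

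Next I would set up the induction. The induction hypothesis at level $i-1$ is that, assuming $v[1],\dots,v[i-1]$ are correctly recovered, the number of wrongly recovered blocks of $x$ after level $i-1$ is at most $3(i-1)(2k_1+3k_2+t/\log n)$; and additionally I would carry along the bound that the number of unmatched ($*$) blocks is at most $c_1(i-1)(2k_1+3k_2+t/\log n)$ for an appropriate constant $c_1$ (this is essentially the separate induction sketched in the overview, which the next lemma presumably handles, but I need enough of it here). At level $i$, once $v[i]$ is correctly recovered from $z[i]$ via Theorem~\ref{agcode} (which is where the distance-$\Theta((k+t/\log n)i)$ guarantee of the code is used — the number of blocks where Bob's hash disagrees with $v[i]$ is at most the number of currently-wrong-plus-unmatched blocks, which is $O(i(k+t/\log n))$, within the correcting radius), Bob forms the set $S$ of blocks that are unmatched or detected-wrong. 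The crucial point is the composition of degrees: the matching used to fill a block at level $i$ is obtained by combining a fresh degree-$3$ overlapping matching in \emph{every} level up through $i$ over the entire string $y$, so the cumulative object is a degree-$3i$ overlapping matching. Applying the substring-counting bound with $d=3i$, the total number of wrong matches accumulated is at most $3i(2k_1+3k_2+t/\log n)$, and every currently-wrongly-recovered block of $x$ must be explained by such a wrong match (a block filled by a correct match equals the corresponding block of $x$). This yields the claimed bound.

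The step I expect to be the main obstacle is making the degree-accounting rigorous — i.e.\ precisely justifying that ``combining degree-$3$ matchings across $i$ levels'' yields something that behaves like a degree-$3i$ overlapping matching for the purpose of the wrong-match count, even though the blocks at different levels have different lengths $b_1 > b_2 > \dots > b_i$ and a wrong match at a coarse level need not localize to a single wrong match at a finer level. The clean way around this is to not literally multiply matchings, but rather to directly bound, at level $i$, the number of length-$b_i$ substrings of $y$ that differ from all length-$b_i$ substrings of $x$ and that are used (with multiplicity at most $3$ per bit in this level's matching, and at most $3$ per level, hence at most $3i$ total across levels that contributed a fill to a currently-live block) — and then observe that a block of $\tilde x$ is wrong only if its current contents came from such a bad substring at the level it was last filled. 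Since each of the $\le 2k_1+3k_2+t/\log n$ ``damage sites'' in $y$ contributes at most $3$ bad substrings per level and there are $i$ levels, the total is $\le 3i(2k_1+3k_2+t/\log n)$. I would also need to double check the corner case where a block was filled correctly at some early level and then (because a $*$-neighbor got filled, changing the hash-detection outcome) is re-examined — but a correctly filled block has the correct hash value equal to $v[i]$, so it is never placed in $S$ and never re-filled, which keeps the bookkeeping consistent.
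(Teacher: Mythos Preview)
Your core argument is correct and essentially identical to the paper's: a wrongly recovered block at level $i$ must have been filled (at the level it was last touched) with a length-$b_i$ substring of $y$ that is not a substring of $x$ (forced by collision-freeness of $h_i$), and the cumulative assignment over levels $1,\dots,i$ decomposes into $3i$ non-overlapping matchings (three from each $w_j$, $j\le i$, after subdividing level-$j$ intervals into length-$b_i$ pieces), each contributing at most $2k_1+3k_2+t/\log n$ such ``new'' intervals. The induction framing, the side-argument for recovering $v[i]$ from $z[i]$, and the bound on unmatched blocks you say you ``need enough of here'' are all unnecessary for this lemma: the hypothesis already hands you the correct $v[1],\dots,v[i]$, and the unmatched-block count is orthogonal to the wrongly-recovered count and is handled separately in Lemma~\ref{numofucblks}.
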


\begin{proof}

Consider the matching $w^*$ corresponding to the current recovering of $x$ after $i$ levels, i.e., this matching is generated at level 1 and adjusted level by level. In level $j$, we first use hash values to test every block to see if it is correctly recovered. For wrongly recovered blocks we delete their corresponding matches. Then for remaining wrongly recovered blocks and unrecovered blocks, we compute a matching $w_j$ for them, and add all matches in $w_j$ to $w^*$.


For $w_j, j\leq i$, after level $i$, the number of wrongly recovered blocks in level $i$ caused by (the remaining   part of) $w_j$ is at most $ 3( 2k_1 + 3k_2 + \frac{t}{\log n}) $.

This is because in $w_j$ is constructed by Construction \ref{2/3-approxalgo}, which is a union of $3$ matchings.
Each matching of them is non-overlapping. We only need to show that $w_j$, after eliminating detected wrong pairs in these $i$ levels, contains at most $ 2k_1 + 3k_2 + \frac{t}{\log n}$ wrong matches between $x$'s and $y$'s blocks in the $i$-th level. To see this, first note that these matches' $y$ intervals are only from blocks which are modified from $x$'s blocks or newly inserted. For each block-insertion of $t_j$ bits, it can contribute at most $\lceil t_j/b_i \rceil +1 $ wrong matches.   Each block-deletion can contribute at most $2$ wrong matches. So totally block insertions/deletions can cause $ \sum_{j=1}^{k_1}( \lceil t_j/b_i \rceil +1)  \leq  2 k_1 +  t/b_i$ wrong matches.
On the other hand, $k_2$ block-transpositions can contribute at most $3 k_2$ wrong matches, because $1$ block-transposition can only cause $1$ wrong match when deleting the block and inserting the block to its destination may contribute $2$ wrong matches.
Hence the total number wrong matches is at most $2k_1 + 3k_2 + t/b_i$.

Since there are $i$ matchings $w_1, \ldots, w_i$, each containing $3$ non-overlapping matchings, the number of wrongly recovered blocks remaining in $w^*$ is at most $ 3i( 2k_1 + 3k_2 + \frac{t}{\log n}) $.
%
%
%

\end{proof}

\begin{lemma}

\label{numofucblks}

For every $i$,  if $v[1], \ldots, v[i ]$ are correctly recovered, then in level $i $, the number of unrecovered blocks is at most $  36 i (k + \frac{t}{\log n})  $.

\end{lemma}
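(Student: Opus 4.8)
The plan is to prove the bound by induction on $i$, formalizing the recursion sketched in the technical overview. Write $\mu = k + t/\log n$ and $\kappa = 2k_1 + 3k_2 + t/\log n$; note $\kappa \le 3k + t/\log n \le 3\mu$. I will invoke two facts already proved (both valid under the standing hypothesis that the relevant $v[\cdot]$'s are correctly recovered): Lemma~\ref{dBTprotocolNWM}, bounding the number of wrongly recovered blocks at the end of level $i-1$ by $3(i-1)\kappa$; and Lemma~\ref{dBTprotocolMatchingLen}, giving $|w_i| \ge \tfrac23(|S| - \kappa)$. The base case $i = 1$ is trivial: Bob's string is divided into $l_1 = n/b_1 = 36\mu$ blocks, so at most $36\mu = 36\cdot 1\cdot\mu$ of them can be unrecovered.

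For the inductive step ($i \ge 2$), assume the end of level $i-1$ leaves at most $36(i-1)\mu$ unrecovered blocks. When level $i$ begins, every level-$(i-1)$ block is split in two: a correctly recovered block splits into two correctly recovered blocks, an unrecovered one into two unrecovered blocks, and a wrongly recovered one into at most two wrongly recovered blocks. Hence at the start of level $i$ there are at most $2\cdot 36(i-1)\mu$ unrecovered level-$i$ blocks and, by Lemma~\ref{dBTprotocolNWM}, at most $2\cdot 3(i-1)\kappa = 6(i-1)\kappa$ wrongly recovered ones. The set $S$ that Bob re-matches is exactly the blocks that are unrecovered or detected-wrong, i.e.\ whose hash under $h_i$ disagrees with the correctly recovered $v[i]$; a correctly recovered block is never detected-wrong, because $h_i$ is collision free for $x$ and so agrees with $v[i]$ there. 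Therefore $S$ is contained in the union of the unrecovered and wrongly recovered blocks, and $|S| \le 72(i-1)\mu + 6(i-1)\kappa$.

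It remains to bound the blocks left unrecovered at the end of level $i$. Since the $x$-blocks appearing in $w_i$ are distinct and all lie in $S$, and every block of $S$ matched by $w_i$ receives a value from $y$, the unrecovered blocks at the end of level $i$ are precisely those of $S$ not matched by $w_i$, of which there are $|S| - |w_i|$. Applying Lemma~\ref{dBTprotocolMatchingLen},
\[
|S| - |w_i| \;\le\; |S| - \tfrac23(|S| - \kappa) \;=\; \tfrac13|S| + \tfrac23\kappa \;\le\; \tfrac13\big(72(i-1)\mu + 6(i-1)\kappa\big) + \tfrac23\kappa ,
\]
and using $\kappa \le 3\mu$ this is at most $24(i-1)\mu + 6(i-1)\mu + 2\mu = (30i - 28)\mu \le 36i\mu$, which closes the induction.

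The only delicate parts are the level-transition bookkeeping: correctly carrying the factor of two from splitting, identifying $S$ as a subset of (unrecovered)~$\cup$~(wrongly recovered) — which is precisely where collision-freeness of $h_i$ for $x$ is used, so that Bob never needlessly re-matches a correctly recovered block — and the observation that the post-level-$i$ unrecovered count equals $|S| - |w_i|$ rather than something larger. With these in hand the arithmetic closes with ample slack (the governing inequality being $\tfrac13\cdot 72 = 24 < 36$), so there is no real obstacle beyond getting the constants and the refinement to level-$i$ granularity right; the whole proof is just the recursion $u_i \le \tfrac13|S_i| + \tfrac23\kappa$ unrolled against the inductive hypothesis.
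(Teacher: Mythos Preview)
Your proof is correct and follows essentially the same approach as the paper's: induction on $i$, using Lemma~\ref{dBTprotocolNWM} to bound the wrongly recovered blocks and Lemma~\ref{dBTprotocolMatchingLen} to bound $|S|-|w_i|$, with the same factor-of-two bookkeeping for the level transition. Two minor cosmetic points: the fact that a correctly recovered block is never detected-wrong does not actually use collision-freeness (it is immediate from $h_i(\tilde{x}[\cdot]) = h_i(x[\cdot])$ when the block contents are equal); and the unrecovered blocks at the end of level $i$ are a \emph{subset} of $S$ not matched by $w_i$ (detected-wrong blocks in $S$ that go unmatched remain wrongly recovered, not unrecovered), so ``precisely'' should be ``at most''---but your upper bound $|S|-|w_i|$ is unaffected.
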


\begin{proof}

We use induction.
%

For the base case $i=1$, all blocks of $x$ are unknown to Bob. So the number is at most $l_1 = 18\cdot 2^1(k+ t/\log n) = 36(k+t/\log n)$.

For the induction case, assume the number of unrecovered blocks is at most $ 36j(k+t/\log n)$, for all $j\leq i$.
By Lemma \ref{dBTprotocolNWM}, for level $i$ (after the matching is computed), the number of wrongly recovered blocks of $x$ is at most
$$ 3i(2k_1 + 3k_2 + \frac{t}{\log n}) .$$
So at level $i+1$, the number of wrongly recovered blocks is at most doubled, i.e.
$$ | \{ j \in [n] \mid h_{i+1}(\tilde{x}[1+ (j-1)b_{i+1}, 1+ j b_{i+1} )) \neq v[i][j] \} | \leq 6i(2k_1 + 3k_2 + \frac{t}{\log n}) \leq 18i(k+t/\log n).$$

Since Bob has the correct $v[i+1]$, he can detect at most all the wrong blocks.
So $|S|\leq (72+18)i (k+t/\log n) = 90i (k+t/\log n)$.

By Lemma \ref{dBTprotocolMatchingLen}, the number of unrecovered blocks is at most $|S| - |w_i| \leq 1/3|S| + (2k_1+3k_2  +t/\log n) \leq 30(i+1) (k+t/\log n)$.

%

\end{proof}

\begin{lemma}
\label{dBTprotocolcorrect}
Bob can recover $x$ correctly.

\end{lemma}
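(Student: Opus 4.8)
The plan is to prove Lemma~\ref{dBTprotocolcorrect} by establishing, by induction on the level number $i$, that Bob correctly recovers each hash-value vector $v[i]$, and then concluding that the final decoding step returns $x$.

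First I would set up the induction. The base case is $i=1$: Bob receives $v[1]$ directly from Alice, so it is trivially correct. For the inductive step, assume $v[1],\ldots,v[i]$ have all been correctly recovered by Bob. By Lemma~\ref{numofucblks}, after processing level $i$ the number of unrecovered blocks of $\tilde x$ is at most $36 i (k + t/\log n)$, and by Lemma~\ref{dBTprotocolNWM} the number of wrongly recovered blocks is at most $3i(2k_1+3k_2+t/\log n) \le 18 i (k+t/\log n)$. When both parties move to level $i+1$ and split every block in two, each of these "bad" blocks becomes at most two bad blocks, so the total number of positions $j$ at which $h_{i+1}(\tilde x[1+(j-1)b_{i+1}, 1+jb_{i+1}))$ can differ from $v[i+1][j]$ is at most $2\cdot(36i + 18i)(k+t/\log n) = O(i)(k+t/\log n)$. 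This counts Hamming errors in the string of hash values that Bob feeds into the decoder. Since Alice computed $z[i+1]$ using the code of Theorem~\ref{agcode} with distance at least $180(k+t/\log n)(i+1)$, which exceeds twice the number of errors just bounded, the unique decoder recovers $v[i+1]$ correctly. This closes the induction, so $v[i]$ is correct for all $i \in [L]$.

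Next I would handle the termination. Having established that $v[L]$ is correct and, via Lemma~\ref{numofucblks} and Lemma~\ref{dBTprotocolNWM} applied at level $L$, that the number of blocks of $\tilde x$ at level $L$ that disagree with the corresponding blocks of $x$ (whether unrecovered or wrongly recovered) is at most $O(L)(k+t/\log n)$, I would observe that $z_{\mathrm{final}}$ was computed by Alice using the code of Theorem~\ref{agcode} with distance at least $90(k+t/\log n)L$. Since the number of erroneous blocks at level $L$ is less than half this distance, Bob's final application of the Theorem~\ref{agcode} decoder to the level-$L$ blocks of $\tilde x$ together with $z_{\mathrm{final}}$ recovers $x$ exactly. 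Hence Bob returns $x$, as claimed. I would also remark that if $k_1, k_2 > \alpha n/\log n$ or $t > \beta n$ the protocol has Alice send $x$ outright, so correctness is immediate in that degenerate case.

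The main obstacle is bookkeeping the constants so that at every level the number of Hamming errors in the hash-value string is strictly less than half the distance of the code used, and similarly for $z_{\mathrm{final}}$; this requires carefully chaining the bounds from Lemma~\ref{dBTprotocolNWM} and Lemma~\ref{numofucblks}, accounting for the doubling of bad blocks when blocks are split between levels, and checking that the choices $b_i = n/(18\cdot 2^i(k+t/\log n))$, distance $180(k+t/\log n)i$ at level $i$, and distance $90(k+t/\log n)L$ for the final code are mutually consistent. Once the arithmetic of "errors $<$ distance$/2$" is verified at each stage, the unique decoding guarantee of Theorem~\ref{agcode} does the rest and the induction goes through cleanly.
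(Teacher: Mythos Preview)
Your proposal is correct and follows essentially the same approach as the paper's proof: induction on $i$ to show each $v[i]$ is recovered, using Lemmas~\ref{dBTprotocolNWM} and~\ref{numofucblks} to bound the bad blocks at level $i$, doubling when passing to level $i{+}1$, and comparing to the code distance; then the same two lemmas at level $L$ together with the distance of $z_{\mathrm{final}}$ finish the argument. One small point on the bookkeeping you flagged: your bound $3i(2k_1+3k_2+t/\log n)\le 18i(k+t/\log n)$ is valid but too loose---with it the doubled count $108i$ does not stay below $90(i{+}1)$ for large $i$; using the sharper $2k_1+3k_2\le 3k$ gives $9i(k+t/\log n)$, and then $2(36i+9i)=90i<90(i{+}1)$, which is exactly the inequality the paper uses.
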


\begin{proof}

We use induction to show that for every $i\in [L]$, $v[i]$ can be computed correctly by Bob.

For the first level, $v[1]$ is directly received from Alice.

Assume $v[1],\ldots, v[i-1]$ can be computed correctly.
By Lemma \ref{numofucblks}, the number of unrecovered blocks after level $i-1$ is at most $  36(i-1)(k+t/\log n) $. By Lemma \ref{dBTprotocolNWM}, the number of wrongly recovered blocks is at most $ 9(i-1)(k+t/\log n) $.
So the total number of wrongly recovered and unrecovered blocks is at most
$$ 2\times ( 36(i-1)(k+t/\log n) + 9(i-1)(k+t/\log n) )  \leq 90(i-1) (k+t/\log n) <  90i (k+t/\log n).$$
Note that with the redundancy $z[i]$, its corresponding code has distance at least $180(k+t/b_i) i$. So Bob can recover $v[i]$ correctly by Theorem \ref{agcode}.

As a result, at level $L$. By Lemma \ref{dBTprotocolNWM}, the number of wrongly recovered blocks   is at most $ 3L(2k_1 + 3k_2 + \frac{t}{b_L})$. By Lemma \ref{numofucblks} the   number of   unrecovered blocks, is at most $ 36L(k+t/\log n) $.  So the total number of wrongly recovered and unrecovered blocks is at most $45L(k+t/\log n)$. Note that the code distance corresponding to the redundancy $z_{\mathsf{final}}$ is at least $90(k+ t/b_L) L$. So all blocks of $x$ can be recovered correctly by using the decoding from Theorem \ref{agcode}.

\end{proof}

\begin{lemma}
\label{dBTprotocolcc}
The communication complexity is   $ O( (k  \log n +t) \log^2 \frac{n}{k\log n + t} )$.

\end{lemma}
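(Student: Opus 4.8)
The plan is to simply add up the sizes of the four objects Alice transmits: the hash function descriptions $h=(h_1,\dots,h_L)$, the redundancies $z=(z[1],\dots,z[L])$, the first level's hash values $v[1]$, and the final redundancy $z_{\mathrm{final}}$. First I would recall that by Theorem~\ref{goodhash} each $h_i$ has description length $O(\log n)$, so all $L=O(\log\frac{n}{k\log n+t})$ of them contribute $O(\log n \cdot \log\frac{n}{k\log n+t})$ bits, which is a lower-order term. Next, $v[1]$ consists of $l_1 = O(k+\frac{t}{\log n})$ hash values, each of length $b^\ast = \Theta(\log n)$, for a total of $O((k+\frac{t}{\log n})\log n) = O(k\log n + t)$ bits, again a lower-order term. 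Likewise $z_{\mathrm{final}}$ has $\Theta((k+\frac{t}{\log n})\log L)$ symbols from $\{0,1\}^{b_L}$ with $b_L = O(\log n)$, contributing $O((k\log n+t)\log L)$ bits, which is subsumed by the main term.

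The dominant contribution comes from $z=(z[1],\dots,z[L])$. By construction, $z[i]$ consists of $\Theta((k+\frac{t}{\log n})i)$ symbols, each from $\{0,1\}^{b^\ast}$ with $b^\ast=\Theta(\log n)$, so $z[i]$ has $O((k\log n+t)\,i)$ bits. Summing over $i=1,\dots,L$ gives
\[
\sum_{i=1}^{L} O\!\left((k\log n+t)\,i\right) = O\!\left((k\log n+t)\,L^2\right).
\]
Substituting $L=O(\log\frac{n}{k\log n+t})$ yields $O\!\left((k\log n+t)\log^2\frac{n}{k\log n+t}\right)$, which dominates the $v[1]$, $z_{\mathrm{final}}$, and hash-description terms, establishing the claimed bound. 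I would also note the edge case: if $k_1$, $k_2 > \alpha n/\log n$ or $t>\beta n$ then Alice sends her whole string of $n$ bits, but in that regime $k\log n + t = \Omega(n)$ so $O((k\log n+t)\log^2(\cdots)) = O(n\log^2 1)=O(n)$ as well (or one treats this as a trivial separate case), so the bound still holds.

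The only mild subtlety — and the closest thing to an obstacle — is being careful that the $L$ in $z_{\mathrm{final}}$'s parameter $\Theta((k+\frac{t}{\log n})\log L)$ and the choice of code distances are consistent with the arithmetic above; but since all of these are polylogarithmic in the main quantities and strictly dominated by the $L^2$ term, the bookkeeping is routine. I do not anticipate any real difficulty beyond keeping track of which terms are lower order.
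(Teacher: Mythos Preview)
Your proposal is correct and follows essentially the same approach as the paper's own proof: decompose the transmitted data into $h$, $z$, $v[1]$, and $z_{\mathrm{final}}$, bound each piece separately, and observe that the dominant term is $\sum_{i=1}^{L}|z[i]|=O((k\log n+t)L^2)$. The paper's argument is the same line-by-line bookkeeping you describe (it even computes $|z_{\mathrm{final}}|$ with an $L$ rather than a $\log L$, giving $O((k\log n+t)L)$, still subsumed by the $L^2$ term as you note).
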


\begin{proof}
For the $i$-th level of Alice, $|z[i]| = \Theta(k+\frac{t}{\log n}) i b^* = \Theta((k \log n+t)i )$. So
$$|z|   = \sum_{i=1}^{L} |z[i]|  = \sum_{i=1}^L O\left( (k \log n+t)i  \right) = O(k\log n+t)L^2.$$

Also $|z_{\mathrm{final}}| = O(k+ \frac{t}{b_L}) \cdot L \cdot O(\log n) = O\left( (k \log n + t)  L \right)$ by Theorem \ref{agcode}.

For every $i\in [L]$, $|h_i| = O(\log n)$ by Theorem \ref{goodhash}. So $|h| = O(\log n)L $.

The length of $v[1]$ is $ l_1 O(\log n) =\frac{n}{b_1} O(\log n) = O( k+ \frac{t}{\log n}  ) \cdot O(\log n)  = O(k \log n + t) $.

Since $L = \log \frac{n}{k\log n + t}$,
  the overall communication complexity is $ O\left( (k  \log n +t) \log^2 \frac{n}{k\log n + t}   \right)$.

\end{proof}

\begin{lemma}
\label{dBTprotocoltc}
Both Alice and Bob's algorithms are in polynomial time.

\end{lemma}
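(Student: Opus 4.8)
The plan is to walk through each component of Construction \ref{dBTprotocol} and observe that every step is a polynomial-time operation, then sum up. First I would handle Alice's side. For each of the $L = O(\log\frac{n}{k\log n+t}) = O(\log n)$ levels, she constructs the hash function $h_i$ by invoking Theorem \ref{goodhash}, which runs in $\poly(n)$; she then evaluates $h_i$ on each of the $l_i \le n$ blocks, each evaluation being $\poly(n)$ again by Theorem \ref{goodhash}; and she computes the redundancies $z[i]$ and $z_{\mathrm{final}}$ via the algebraic-geometry code of Theorem \ref{agcode}, whose encoding is polynomial-time in the (polynomially bounded) codeword length. Since $L$ is logarithmic and each level costs $\poly(n)$, Alice's total running time is $\poly(n)$.

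Next I would handle Bob's side, which is the more delicate part since it involves the approximate matching computation. For each level $i$, Bob first runs the decoding of Theorem \ref{agcode} on the (corrupted) hash-value sequence together with $z[i]$ to recover $v[i]$; the AG code has polynomial-time unique decoding, and the block length here is $\poly(n)$, so this is $\poly(n)$. He then forms the set $S$ by comparing hash values, which is $\poly(n)$ work. The key step is computing the matching $w_i$ via Construction \ref{2/3-approxalgo}: as already observed in the text immediately following that construction, Construction \ref{2/3-approxalgo} simply invokes Construction \ref{1/3-approxalgo} three times, and the greedy algorithm of Construction \ref{1/3-approxalgo} halts in polynomial time (each iteration of its Step 2--3 either increases $|w|$ by one or terminates, and $|w| = O(n/p)$; each search for a valid pair $(i,j)$ ranges over $O(n)$ choices of $i$ and $O(n)$ choices of $j$, with hash evaluations being $\poly(n)$). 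Hence each $w_i$ is computed in $\poly(n)$ time. Updating $\tilde{x}$ along the matched pairs is linear. Finally, the $L$-th level decoding of $\tilde{x}$ against $z_{\mathrm{final}}$ is again polynomial-time by Theorem \ref{agcode}.

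Summing over the $L = O(\log n)$ levels, both Alice's and Bob's algorithms run in $\poly(n) \cdot O(\log n) = \poly(n)$ time, which proves the lemma. I do not anticipate a genuine obstacle here; the only point requiring a moment of care is confirming that the greedy matching in Construction \ref{1/3-approxalgo} terminates after polynomially many iterations rather than looping, and this is handled by the monovariant argument already noted in the proof of Lemma \ref{1/3-approx} (each non-terminating pass strictly increases the matching size, which is bounded by the number of blocks of $x$).
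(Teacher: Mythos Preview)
Your proposal is correct and follows essentially the same approach as the paper's own proof: both argue level-by-level that hash construction/evaluation, AG-code encoding/decoding, and the approximate matching subroutine are each polynomial-time, and then sum over the $O(\log n)$ levels. Your write-up is in fact more detailed than the paper's (which simply cites Theorem~\ref{goodhash}, Theorem~\ref{agcode}, and the polynomial-time remark following Construction~\ref{2/3-approxalgo}), but the structure and the key observation about termination of the greedy matching are identical.
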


\begin{proof}

For Alice's algorithm, let's consider the $i$-th level. Constructing $h_i$ and evaluating $h_i $  takes polynomial time by Theorem \ref{goodhash}. Computing the redundancy $z[i]$ takes polynomial time by Theorem \ref{agcode}. So the overall running time is polynomial.

For Bob's algorithm, we still consider the $i$-th level. By Theorem \ref{agcode}, getting $v[i]$ takes polynomial time. It takes linear time to visit every block and check if their hash value is equal to the corresponding entry of $v[i]$. By Lemma \ref{2/3-approx}, computing the maximum matching takes polynomial time. So the overall running time is also polynomial.

\end{proof}

\begin{theorem}

\label{dBTdocexc}
There exists an explicit binary   document exchange protocol, having communication complexity  $ O( (k  \log n +t) \log^2 \frac{n}{k\log n + t} )$, time complexity $\poly(n)$, where $n$ is the input size and $k=k_1 + k_2 $, for $(k_1, t)$ block-insertions/deletions and $k_2$ block-transpositions, $   k_1, k_2 \leq \alpha n/\log n, t \leq \beta n$, for some constant $\alpha, \beta$.

\end{theorem}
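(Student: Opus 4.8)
The plan is simply to instantiate the document exchange protocol of Construction~\ref{dBTprotocol} and verify that it meets the three requirements of the theorem: Bob recovers $x$, the total number of bits Alice sends is $O((k\log n+t)\log^2\frac{n}{k\log n+t})$, and both parties run in $\poly(n)$ time. Recall the structure of the protocol: it runs for $L=O(\log\frac{n}{k\log n+t})$ levels, in level $i$ both parties view their strings as divided into $l_i$ blocks of length $b_i=\Theta(n/(2^i(k+t/\log n)))$ (each level halves the previous block size), Alice attaches a collision-free hash function $h_i$ from Theorem~\ref{goodhash} together with the redundancy $z[i]$ of the sequence of hash values under an algebraic-geometric code (Theorem~\ref{agcode}) of distance $180(k+t/\log n)i$, and Bob, after decoding the true hash values, recomputes the blocks he still misses or has wrong by running the $2/3$-approximate, degree-$3$ non-overlapping, non-monotone matching of Construction~\ref{2/3-approxalgo}. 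A final redundancy block $z_{\mathrm{final}}$ cleans up level $L$.

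For correctness I would argue by induction on the level that $v[i]$ is recovered exactly by Bob, which is precisely Lemma~\ref{dBTprotocolcorrect}. The induction step rests on two quantitative invariants, both proved earlier: the number of wrongly recovered blocks after level $i$ is $O(i(k+t/\log n))$ (Lemma~\ref{dBTprotocolNWM}), and the number of still-unrecovered blocks is $O(i(k+t/\log n))$ (Lemma~\ref{numofucblks}). These in turn follow from the existence of a near-perfect non-overlapping matching between $x_S$ and $y$ (Lemma~\ref{dBTprotocolmaxMatchingilen}) together with the $2/3$-approximation guarantee of Lemma~\ref{2/3-approx} (hence Lemma~\ref{dBTprotocolMatchingLen}), and crucially on the collision-free property of $h_i$, which forces every wrong match to have its $y$-interval lie inside a region the adversary actually altered. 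Since the code producing $z[i]$ has distance $180(k+t/\log n)i$, which exceeds twice the number of erroneous entries in the hash-value sequence Bob feeds to the decoder, Bob gets $v[i]$ right; applying the argument through level $L$ and then decoding with $z_{\mathrm{final}}$ (distance $90(k+t/\log n)L$) recovers $x$ in full.

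The communication bound is a direct summation, recorded in Lemma~\ref{dBTprotocolcc}: $|z[i]|=\Theta((k\log n+t)i)$ because each of the $\Theta((k+t/\log n)i)$ redundancy symbols has $b^*=\Theta(\log n)$ bits, so $\sum_{i\le L}|z[i]|=O((k\log n+t)L^2)$; and $|z_{\mathrm{final}}|=O((k\log n+t)L)$, $|h|=O(L\log n)$, $|v[1]|=O(k\log n+t)$ are all of lower order. Substituting $L=O(\log\frac{n}{k\log n+t})$ yields the claimed bound. The running-time claim is Lemma~\ref{dBTprotocoltc}: constructing and evaluating each $h_i$ is $\poly(n)$ by Theorem~\ref{goodhash}, encoding and decoding the AG code is $\poly(n)$ by Theorem~\ref{agcode}, and the greedy matching of Construction~\ref{1/3-approxalgo} --- run three times inside Construction~\ref{2/3-approxalgo} --- halts in $\poly(n)$ steps since each successful step enlarges the matching and the matching size is $O(n/b_i)$.

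The genuinely delicate part --- already carried out in Lemmas~\ref{dBTprotocolmaxMatchingilen}--\ref{numofucblks} --- is the bookkeeping that keeps the error counts linear in $i(k+t/\log n)$ despite the fact that the matching used in level $i$ is only a degree-$3i$ overlapping matching (three non-overlapping layers per level, accumulated over $i$ levels). The key observation is that within a single non-overlapping layer, a block-insertion of $t_j$ bits creates at most $\lceil t_j/b_i\rceil+1$ fresh substrings of $y$, a block-deletion at most $2$, and a block-transposition at most $3$, so each layer carries at most $2k_1+3k_2+t/\log n=O(k+t/\log n)$ wrong matches; multiplying by the $3i$ layers and coupling with the doubling of block counts between levels closes the induction, provided the code distances $180(k+t/\log n)i$ and $90(k+t/\log n)L$ were chosen (as they were) to dominate twice these quantities. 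Once these lemmas are in hand, Theorem~\ref{dBTdocexc} is just their conjunction.
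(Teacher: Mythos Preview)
Your proposal is correct and follows exactly the paper's approach: the paper's own proof is a one-line invocation of Construction~\ref{dBTprotocol} together with Lemmas~\ref{dBTprotocolcorrect}, \ref{dBTprotocolcc}, and \ref{dBTprotocoltc}, and you have simply unpacked those lemmas (and their supporting Lemmas~\ref{dBTprotocolmaxMatchingilen}--\ref{numofucblks} and \ref{2/3-approx}) in more detail than the paper itself does at this point.
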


\begin{proof}
It follows from Construction \ref{dBTprotocol}, Lemma \ref{dBTprotocolcorrect}, Lemma \ref{dBTprotocolcc} and Lemma \ref{dBTprotocoltc}.

\end{proof}

\section{Document exchange for block edit errors of a $B$-distinct string}
\label{randBTProtocol}

\begin{definition}
	For any integer $B$, we say a string $A = A[1], A[2], \cdots, A[n]$ is $B$-distinct, if for any $i \neq j \in [n-B+1]$, $A[i, i+B) \neq A[j, j+B)$.
\end{definition}

\begin{definition}
	We say a string $A = A[1], A[2], \cdots, A[n]$ is a non-repetitive string, if for any $i \in [n-1]$, $A[i] \neq A[i+1]$.
\end{definition}

In this section we prove the following theorem.

\begin{theorem}\label{randmainthm}
	There exists an integer $B=\Theta(\log n)$ such that for any $B$-distinct binary string, there is a polynomial time one way document exchange protocol for $(k, t)$ block edit errors  with communication cost $O(k \log n \log \log \log n + t)$ bits.
\end{theorem}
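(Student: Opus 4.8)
The plan is to follow the two-stage strategy sketched in the overview, specializing it to a $B$-distinct string $x$ so that Stage~I replaces the generic hashing-based partition of the earlier protocol with a cheaper parsing-tree partition. First I would fix $B = \Theta(\log n)$ large enough so that a uniformly random string is $B$-distinct with high probability (standard union bound over the $O(n^2)$ pairs of length-$B$ windows); the hypothesis that $x$ is $B$-distinct is then exactly what lets us hash length-$B$ windows without collisions, so every node of the parsing tree gets a label that uniquely identifies its underlying substring. In Stage~I, I would build a \emph{partial parsing tree} in the style of Cormode--Muthukrishnan \cite{CormodeM07}: repeatedly, within each level, detect repetitive runs $a^\ell$, apply two rounds of alphabet reduction to the non-repetitive parts (bringing the alphabet down to $O(\log\log n)$ rather than a constant, which is what saves the $\log^* n$ factor), pick landmarks (non-adjacent local extrema), and group into blocks of size $2$ or $3$; but stop growing any subtree once it covers more than a threshold $T = \Theta(\log n)$ leaves, marking such nodes \emph{finish} and the nodes squeezed between two finish nodes \emph{frozen}, then merging each frozen node into an adjacent finish node. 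This yields a partition of $x$ into blocks of length $\Theta(\log n \cdot \log\log n)$; applying the same construction once more with threshold $T' = \Theta(\log\log n)$ gives blocks of length $\Theta(\log n (\log\log n)^2)$. The locality of parsing — each tree node depends on only $O(1)$ adjacent nodes per level, hence on $O(\log T)$ adjacent blocks overall — is what I would use to prove that if $y$ arises from $x$ by $(k,t)$ block edit errors, then the induced partition of $y$ differs from that of $x$ by only $(k, O(t/(TT') + k\log T'))$ block edit errors over the larger block-alphabet.

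Given that reduction, Stage~I of the protocol is: Alice forms a vector $V$ indexed by $B$-prefixes of blocks, storing in each entry the $B$-prefix of the next block together with the current block's length (each entry $O(\log n)$ bits, since blocks have polylogarithmic length); Bob forms $V'$ the same way from his own parsing of $y$. Because the two partitions differ in few blocks, $V$ and $V'$ differ in $O(t/(TT') + k\log T') = O(k + t/\log^2 n)$-ish entries, so a Reed--Solomon sketch (or the AG code of Theorem~\ref{agcode}) of size $O((k + t/(TT'))\log n) = O(k\log n\log\log\log n + t)$ lets Bob recover $V$, and hence Alice's block partition, by matching on $B$-prefixes. (Correctness of the $B$-prefix matching is exactly where $B$-distinctness is used again: distinct blocks have distinct $B$-prefixes, provided $B$ is at most the minimum block length, which holds since blocks have length $\Omega(\log n)$.) Then in Stage~II I would run the deterministic document exchange protocol of Theorem~\ref{dBTdocexc} from Section~\ref{BTprotocol}, but starting from block size $\Theta(\log n (\log\log n)^2)$ rather than $\Theta(n/(k\log n + t))$; halving down to $O(\log n)$ now takes only $L' = O(\log((\log\log n)^2)) = O(\log\log\log n)$ levels, so the Stage~II sketch, by the computation in Lemma~\ref{dBTprotocolcc} with $L'$ in place of $L$, is $O((k\log n + t)(L')^2)$ — I need to check this is $O(k\log n\log\log\log n + t)$, which works if the per-level cost is $O(k\log n + t)$ and the $(L')^2$ is absorbed, i.e.\ I should run Stage~II with the per-level code distance $\Theta((k + t/\log n))$ constant across levels rather than growing like $i$, relying on the $B$-distinctness / small-block structure to keep the wrongly-recovered count at $O(k + t/\log n)$ per level instead of $O(i(k + t/\log n))$.

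I would then add up: Stage~I costs $O(k\log n\log\log\log n + t)$, Stage~II costs $O(k\log n\log\log\log n + t)$, and the final flat sketch at block size $O(\log n)$ costs another $O((k + t/\log n)\log n) = O(k\log n + t)$; so the total communication is $O(k\log n\log\log\log n + t)$. Polynomial running time is immediate: parsing-tree construction, alphabet reduction, landmark selection, RS/AG encoding and decoding, and the Stage~II matching (Construction~\ref{2/3-approxalgo}, polynomial by Lemma~\ref{2/3-approx}) are all polynomial, and there are only $O(\log\log\log n)$ levels. The one-way property holds because Alice's entire message (parsing sketches for both thresholds, Stage~II hash descriptions and redundancies, final sketch) is a function of $x$ alone.

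The main obstacle I expect is the error-propagation bookkeeping through the two layers of partial parsing, i.e.\ proving precisely that $(k,t)$ block edit errors on $x$ become $(k, O(t/(TT') + k\log T'))$ block edit errors on the block-level string. Block \emph{transpositions} are the delicate case: a transposition that cuts inside a block, or that creates a new junction between two substrings of $x$, can spawn new repetitive runs or shift landmarks, and I must argue this disturbance stays confined to $O(\log T)$ blocks near each of the $O(1)$ cut points — using that the alphabet-reduction rule and the landmark rule at each level look only $O(1)$ symbols to each side, so the "sphere of influence" of a single cut grows geometrically but only over $O(\log T)$ levels before a finish node caps it. I also need to be careful that "finish"/"frozen" marking and the final merge step are themselves local, so that boundaries of the top-level blocks move by at most $O(\log T)$ blocks under an edit; handling the interaction of this merge with a transposition that lands near a frozen node is the fiddliest part, and is where I would spend most of the detailed argument.
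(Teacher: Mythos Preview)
Your Stage~I plan is essentially the paper's: two rounds of partial parsing with thresholds $T=\Theta(\log n)$ and $T'=\Theta(\log\log n)$, the locality lemma bounding block-level damage by $O(t/(TT')+k\log T')$, and a Reed--Solomon-style sketch on the $B$-prefix-indexed vector $V$. (Minor quantitative slip: the Stage~I block size in the paper is $T''=\Theta(\log n(\log\log n)^3\log\log\log n)$, not $\Theta(\log n(\log\log n)^2)$, and $|V\Delta V'|=O(k\log\log\log n+t/(TT'))$, not $O(k+\cdots)$; but your final Stage~I cost is right.)

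The real gap is in Stage~II. You propose to reuse Construction~\ref{2/3-approxalgo}, and then, noticing that this would give $O((k\log n+t)(L')^2)$, you assert that ``the per-level code distance $\Theta(k+t/\log n)$'' can be held constant ``relying on the $B$-distinctness.'' But Construction~\ref{2/3-approxalgo} is exactly what \emph{causes} the error count to grow like $i$: it produces a degree-$3$ overlapping matching each level, and the overlaps accumulate (this is the content of Lemma~\ref{dBTprotocolNWM}). $B$-distinctness does not repair that argument; it lets you \emph{replace} the matching procedure entirely. The paper's Stage~II (Construction~\ref{secondstage}) does two things differently: (i) the hash $h_i$ is simply ``output the first $B$ bits,'' which by $B$-distinctness is injective on substrings of $x$, not merely collision-free on content; and (ii) the matching is recomputed from scratch each level by a plain non-overlapping greedy pass over \emph{all} blocks (no $2/3$-approximation, no rematching only bad blocks). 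Because the first $B$ bits pin down a unique position in $x$, any wrong or missed match must sit on a $y$-substring that is not a verbatim copy from $x$, hence lies near one of the $k$ edit operations; this is what yields $O(k+t/b_i)$ bad blocks per level with no accumulation, and hence $\sum_{i=i^*}^{L}O((k+t/b_i)\log n)=O(k\log n\log\log\log n+t)$.

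So your plan becomes correct once you swap out Construction~\ref{2/3-approxalgo} in Stage~II for the ``first-$B$-bits hash plus fresh non-overlapping greedy'' of Construction~\ref{secondstage}, and argue the per-level bound directly from $B$-distinctness as above, rather than trying to cap the growth in Lemma~\ref{dBTprotocolNWM}.
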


Recall that, for a integer $B \ge 1$, and a string $A = A[1], A[2], \cdots, A[n]$,  its $B$-prefix is defined as the string $A[1, B]$.
The construction consists of two stages. In Stage I, we partition the string into small blocks. Alice then sends a short sketch to help Bob learn the partition and the $B$-prefix of each block. In Stage II, we modify the Stage II in \cite{CJLW18} to resist block edit errors.
\subsection{String Partition}
In Stage I, our string partition algorithm uses the string parsing techniques in \cite{CormodeM07}. For completeness, we include \emph{alphabet reduction} and \emph{landmark} in \cite{CormodeM07}.

\medskip
\noindent\textbf{Alphabet reduction}\cite[Sligtly modified]{CormodeM07}

Let $A = A[1], A[2], \cdots, A[n]$ be a string of length $n$, where each $A[i], i \in [n]$ is a symbol in an alphabet $\Sigma$. The alphabet reduction algorithm takes string $A$ as input, 
and outputs a string $A'$ with the same length,
where each symbol $A'[i]$ is computed as follows.
Take two fixed symbols $c_0, c_1$ from $\Sigma$. For the first symbol $A[1]$, if $A[1] = c_0$, set $A[0] = c_1$, otherwise set $A[0] = c_0$.
For each $i \in [n]$, represent $A[i]$ and $A[i-1]$ as binary integers. Let $l$ be the least significant bit in which $A[i]$ and $A[i-1]$ differ. Let $\mathsf{bit}(l, A[i])$ be the $l$-th least significant bit of $A[i]$. Then we define $A'[i] = (l, \mathsf{bit}(l, A[i]))$.

\begin{lemma}\cite[Lemma 1]{CormodeM07}
	For any $i$, if $A[i] \neq A[i+1]$, then $A'[i] \neq A'[i+1]$.
\end{lemma}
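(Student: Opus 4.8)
The plan is a direct proof by contradiction, unwinding the definition of the alphabet reduction map. Recall that alphabet reduction is applied to a non-repetitive string $A$, so $A[i-1]\neq A[i]$ for all $i\geq 2$, and the sentinel symbol $A[0]$ is chosen so that $A[0]\neq A[1]$; hence for every $i\in[n]$ the index $l(i)$ of the least significant bit in which $A[i]$ and $A[i-1]$ differ is well-defined, and $A'[i] = (l(i),\, \mathsf{bit}(l(i), A[i]))$.

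Fix an index $i$ with $A[i]\neq A[i+1]$ and suppose, toward a contradiction, that $A'[i] = A'[i+1]$. Matching the two coordinates of these pairs gives $l(i) = l(i+1)$ --- write $l$ for this common value --- and $\mathsf{bit}(l, A[i]) = \mathsf{bit}(l, A[i+1])$.

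I would then invoke the defining property of $l(i+1)$: it is, by definition, a bit position at which $A[i+1]$ and $A[i]$ (which is $A[(i+1)-1]$) differ, so $\mathsf{bit}(l(i+1), A[i+1]) \neq \mathsf{bit}(l(i+1), A[i])$. Substituting $l(i+1) = l$ yields $\mathsf{bit}(l, A[i+1]) \neq \mathsf{bit}(l, A[i])$, which directly contradicts the equality of the second coordinates obtained above. Therefore $A'[i]\neq A'[i+1]$, as claimed.

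There is no genuine obstacle here: once the definitions are unwound the argument is two lines, and the only point requiring a little care is confirming that each $l(i)$ is well-defined, which is exactly what non-repetitiveness of $A$ together with the sentinel convention for $A[0]$ guarantees. No calculation needs to be carried out.
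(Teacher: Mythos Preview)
Your proof is correct and is exactly the standard argument for this fact. Note that the paper does not actually prove this lemma at all; it is quoted verbatim from \cite{CormodeM07} and left without proof, so there is nothing further to compare against.
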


Note that the alphabet size of $A'$ is $2 \lceil \log |\Sigma| \rceil$. If we take the alphabet reduction twice, the alphabet size of the resulting string is at most $2 \log\log |\Sigma| + 4$.

\medskip
\noindent\textbf{Landmark}\cite[Slightly modified]{CormodeM07} Let $A$ be a non-repetitive string of length $n$. We take two passes on the string to find the landmarks. 
In the first pass, for each $i \in [3, n-1]$, we say $i$ is a landmark, if $A[i]$ is the local maximum, i.e. $A[i-1] < A[i] > A[i+1]$.
In the second pass, for each $i \in [3, n-1]$, if $A[i]$ is the local minimum, i.e. $A[i-1] > A[i] < A[i+1]$, and $i$ is not adjacent to any landmarks in the first pass, then we say $i$ is a landmark. 

\begin{lemma}\label{lemma:children-num}
Let $A \in \Sigma^*$ be a string. Suppose the landmarks of $A$ are $i_1, i_2, \cdots, i_{n'}$.
If we partition the string $A$ as $A[1, i_1), A[i_1, i_2), \cdots, A[i_{n'}, n]$.
 then the length of each substring is in the range of $[2, |\Sigma|+1]$.
\end{lemma}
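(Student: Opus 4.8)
The plan is to show two separate bounds on the length of each substring $A[i_{j}, i_{j+1})$ in the partition: a lower bound of $2$ coming from the fact that two landmarks can never be adjacent, and an upper bound of $|\Sigma|+1$ coming from the fact that a run without any landmark must be monotone, hence short.

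\textbf{Lower bound.} First I would argue that no two landmarks are consecutive. If $i$ and $i+1$ were both landmarks found in the first pass, then $A[i]$ and $A[i+1]$ would both be strict local maxima, which is impossible since a local maximum at $i$ forces $A[i] > A[i+1]$ and a local maximum at $i+1$ forces $A[i+1] > A[i]$. If one is a first-pass landmark (local max) and the other a second-pass landmark (local min), then they are adjacent, contradicting the explicit requirement in the second pass that a second-pass landmark not be adjacent to any first-pass landmark. Finally two consecutive second-pass landmarks would both be strict local minima, again impossible by the same monotonicity argument. Hence consecutive landmarks differ by at least $2$, so each substring $A[i_j, i_{j+1})$ has length $\ge 2$; the boundary pieces $A[1, i_1)$ and $A[i_{n'}, n]$ need a short separate check using the fact that landmarks only occur in positions $[3, n-1]$.

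\textbf{Upper bound.} The key claim is that inside any window of more than $|\Sigma|+1$ consecutive positions of a non-repetitive string there must be a landmark. Suppose for contradiction there is a stretch of consecutive positions of length $> |\Sigma|+1$ with no landmark. Since $A$ is non-repetitive, consecutive symbols differ, so within this stretch the sequence has no strict local maximum (else first-pass landmark) and essentially no strict local minimum (a local minimum not adjacent to a local max would be a second-pass landmark; I would need to handle the "adjacent to a local max" exception carefully, but a local max inside the stretch is already excluded). A sequence of distinct-from-neighbor values with no interior local max and no interior local min must be strictly monotone on the stretch — this follows because the first place monotonicity breaks produces either a local max or a local min. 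But a strictly monotone sequence over the alphabet $\Sigma$ has length at most $|\Sigma|$, so the stretch has length at most $|\Sigma|+1$ once we account for the two endpoints where the monotonicity-forcing argument doesn't yet apply. This contradicts the assumption, giving the upper bound.

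\textbf{Main obstacle.} I expect the delicate part to be the careful bookkeeping around local minima that happen to be adjacent to local maxima — these are deliberately excluded from being landmarks in the second pass, so one must verify that such configurations still cannot produce long landmark-free stretches (intuitively they can't, because the neighboring local max, while itself not in the stretch, still bounds its length). The boundary terms $A[1,i_1)$ and $A[i_{n'},n]$ and the off-by-one issues from landmarks living only in $[3,n-1]$ are routine but need to be stated; I would fold the small constant slack into the claimed range $[2,|\Sigma|+1]$ and check the extreme cases (e.g., $n'=0$, or a single landmark) separately.
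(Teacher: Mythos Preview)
Your proposal is correct and follows essentially the same route as the paper: the lower bound via non-adjacency of landmarks, and the upper bound via the observation that a landmark-free stretch must be (essentially) monotone and hence of length at most $|\Sigma|$. The paper's proof is in fact just a two-line sketch --- it asserts without argument that ``the substring between any two adjacent landmarks must be monotone'' and concludes $1<|i-j|<|\Sigma|$ --- so your write-up is considerably more detailed than what appears in the paper, and you have correctly flagged the one point (local minima adjacent to first-pass local maxima) that the paper's sketch glosses over.
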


\begin{proof}
The substring between any two adjacent landmarks must be monotone. Hence, for any two adjacent landmarks, we have $1 < |i - j| < |\Sigma|$. For the first and the last substring, their lengthes are at most $|\Sigma|+1$.
\end{proof}

\begin{construction}[Algorithm : $\mathsf{Partition}$]
Input : A threshold integer $T$, and a $1$-distinct string $x$ of length $n$ over alphabet $\Sigma$.

Output : A series of $n'$ indices $(i_0 = 1 < i_1 < i_2 < \cdots < i_{n'} = n+1)$, which corresponds to the following partition of $x$ : $x[i_0, i_1), x[i_1, i_2), \cdots, x[i_{n'-1}, i_{n'})$.

The algorithm builds a series of trees, where each node is associated with a label in $\Sigma$.
We finally output the indices corresponding to the roots of the trees.

The algorithm builds the trees level by level.
Initially, each position of the input string corresponds to a single leaf node.
In each level, the algorithm partition the nodes in the current level into blocks, and create a new node for each block in the next level, where the children of the new node are set to be the the nodes in the block.

Let $n_0 = n$, $x_0 = x$, and $i_0^{(0)} = 1, i_1^{(0)} = 2, \cdots, i_{n}^{(0)} = n+1$.

For the $h$-th level, there are $n_h$ nodes, the labels on these nodes form a string $x_h \in \Sigma^{n_h}$. 
Each node has some leaves, and these leaves form a contiguous interval in $x$. We denote the leaves interval of $j$-th node in $h$-th level as $[i_{j-1}^{(h)}, i_j^{(h)})$, then the label of $j$-th node is $x_h[j] = x[i_{j-1}^{(h)}] \in \Sigma$.
We apply the alphabet reductions to $x_h$ and partition $x_h$ according to the \emph{landmarks}, and thus obtain $x_{h+1}$ and $\{i^{(h+1)}_{j}\}_{j}$ for the next level.

Now for each $h = 0, 1, 2, \cdots, \lceil \log T \rceil$, we do the following steps:
\begin{enumerate}
	\item For each $j \in [n_h+1]$, consider $j$-th node in $h$-th level. If $i_{j}^{(h)} - i_{j-1}^{(h)} \ge T$, mark the $j$-th node as `finish'.
	If there are no adjacent non-`finish' nodes, then we merge each non-`finish' node to the `finish' node to its left or right. That is, output $\{i_j^{h} \mid \text{ j-th node is a `finish' node.} \}\cup \{n+1\}$ and halt.
	
	\item 
	Use the `finish' nodes to partition the indices of string $x_h$ into intervals $I_1, I_2, \cdots$, where each substring $x[I_1], x[I_2], \cdots$ doesn't contain any `finish' node.
	Apply alphabet reduction \emph{twice} to the substrings $x_h[I_1], x_h[I_2], \cdots$, and obtain $x_h'[I_1], x_h'[I_2], \cdots$. Further partition the intervals $I_1, I_2, \cdots$ into small blocks by the \emph{landmarks} in $x_h'[I_1], x_h'[I_2], \cdots$.
	
	\item 
	For each node marked as `finish', we build a new node in the next level. The `finish' node is the single child of the new node. 
	Next, we iterate on all blocks in string $x_h$.
	For $j$-th block $x_h[l, r)$, 
	we create a new node in the next level.
	The children of the new node are all nodes in $x_h[l, r)$. Now the label of new node is $x_{h+1}[j] = x_{h}[l]$, and the range of the leaves of the new node is $[i_{l-1}^{(h)}, i_{r-1}^{(h)})$. We set this range as $[i_{j-1}^{(h+1)}, i_{j}^{(h+1)})$.
\end{enumerate}
\end{construction}

\begin{figure}[H]
	\centering
	\includegraphics[width=0.9\linewidth]{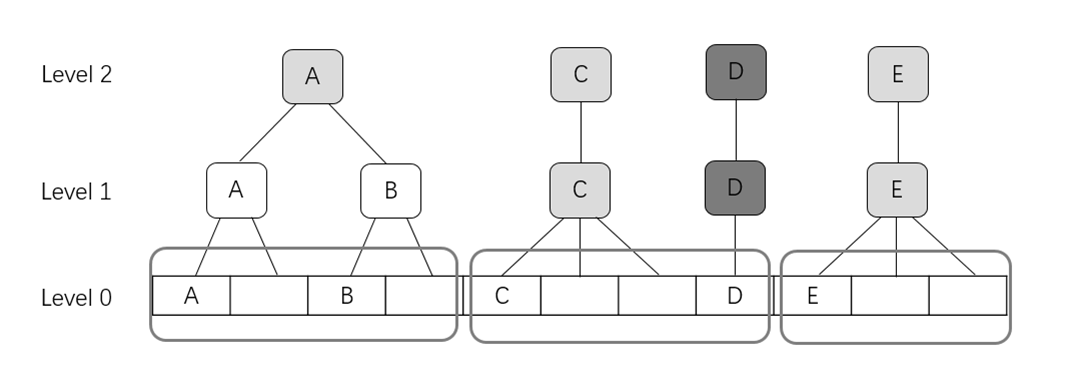}
	\caption{An example of $\mathsf{Partition}$, where the input string is at the bottom level, the nodes in light grey are `finish' nodes, and the nodes in deep grey are `frozen' nodes.}
\end{figure}

\begin{lemma}\label{lemma:block-size}
	Let $T$ be a threshold parameter,
	and $x$ be a $1$-distinct string in an alphabet of size $O(\log n)$, then $\mathsf{Partition}(T, x)$ stops in $\lceil \log T \rceil$ levels. 
\end{lemma}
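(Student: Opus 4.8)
The plan is to track, for each node $v$ built by the algorithm, the number $\ell(v)$ of input positions (leaves) below $v$, and to show that in every part of the forest in which the algorithm has not yet halted this quantity at least doubles from one level to the next; after $\lceil\log T\rceil$ levels any node that could still keep the process running would then have $\ell(v)\ge T$ and hence would already be marked `finish'. The one structural fact I need about the construction is that, in any level, a maximal run of $\ell\ge 2$ consecutive non-`finish' nodes is split by the landmark step into blocks each of size at least $2$: when the run contains a landmark this is Lemma~\ref{lemma:children-num} together with the fact that every landmark lies strictly inside the index range $[3,\ell-1]$, and when it contains no landmark the run is kept as a single block of size $\ell\ge 2$. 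This uses that the label string being partitioned is non-repetitive, and that is exactly where the hypothesis on $x$ enters: $1$-distinctness of $x$ forces all symbols of $x$ to be distinct, and since the label $x_h[j]$ of a level-$h$ node is the value of $x$ at that node's leftmost leaf, every $x_h$ is again $1$-distinct, hence non-repetitive; since alphabet reduction preserves non-repetitiveness, the twice-reduced string $x_h'$ used to choose landmarks is non-repetitive too. (The bound $|\Sigma|=O(\log n)$ is not needed for this direction; it is used only later, to bound block sizes from above.) I would also dispose of the trivial cases: if $T\le 1$ every leaf already has at least $T$ leaves below it, so the algorithm halts at level $0$; and if $|x|\le 1$ the single node of level $0$ has no non-`finish' neighbour, so it halts at level $0$ as well. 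Henceforth assume $T\ge 2$ and $|x|\ge 2$.

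The heart of the argument is the following claim, proved by induction on $h$: if a non-`finish' node $v$ at level $h$ has a non-`finish' neighbour at level $h$ --- equivalently, $v$ lies in a run of length at least $2$ --- then $v$ has at least two children and $\ell(v)\ge 2^h$. The base case $h=0$ is immediate: the only run is all of $x$, of length $|x|\ge 2$, and every leaf $v$ has $\ell(v)=1=2^0$. For the inductive step, let $v$ be such a node at level $h\ge 1$. Since $v$ is non-`finish', it is not the node that carries a `finish' node up to the next level (such a node inherits at least $T$ leaves and would itself be marked `finish'), so $v$ is a block node whose children are exactly the nodes of one block at level $h-1$. If that block had size $1$, then --- using that a block of a run of length $\ge 2$ has size $\ge 2$ --- the single child $c$ of $v$ would be the sole member of a run of length $1$ at level $h-1$, so both neighbours of $v$ at level $h$ would be the `finish' nodes inherited from the two neighbours of $c$, contradicting the assumption that $v$ has a non-`finish' neighbour. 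Hence $v$ has at least two children; these children form a block of size $\ge 2$ of non-`finish' level-$(h-1)$ nodes, so each of them has a non-`finish' neighbour at level $h-1$ (a block-mate), and by the inductive hypothesis each has at least $2^{h-1}$ leaves. Summing over the $\ge 2$ children gives $\ell(v)\ge 2\cdot 2^{h-1}=2^h$.

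Finally I would run the count. Suppose the algorithm reaches step~1 of level $L:=\lceil\log T\rceil$ without having halted earlier. If at that point two non-`finish' nodes were adjacent, each would fall under the claim and hence have $\ell(v)\ge 2^L\ge 2^{\log T}=T$ leaves --- but then step~1 would have marked it `finish', a contradiction. So no two non-`finish' nodes are adjacent at level $L$; the test in step~1 succeeds, the remaining isolated non-`finish' nodes are merged into neighbouring `finish' nodes, and the algorithm halts, so it stops at a level $h\le\lceil\log T\rceil$. The main obstacle, I expect, is precisely the reason the claim must be restricted to nodes with a non-`finish' neighbour: it is \emph{not} true that every non-`finish' node at level $h$ has $2^h$ leaves, because a short run (length $2$ or $3$) or a monotone run is carried up as a single block and can spawn an arbitrarily long chain of single-child nodes that keeps only a small leaf count. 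The point to get right is that such isolated non-`finish' nodes never obstruct termination --- they are exactly what the final merging in step~1 absorbs --- whereas a node that does obstruct termination must have a non-`finish' neighbour, and for such nodes the geometric growth of $\ell$ does hold; closing the induction also uses that the children of such a node are themselves of that type.
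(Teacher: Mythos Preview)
Your proposal is correct and follows essentially the same approach as the paper: both prove the claim that a non-`finish', non-frozen node at level $h$ has at least $2^h$ leaves (your phrasing ``non-`finish' with a non-`finish' neighbour'' is exactly the paper's ``not `finish' and not frozen''), and both finish by observing that at level $\lceil\log T\rceil$ any such node would be forced to be `finish'. Your version is more careful than the paper's in several places: you explicitly verify that each $x_h$ is non-repetitive (via $1$-distinctness of $x$ and the labelling rule), you handle the no-landmark case for short runs, you dispose of the trivial cases $T\le 1$ and $|x|\le 1$, and in the inductive step you spell out why a non-`finish' non-frozen node cannot have a single child and why its children themselves satisfy the inductive hypothesis---points the paper's proof glosses over with a bare citation of Lemma~\ref{lemma:children-num}.
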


\begin{proof}
For any $h$, we say $j$-th symbol of $x_h$ is frozen, if both the $(j-1)$-th symbol and the $(j+1)$-th symbol are marked as `finish'.
To prove that the algorithm stops in 
$\lceil \log T \rceil$ levels, we first prove the following claim: 
for the $j$-th symbol in level $h$, if it's not marked as `finish' or frozen, then $i_j^{(h)} - i_{j-1}^{(h)} \ge 2^h$.

We prove the claim by induction. 
The claim is true for the $0$-th level, since $i_{j}^{(0)} - i_{j-1}^{(h)} = 1 \ge 2^0$. 
Now let's assume the claim is true for the $h$-th level, and the goal is to prove the claim for the $(h+1)$-th level. 
For each block in the $(h+1)$-th level, if it's neither marked as `finish' nor frozen,
from Lemma~\ref{lemma:children-num}, the block must have at least two children, 
and the length of the block is at least $2^h + 2^h \ge 2^{h+1}$. Hence, the claim is true for all levels.

Now we prove the lemma by contradiction. If the algorithm doesn't stop in the $\lceil \log T \rceil$-th level, then there exists a block in level $\lceil \log T \rceil$ such that it's not marked as `finish' or frozen. Thus the length of the block is at least $2^{\lceil \log T \rceil} \ge T$, hence it should be marked as `finish', which is a contradiction.
\end{proof}

\begin{lemma}\label{lemma:block-local}
	Let $T$ be a threshold parameter,
	and $x$ be a $1$-distinct string over an alphabet $\Sigma$.
	Let $\{i_0=1, i_1, i_2, \cdots, i_{n'} = n+1\} = \mathsf{Partition}(T, x)$, then each block $[i_{j-1}, i_j), j = 1, 2, \cdots, n'$ depends on $O(\log T)$ blocks on its left, and $O(\log T)$ blocks on its right. Moreover,  $T \le i_{j}-i_{j-1} \le T\cdot (2 \log \log |\Sigma| + 7)$.
\end{lemma}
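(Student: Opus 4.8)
The plan is to prove both claims by tracking how information propagates through the levels of the parsing tree built by $\mathsf{Partition}$. The central observation is that every primitive operation used to build one level from the previous one is \emph{local}: for a fixed node at level $h$, the two alphabet reductions and the landmark selection each decide the level-$(h+1)$ block boundary near that node by looking only at a constant number of level-$h$ symbols to its left and right. Concretely, the first alphabet reduction reads one symbol to the left (the $A[i-1]$ in the definition); the second reads one more; and a landmark at position $i$ is determined by $A[i-1], A[i], A[i+1]$, so deciding whether a given node starts a new block depends on $O(1)$ neighbours at level $h$. The `finish'/frozen marking is similarly local. So I would first prove, as a one-line claim, that the set of level-$h$ nodes whose removal could change a given level-$(h+1)$ block is contained in a window of $O(1)$ level-$h$ nodes centered at that block.

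Next I would turn this into a statement about leaves. By Lemma~\ref{lemma:block-size} (applied after noting $x$ has an $O(\log n)$-size alphabet, or directly: the algorithm halts by level $\lceil \log T\rceil$), there are at most $\lceil \log T\rceil$ levels. Composing the per-level locality, a level-$h$ node is influenced by a window of $O(1)^h$ level-$0$ symbols; but that is too crude. Instead I would argue directly on blocks of the final partition: a final block $[i_{j-1}, i_j)$ corresponds to a subtree rooted at some level $h_0 \le \lceil \log T\rceil$, and whether that subtree's boundary lies where it does depends, level by level, on $O(1)$ sibling nodes at each level from $h_0$ down to the leaves. Since at each level the ``constant'' is absorbed and there are $O(\log T)$ levels, the set of final blocks (each itself a subtree) whose leaves lie in the influencing window has size $O(\log T)$ to the left and $O(\log T)$ to the right. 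I would phrase this as: the left boundary $i_{j-1}$ is unchanged if we only modify $x$ strictly to the left of the $O(\log T)$-th preceding block, and symmetrically on the right.

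For the size bound $T \le i_j - i_{j-1} \le T\cdot(2\log\log|\Sigma| + 7)$: the lower bound is immediate, since a node becomes a root exactly when it is marked `finish', and Step~1 marks the $j$-th node `finish' only when $i_j^{(h)} - i_{j-1}^{(h)} \ge T$; a frozen node is merged into an adjacent `finish' node, which only enlarges it, so every output block spans at least $T$ leaves. For the upper bound, consider the level $h$ at which a block first gets marked `finish'. Just before that level (at level $h-1$, or at level $0$ if $h=0$) the corresponding node spanned fewer than $T$ leaves. Forming the level-$h$ node merges it with at most $(2\log\log|\Sigma| + 4) - 1$ siblings — that is the alphabet size after two rounds of alphabet reduction, which by Lemma~\ref{lemma:children-num} bounds the number of children of a new node — so after that merge the block spans at most $T\cdot(2\log\log|\Sigma|+4)$ leaves. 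A frozen neighbour merged in at the halting step contributes at most one more such block, giving the stated $T\cdot(2\log\log|\Sigma|+7)$ after accounting for the two possible frozen neighbours and the off-by-one in Lemma~\ref{lemma:children-num}. I would do this arithmetic carefully with the exact constants from Lemma~\ref{lemma:children-num} (substrings of length in $[2,|\Sigma|+1]$).

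The main obstacle I anticipate is the locality bookkeeping across levels — making sure the $O(\log T)$ bound on the number of affected blocks is genuinely $O(\log T)$ and not $2^{O(\log T)}$. The resolution is to phrase locality in terms of \emph{blocks of the final partition} (subtrees), not leaves: at each level, a boundary depends on $O(1)$ neighbouring nodes at \emph{that} level, and the subtree under each such node is a single final block or part of one, so the window never branches — we get an additive $O(1)$ per level over $O(\log T)$ levels, hence $O(\log T)$ final blocks, not a product. Getting the constants exactly right in the size bound is routine once Lemma~\ref{lemma:children-num} and the halting level are in hand.
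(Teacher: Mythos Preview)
Your proposal is correct and follows essentially the same approach as the paper. The paper formalizes your locality argument with an explicit induction on the level $h$, setting $l_h = r_h = O(h)$ and showing $l_{h+1} \le l_h + 3$, $r_{h+1} \le r_h + 4$ (precisely your ``additive $O(1)$ per level'' observation); for the size bound the paper argues exactly as you do, bounding the number of children by $2\log\log|\Sigma|+5$ via Lemma~\ref{lemma:children-num} on the twice-reduced alphabet, then adding at most two frozen neighbours of size $<T$ each to reach $T\cdot(2\log\log|\Sigma|+7)$.
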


\begin{proof}
	We first prove the size of each block is in the range of $[T, T\cdot (2 \log \log |\Sigma| + 7)]$. Since we do the alphabet reduction twice, from Lemma~\ref{lemma:children-num}, each node has at most $2\log\log |\Sigma| + 5$ children. Hence, the size of each `finish' block is bounded by $T\cdot(2\log\log |\Sigma| + 5)$. As we may add two `frozen' nodes to this `finish' node in the last level, and the sizes of them are bounded by $T$, the total length of the block is bounded by $T\cdot(2\log\log |\Sigma| + 5)+2T = T\cdot(2\log \log |\Sigma| + 7)$.
	
	For any $h = 0, 1, 2, \cdots, \lceil \log T \rceil$,
	in the $h$-th level, we claim that the $j$-th node $x_h[j]$ depends on $l_h$ blocks on its left, and $r_h$ blocks on its right, where $l_h = 100 h$, and $r_h = 100 h$.
	
	We prove the claim by induction on $h$. 
	For $h = 0$, the claim holds. Now we assume the claim holds for level $h$, and we prove that the claim holds for level $(h+1)$.
	For each $j = 1, 2, \cdots, n_{h+1}$, consider the $j$-th node in level $(h+1)$. There are two cases: (1). the $j$-th node is obtained by some `finish' $j'$-th node in $h$-th level, and (2). the $j$-th node has children from some $j_1'$-th to $j_2'$-th nodes in $h$-th level.
	
	For the first case, the $j$-th node in $(h+1)$-th level also depends on $l_h$ blocks on its left and $r_h$ blocks on its right, so the claim is true for $(h+1)$-th level.
	For the second case, the $j$-th nodes depends on at most $3$ blocks on the left of $j_1'$, and at most $4$ blocks on the right of $j_2'$. Note that $j_1', j_2'$ and $j$ are in the same block.
	Hence, the $j$-th nodes depends on $l_{h+1} = l_h + 3$ blocks on its left, and $r_{h+1} = r_h + 4$ blocks on its right. Hence, the claim holds for the $(h+1)$-th level.
	
	From Lemma~\ref{lemma:block-size}, there are at most $\lceil \log T \rceil$ levels, so we finish the proof.
\end{proof}

\begin{lemma}\label{lemma:partition-reduce}
	Let $T$ be a threshold parameter,
	and $x$ be a non-repetitive string over an alphabet $\Sigma$. Let $\{i_0, i_1, i_2, \cdots\} = \mathsf{Partition}(T, x)$.
	Suppose $y$ is the string obtained by applying $(k, t)$ block edit errors to $x$.
	Let $\{i_0', i_1', i_2', \cdots\} = \mathsf{Partition}(T, y)$. Then string $x[i_0]\circ x[i_1] \circ x[i_2] \circ \cdots$ and $y[i_0'] \circ y[i_1'] \circ y[i_2'] \circ \cdots$ differ by at most $(k, O(t /T + k \log T))$ block edit errors.
\end{lemma}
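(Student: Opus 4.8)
The plan is to reduce to the case of a single block edit operation and then exploit the locality of $\mathsf{Partition}$ recorded in Lemma~\ref{lemma:block-local}. Write the $(k,t)$ block edit errors as a chain $x = x^{(0)} \to x^{(1)} \to \cdots \to x^{(k)} = y$ of single operations, where the $\ell$-th operation either inserts or deletes $b_\ell$ bits (with $\sum_\ell b_\ell \le t$) or is a block transposition (in which case $b_\ell = 0$); and let $X^{(\ell)}$ denote the label string $x^{(\ell)}[i_0^{(\ell)}]\circ x^{(\ell)}[i_1^{(\ell)}]\circ\cdots$ produced by $\mathsf{Partition}(T,x^{(\ell)})$. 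Since one can concatenate the operation sequences witnessing consecutive steps, it suffices to prove that for every $\ell$ the strings $X^{(\ell-1)}$ and $X^{(\ell)}$ differ by $O(1)$ block edit operations inserting/deleting a total of $O(b_\ell/T + \log T)$ symbols; summing over $\ell$ yields $O(k)$ operations and $O(t/T + k\log T)$ inserted/deleted symbols, which is the claimed bound up to the constant in front of $k$.

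The key fact I would isolate first is a ``locality of labels'' statement following from Lemma~\ref{lemma:block-local}: if two non-repetitive strings agree outside some contiguous window $W$, then their $\mathsf{Partition}$ outputs are identical block-for-block (same boundaries, same labels) outside the set of output blocks that intersect $W$ together with the $O(\log T)$ output blocks adjacent to that set on each side. Proving this amounts to observing that $\mathsf{Partition}$ is deterministic, that each alphabet-reduction step and each landmark decision depends only on a constant-size neighborhood, and that by Lemma~1 of \cite{CormodeM07} together with the fact that the stretch between two consecutive landmarks is monotone, non-repetitiveness is preserved from level to level — so that $\mathsf{Partition}(T,y)$, and each $\mathsf{Partition}(T,x^{(\ell)})$, is well defined. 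One should also note here that a block deletion can create a repetition, so the surrounding construction's guarantee that all strings stay non-repetitive must be invoked explicitly.

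Given this, the per-operation bound is a short case analysis. For an insertion or deletion of $b_\ell$ bits at some position, the modified window has length $O(b_\ell)$ and hence meets $O(b_\ell/T + 1)$ output blocks (each of size at least $T$ by Lemma~\ref{lemma:block-local}); together with the $O(\log T)$ adjacent blocks on each side, $X^{(\ell-1)}$ and $X^{(\ell)}$ therefore share a common prefix and a common suffix and differ only on a middle stretch of $O(b_\ell/T + \log T)$ symbols, so passing from one to the other is one block deletion followed by one block insertion affecting $O(b_\ell/T + \log T)$ symbols. For a block transposition moving $x^{(\ell-1)}[a,b)$ to a new location, the interior of the moved substring, the interiors of the three remaining pieces, and everything more than $O(\log T)$ blocks away from the three seams are all preserved by the locality statement; hence at the label level $X^{(\ell)}$ is obtained from $X^{(\ell-1)}$ by moving one contiguous run of labels (the interior of $x[a,b)$) to the analogous new location, and then performing $O(1)$ local corrections each touching $O(\log T)$ symbols near a seam — a total of $O(1)$ operations with $O(\log T)$ symbols inserted/deleted.

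I expect the transposition case to be the main obstacle. The delicate points are verifying that the interior of the moved substring really does survive in $X^{(\ell)}$ as a contiguous run of output blocks occupying the expected relative position (so that a single block transposition on the label string is legitimate rather than something that also scrambles unrelated blocks), and handling the situation where the three seams are close together, so that their ``affected sets'' of output blocks overlap. The cleanest route is probably to define, once and for all, the affected set of output blocks of a single operation, prove that outside it the two partitions are identified block-for-block, and then merge overlapping affected sets into a single local correction; this keeps the number of operations and the inserted/deleted mass under control regardless of the relative positions of $a$, $b$, and the insertion point, and the remaining estimates ($O(b_\ell/T)$ new blocks from $b_\ell$ inserted bits, $O(\log T)$ blocks per seam) are then routine.
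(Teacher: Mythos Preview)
Your approach is essentially the same as the paper's: decompose the $(k,t)$ block edit errors into a chain of $k$ single operations, invoke the locality statement Lemma~\ref{lemma:block-local} to bound the number of affected output blocks per operation ($O(b_\ell/T+\log T)$ for an insertion/deletion, $O(\log T)$ near the three seams of a transposition), and sum. The paper's proof is considerably terser---it simply counts affected blocks and stops---whereas you spell out more carefully why the label-string transformation for a transposition is itself a block transposition plus $O(1)$ local patches, and you flag the non-repetitiveness of the intermediate strings $x^{(\ell)}$ as something that needs to be justified; both observations are valid and are points the paper glosses over, and your remark that the first parameter comes out as $O(k)$ rather than exactly $k$ is also accurate (it is harmless for the downstream applications, where everything is absorbed into big-$O$ constants).
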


\begin{proof}
	Since any $(k, t)$ block edit errors can be regarded as a series of $k$ single block edit error with parameters $(1, t_1), (1, t_2), \cdots, (1, t_k)$, where $\sum_{i \in [k]} t_i = t$. For each block edit error with parameter $(1, t_i), i \in [k]$, there are two cases: (1). the block edit error is an insertion or deletion of $t_i$ contiguous symbols, (2).$t_i = 0$, and the block edit error is a block transposition moving symbols in $[j, j')$ to the position $j''$.
	
	For case (1), w.l.o.g, we only need to prove for a block insertion of length $t$. From Lemma~\ref{lemma:block-local}, each block depends on $O(\log T)$ neighboring blocks. 
	Hence, $t_i$ contiguous insertion error affects at most contiguous $O(t_i / T + \log T)$ blocks.
	For the second case, from Lemma~\ref{lemma:block-local}, a block transposition error affact at most $O(\log T)$ blocks near the indices $j, j'$ and $j''$. Hence, the total number of blocks affected in this case is still $O(\log T)$.
	Summing up the number of all affacted blocks, we bound the number of affected blocks by $O(\sum_{i \in [k]} (t_i/T + \log T) ) = O(t/T + k \log T)$.
\end{proof}

\subsection{Document exchange protocol}
As stated before, our document exchange protocol for a $B$-distinctive binary string has two stages.
Stage I is modified from the stage I in the construction of \cite{CJLW18} and combined the idea of parsing tree in \cite{CormodeM07}, to resist block edit errors.

\begin{construction}[Stage I, modified from \cite{CJLW18}]\label{firststage}
    Let $n$ denote the length of Alice's string $x$, $T= B = 3 \log n, T' = \log T = \Theta(\log \log n), T'' = T T' (\log\log n)^2 \log\log\log n = \Theta(\log n (\log \log n)^3 \log \log \log n)$. \\ 

Alice: On input a $B$-distinct string $x \in \{0, 1\}^n$.
\begin{enumerate}[label*=\arabic*]
	\item Create a string $\bar{x}$ of length $\bar{n} = n-B+1$, each symbol of $\bar{x}$ is an element in $\{0, 1\}^B$. Let $\bar{x} = x[1, B], x[2, B+1], \cdots, x[n-B+1, n]$.
	
	\item Compute a partition of $\bar{x}$: $\{i_0 = 1, i_1, i_2, \cdots, i_{n'} = \bar{n}+1\} = \mathsf{Partition}(T, \bar{x})$. 
	Create a string $x'$ with alphabet $\{0, 1\}^B$ : $x' = \bar{x}[i_0], \bar{x}[i_1], \bar{x}[i_2], \cdots, \bar{x}[i_{n'}]$. 
	Now apply $\mathsf{Partition}$ to $x'$ again and obtains a partition of $x'$: $\{i_0'=1, i_1', i_2', \cdots, i_{n''}' = n'+1\} = \mathsf{Partition}(T', x')$.
	Combine the two partitions and obtain the following partition on $\bar{x}$ : $I''=\{i_{i'_{j-1}-1} \mid j \in [n''+1]\}$. Denote $I'' = \{i''_0, i''_1, i''_2, \dots, i''_{n''}\}$, where $i''_0 = 1 < i''_1 < i''_2 < \dots < i''_{n''} = \bar{n}+1$.
	Finally partition $x$ into the blocks $x[i''_0, i''_1), \cdots, x[i''_{n''-1}, i''_{n''})$.
	
	\label{AlicepartitionI}
	
	\item Create a set $V = \left\{(\textsf{len}_b, \textsf{B-prefix}_b, \textsf{B-prefix}_{b+1}) \mid 1 \le b \le n'' - 1\right\}$,
	where $\textsf{len}_b$ is the length of the $b$-th block, and $\textsf{B-prefix}_b$ and
	$\textsf{B-prefix}_{b+1}$ are the B-prefix of the $b$-th block and
	the $(b+1)$-th block respectively.
	
	\item Represent the set $V$ as its indicator vector, which has size $\poly(n)$,
        and send the redundancy $z_V$ being able to correct $\Theta(k \log \log \log n + t / TT')$ Hamming errors, using Theorem~\ref{agcode} (or simply using a Reed-Solomon code).
	
	\item Partition the string $x$ evenly into $n / T''$ blocks, each of size $T''$. \label{AlicelastI}
\end{enumerate}

Bob: On the redundancy $z_V$ sent by Alice, 
and the string $y$ obtained from $x$ by $(k, t)$ block edit errors.

\begin{enumerate}[label*=\arabic*]
	\item Create a string $\bar{y}$ of length $\bar{m} = m-B+1$, each symbol of $\bar{y}$ is in $\{0, 1\}^B$. Let $\bar{y} = y[1, B], y[2, B+1], \cdots, y[n-B+1, n]$.
	
	\item Compute a partition of $\bar{y}$: $\{i_0 = 0, i_1, i_2, \cdots i_{m'} = \bar{m}+1\} = \mathsf{Partition}(T, y)$. 
	Create a string $y'$ with alphabet $\{0, 1\}^B$ : $y' = \bar{y}[i_0], \bar{y}[i_1], \bar{y}[i_2], \cdots, \bar{y}[i_{m'}]$.
	Now apply $\mathsf{Partition}$ on $y'$ and obtain a partition of $y'$: $\{i_0'=1, i_1', i_2', \cdots i_{m''}' = m'+1\} = \mathsf{Partition}(T', y')$.
	Combine the two partitions and obtain the following partition on $\bar{y}$ : $I''=\{i_{i'_{j-1}-1} \mid j \in [m''+1]\}$. Denote $I'' = \{i''_0, i''_1, i''_2, \dots, i''_{m''}\}$, where $i''_0 = 1 < i''_1 < i''_2 < \dots < i''_{m''} = \bar{m}+1$.
	Finally partition $y$ into the blocks $y[i''_0, i''_1), \cdots, y[i''_{m''-1}, i''_{m''})$.
	\label{Bob1}
	
	\item Create a set $V' = \left\{(\textsf{len}_b, \textsf{B-prefix}_b, \textsf{B-prefix}_{b+1}) \mid 1 \le b \le m'' - 1\right\}$
	using the partition of $y$.
	\label{Bob2}
	
	\item Use the indicator vector of $V'$ and the redundancy $z_V$ to recover Alice's set $V$.
	
	\item Create an empty string $\tilde{x}$ of length $n$,
	and partition $\tilde{x}$ according to the set $V$ in the following way:
	first find the element $(\textsf{len}^{(1)}, \textsf{B-prefix}^{(1)}, \textsf{B-prefix}'^{(1)})$
	in $V$ such that
	for all elements $(\textsf{len}, \textsf{B-prefix}, \textsf{B-prefix}')$ in $V$,
	$\textsf{B-prefix}^{(1)} \neq \textsf{B-prefix}'$.
	Then partition $\tilde{x}[1, \textsf{len}^{(1)}]$ as the first block,
	and fill $\tilde{x}[1, B]$ with $\textsf{B-prefix}^{(1)}$. Then find the element $(\textsf{len}^{(2)}, \textsf{B-prefix}^{(2)}, \textsf{B-prefix}'^{(2)})$
	such that $\textsf{B-prefix}^{(2)} = \textsf{B-prefix}'^{(1)}$,
	and partition $\tilde{x}[\textsf{len}^{(1)}+1, \textsf{len}^{(1)} + \textsf{len}^{(2)}]$ as the second block, and fill $\tilde{x}[\textsf{len}^{(1)}+1, \textsf{len}^{(1)} + B]$ with $\textsf{B-prefix}^{(2)}$.
	Continue doing this until all elements in $V$ are used to recover the partition of $x$.
	
	\item For each block $b$ in $\tilde{x}$,
	if Bob finds a unique block $b'$ in $y$ such that the B-prefix of $b'$ matches the B-prefix of $b$ and the lengths of $b$ and $b'$ are equal,
	Bob fills the block $b$ using $b'$.
	If such $b'$ doesn't exist or Bob has multiple choices of $b'$,
	then Bob just leaves the block $b$ as blank.
	\label{Bob5}
	
	\item Partition the string $\tilde{x}$ evenly into $n / T''$ blocks, each of size $T''$. \label{BoblastI}
\end{enumerate}	
\end{construction}

\begin{construction}[Stage II]\label{secondstage}
	Stage II consists of $O(\log \log \log n)$ levels. 
	
	Let $L = O(\log n)$,   $i^* =L - O(\log \log \log n)$ be s.t. $b_{i^*} \geq T'' \geq b_{i^*+1}$ where $b_{i } = \Theta(\frac{n}{ 2^{i } (k+\frac{t}{\log n}) } )$ for each $i \in [L]$, and $b_L = O(\log n)$.
	
Alice does the following.
\begin{enumerate}

\item For $i = i^*$ to $L-1$,

\begin{itemize}

\item Construct a hash function $h_{i}: \{0,1\}^{b_{i}} \rightarrow \{0,1\}^{B}$ for $x$ by using the first $B$ bits of the input as the output.

\item Compute the sequence $v[{i}] = ( h_{i}(x[1, 1+b_{i})), h_{i}(x[1+b_{i}, 1+2b_{i})), \ldots, h_{{i}}(x[1+(l_{i}-1)b_{i}, l_{i} b_{i})))$;

\item Compute the redundancy $ z[{i}] \in (\{0,1\}^{B})^{\Theta(k+\frac{t}{b_{i}})} $ for $v[{i}]$ by Theorem \ref{agcode}, where the code has distance $c^*(k+ \frac{t}{b_i})$ with $c^*$ being a large enough constant;	

\end{itemize}

%
%
%
%
%
%

\item Compute $z_{\rm{final}}$ which is the redundancy for $ x[1, 1+ b_L), \ldots, x[1+(l_L -1)b_L, n)$ by Theorem \ref{agcode}, where the code has distance $ c_{\rm{final}}( k+ t/b_L)$ with $c_{\rm{final}}$ being a large enough constant.

\item Send 
$ z[{i^*}], z[{i^*}+1], \ldots, z[L ], z_{\rm{final}}$.
\end{enumerate}

Bob conducts the following. Assume now his version of $x$ is $\tilde{x}$ (which is the input for this stage).

\begin{enumerate}

\item For $i = i^*$ to $L$,
\begin{itemize}
\item Apply the decoding of Theorem \ref{agcode} on $ h_{i}(\tilde{x}'[1, 1+b_{i}) ),  h_{i}(\tilde{x}'[1+b_{i}, 1+2b_{i})), \ldots,  h_{{i}}(\tilde{x}'[1+(l_{i}-1)b_{i}, l_{i} b_{i})), z[{i}]$ to get the sequence of hash values $ v[{i}] $;

\item Compute the matching $w_i = ((p_1, p'_1), \ldots, (p_{|w_i|}, p'_{|w_i|})) \in ([l_{i}] \times [|y|])^{|w_i|}$  between $x$ and $y$ under $h_{i}$, using $v[{i}]$, in the following way:
\begin{itemize}
\item Mark every symbol of $y$ as unused and let $w_i$ be empty; 

\item Consider every $j\in [l_i]$. Find $p'_i $ which is the smallest index in $[|y|]$ s.t. $h_i(y[p'_j, p'_j+b_i)) = v[i][j]$ and $y[p'_j, p'_j+b_i)$ only contains unused symbols. If there is such $p'_j$, then add $(p_j, p'_j)$ to $w_i$ and mark every symbol in $y[p'_j, p'_j+b_i)$ as used; 

\item return $w_i$;
\end{itemize}

\item Evaluate $\tilde{x}$ according to the matching, i.e. let  $\tilde{x}[p_j, p_j+b_{i}) = y[p'_j, p'_j + b_{i}), j\in [l_i]$;
\end{itemize}

\item Apply the decoding of Theorem \ref{agcode} on the blocks of $\tilde{x}$ and $ z_{\mathrm{final}}$ to get $x$;

\end{enumerate}

\end{construction}

\subsection{Analysis}

\begin{lemma}\label{setdiffbnd}
    If Alice's input string $x$ is $B$-distinct, then $|V \Delta V'| \le O(k \log \log \log n + t / TT')$.
\end{lemma}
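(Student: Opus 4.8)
The plan is to bound $|V\Delta V'|$ by counting how many blocks of the partition of $x$ differ from the blocks of the partition of $y$, and then noting that each differing block contributes at most a constant number of elements to the symmetric difference $V\Delta V'$ (since each element of $V$ is determined by a block together with its immediate successor's $B$-prefix, a change to one block affects at most $O(1)$ triples). So the crux is controlling the number of block-level discrepancies between the final partition $I''$ of $\bar x$ and the final partition of $\bar y$.

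First I would trace through the two-level partition in Construction~\ref{firststage}. Bob's string $y$ is obtained from $x$ by $(k,t)$ block edit errors. Passing to the $B$-gram strings $\bar x, \bar y$: a single $(1,t_i)$ block edit error on $x$ becomes a $(1, t_i + O(B))$ block edit error on $\bar x$ (inserting/deleting $t_i$ bits changes $t_i$ consecutive $B$-grams plus a fringe of $B-1$ grams near each cut; a block transposition on $x$ becomes a block transposition on $\bar x$ plus $O(B)$ corrupted grams at the three cut points). Since $B = \Theta(\log n)$, summing over the $k$ operations gives that $\bar y$ differs from $\bar x$ by $(k, t + O(kB)) = (k, O(t + k\log n))$ block edit errors. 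Now apply Lemma~\ref{lemma:partition-reduce} with threshold $T$: the ``root string'' $x' = \bar x[i_0]\bar x[i_1]\cdots$ and its $y$-counterpart $y'$ differ by $(k, O((t + k\log n)/T + k\log T))$ block edit errors over the alphabet $\{0,1\}^B$. With $T = B = 3\log n$ this is $(k, O(t/T + k + k\log\log n)) = (k, O(t/T + k\log\log n))$ block edit errors — but here we must be a little careful, because $x'$ and $y'$ may not be non-repetitive strings as required by $\mathsf{Partition}$; one checks (as in \cite{CormodeM07}) that since $\bar x$ is $1$-distinct, consecutive block-representatives are distinct, so $x'$ is non-repetitive, and similarly $y'$.

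Next, apply Lemma~\ref{lemma:partition-reduce} a second time, now with threshold $T' = \log T = \Theta(\log\log n)$, to the pair $x', y'$: the combined number of block edit errors between the doubly-reduced representative strings is $(k, O(\frac{1}{T'}(t/T + k\log\log n) + k\log T')) = (k, O(t/(TT') + k\log\log\log n))$, using $\log T' = \Theta(\log\log\log n)$ and $\log\log n / T' = O(1)$. This bounds the number of differing blocks in the combined partition $I''$ versus Bob's $I''$ by $O(t/(TT') + k\log\log\log n)$ — note a $(k', t')$ block-edit discrepancy between two partitions translates to at most $k' + O(t')$ differing block positions, and here $k' = k$, $t' = O(t/(TT') + k\log\log\log n)$, so the count is $O(k + t/(TT') + k\log\log\log n) = O(k\log\log\log n + t/TT')$. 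Finally, translate block discrepancies to $|V\Delta V'|$: each block that is present in $x$'s partition but absent or altered in $y$'s partition (or vice versa), together with its neighbor, spoils at most $O(1)$ triples $(\textsf{len}_b, \textsf{B-prefix}_b, \textsf{B-prefix}_{b+1})$, giving $|V\Delta V'| = O(k\log\log\log n + t/TT')$ as claimed.

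The main obstacle I expect is the bookkeeping in composing Lemma~\ref{lemma:partition-reduce} twice while correctly tracking how the ``$t$'' parameter deflates (by a factor $T$ then $T'$) while a new additive $k\log T$ / $k\log T'$ term appears at each stage, and making sure the passage $x \to \bar x$ (which inflates $t$ additively by $O(kB) = O(k\log n)$) is absorbed correctly — in particular verifying that $O(k\log n)/T = O(k)$ so this inflation does not dominate. A secondary subtlety is confirming the hypotheses of Lemma~\ref{lemma:partition-reduce} (non-repetitiveness of the intermediate strings) are met, which relies on $B$-distinctness of $x$ and is exactly where that assumption is used.
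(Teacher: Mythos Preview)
Your proposal is correct and follows essentially the same route as the paper: pass from $x$ to $\bar x$ (incurring an additive $O(kB)=O(k\log n)$ inflation in the $t$ parameter), apply Lemma~\ref{lemma:partition-reduce} with threshold $T$ and then again with threshold $T'$, and finally use $B$-distinctness of $x$ (hence $1$-distinctness of $\bar x$ and $x''$) to conclude that $(k,t'')$ block edit errors on the doubly-reduced string affect at most $O(k+t'')$ elements of $V$. Your extra care in checking the non-repetitiveness hypothesis of Lemma~\ref{lemma:partition-reduce} and in verifying that the $O(k\log n)$ inflation is absorbed by the division by $T$ is well placed; the paper is simply terser on these points.
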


\begin{proof}
In the first step of Alice and Bob, the string $\bar{x}$ and $\bar{y}$ differ by at most $(k, t + k \log n)$ block edit errors.
From Lemma~\ref{lemma:partition-reduce}, the string $x'$ and $y'$ differ by at most $(k, O(t + k \log T)/T + k \log T = O(t/T + k\log T))$ block edit errors.
Applying Lemma~\ref{lemma:partition-reduce} again, 
we derive that $x''$ and $y''$ differ by at most $(k, t'')$ block edit errors, where
\begin{align*}
t'' = O\left(\frac{t/T + k\log T}{T'} + k \log T'\right) = O(k \log \log \log n + t / TT')
\end{align*}

Since string $x$ is $B$-distinct, the symbols in string $\bar{x}$ are $1$-distinct, so are the symbols in $x''$. $(k, t'')$ block edit errors can affect at most $O(k + t'')$ elements in $V$. Hence, $|V \Delta V'| \le O(k + t'') = O(k \log \log \log n + t / TT')$.
\end{proof}

\begin{theorem}\label{stageIthm}
If Alice's input string $x$ is $B$-distinct, then after Stage I, at most $O(k + t / T'')$ blocks of $\tilde{x}$ contains unfilled bits or incorrectly filled bits.
\end{theorem}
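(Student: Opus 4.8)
The plan is to track, through the two-step partition and the block-filling procedure, how many blocks of $\tilde{x}$ can possibly be corrupted. By Lemma~\ref{setdiffbnd} we already know $|V \Delta V'| \le O(k + t'')$ where $t'' = O(k \log\log\log n + t/(TT'))$; the goal is to convert this bound on the symmetric difference of the two index-triple sets into a bound on the number of bad blocks of $\tilde{x}$, expressed in terms of $k$ and $t/T''$. The first step is to recall that $T'' = T T' (\log\log n)^2 \log\log\log n$, so each final block of $x$ has length $\Theta(T'')$ (this follows from applying the size bound in Lemma~\ref{lemma:block-local} twice, once with threshold $T$ over an alphabet of size $2^B = \poly(n)$, once with threshold $T'$ over an alphabet of size $O(\log n)$), and hence $n/T''$ counts the blocks up to a constant factor. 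So an error bound of the form ``$O(k + t'')$ blocks are bad'' needs to be rewritten using $t'' = O(k\log\log\log n + t/(TT'))$ and the fact that $t/(TT') = (t/T'')\cdot (\log\log n)^2 \log\log\log n$, which is larger than $t/T''$; so I should be careful here — the claim in the theorem is $O(k + t/T'')$, which is \emph{smaller}. This means the proof cannot simply push $|V\Delta V'|$ forward; it must argue that most of the discrepancy between $V$ and $V'$ does not actually produce unfilled or wrongly filled bits, OR that the relevant count is genuinely $O(k + t/T'')$ and Lemma~\ref{setdiffbnd} is being used more carefully. I expect the right reading is: Bob recovers $V$ exactly (using $z_V$, which corrects enough Hamming errors), so the partition of $\tilde{x}$ is \emph{correct}; then the only source of bad blocks is step~\ref{Bob5}, where a block $b$ of $\tilde{x}$ is left blank or mis-filled because the matching $b \leftrightarrow b'$ in $y$ by $(\textsf{B-prefix}, \textsf{len})$ fails — and this happens only for blocks of $x$ that were actually touched by the $(k,t)$ block edit errors at the level of the $T''$-blocks, of which there are $O(k + t/T'')$.

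Concretely, the key steps, in order, are as follows. (1) Argue that since $z_V$ corrects $\Theta(k\log\log\log n + t/(TT'))$ Hamming errors and $|V\Delta V'| \le O(k\log\log\log n + t/(TT'))$, Bob recovers $V$ exactly, hence reconstructs the \emph{exact} partition of $x$ into the $n''$ small blocks, and in particular knows the $B$-prefix and length of every block of $x$. (2) Observe that a block $b$ of $\tilde{x}$ gets correctly filled in step~\ref{Bob5} whenever there is a \emph{unique} block $b'$ of $y$ with matching $B$-prefix and length; since $x$ is $B$-distinct, for any block of $x$ the $B$-prefix together with length determines the block uniquely among all blocks of $x$, so a block $b$ fails to be filled correctly only if the $y$-side has either lost the corresponding block or created a spurious collision — i.e. only if the block of $y$ equal to the corresponding block of $x$ has been destroyed, or a different substring of $y$ with the same $B$-prefix and length has appeared. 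Both events require an edit operation to have fallen inside (or near, within the locality window) the relevant block. (3) Now pass to the coarse scale: the final blocks of $\tilde{x}$ have size $\Theta(T'')$, and I claim a $(k,t)$ block-edit adversary can damage only $O(k + t/T'')$ of them. A single block insertion or deletion of $t_i$ bits touches $O(1 + t_i/T'')$ consecutive coarse blocks; a single block transposition touches $O(1)$ coarse blocks (the two ends of the moved segment and the insertion point); summing over all $k$ operations gives $O(k + \sum_i t_i/T'') = O(k + t/T'')$. (4) Combine: every coarse block of $\tilde{x}$ not in this damaged set agrees, block-by-block, with the corresponding coarse block of $x$ (because $V$ was recovered exactly so the partition is right, and each of its constituent small blocks has a unique un-collided match in $y$), hence is filled entirely correctly; so at most $O(k + t/T'')$ coarse blocks of $\tilde{x}$ contain unfilled or incorrectly filled bits.

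The step I expect to be the main obstacle is step~(2)–(3): making precise the claim that a small block of $x$ gets a \emph{unique} correct match in $y$ unless an edit operation is ``near'' it, and in particular controlling the locality window. Lemma~\ref{lemma:block-local} says each small block depends on $O(\log T)$ and $O(\log T')$ neighboring blocks on each side through the two partition rounds, so an edit error affects a window of $O(\log T + \log T') = O(\log\log n)$ small blocks — but in the coarse $T''$-scale this is still $O(1)$ coarse blocks since $T''$ dwarfs $\log\log n \cdot (\text{small block size})$. The subtlety is that ``affected'' here must be upgraded from ``the partition structure changed'' to ``the $B$-prefix/length match with $y$ can fail'': I need to argue that if a small block of $x$ lies outside the affected window of every edit operation, then the identical substring still appears in $y$ at the predicted place with the predicted $B$-prefix and length, and moreover no \emph{other} length-matching, $B$-prefix-matching substring of $y$ exists (non-uniqueness) — the latter uses $B$-distinctness of $x$ together with the fact that any spurious collision in $y$ must again involve an edited region. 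Once the locality window is pinned down and shown to be $o(T'')$, the counting in step~(3) is routine and the theorem follows.
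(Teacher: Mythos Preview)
Your proposal is correct and follows essentially the same approach as the paper: first use Lemma~\ref{setdiffbnd} together with the redundancy $z_V$ to recover $V$ exactly (so the partition and $B$-prefixes of $\tilde{x}$ are correct), then count bad $T''$-blocks by bounding the locality window of the two-level partition and charging each of the $k$ block-edit operations with $O(1 + t_i/T'')$ affected coarse blocks. One small discrepancy worth noting: you estimate the locality window as ``$O(\log T + \log T')$ small blocks,'' whereas the paper carries out the composition more carefully and shows each small block depends on $O(T'')$ \emph{symbols} of $\bar{x}$ on each side (chasing through Lemma~\ref{lemma:block-local} twice picks up both the $O(\log T)$, $O(\log T')$ block-dependence factors \emph{and} the $O(T\log\log n)$, $O(T'\log\log n)$ block-size factors); either way the window is $O(T'')$ symbols, hence $O(1)$ coarse blocks, so your conclusion and the subsequent counting are unaffected.
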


\begin{proof}
By Lemma~\ref{setdiffbnd}, Bob can recover the set $V$ correctly using $z_{V}$.
We say a block in Bob's step~\ref{BoblastI} is a \emph{bad block}, if it contains unfilled bits or incorrectly filled bits.

Suppose the $(k, t)$ block edit errors are a series of $k$ single block edit error with parameters $(1, t_1), (1, t_2), \\ \cdots, (1, t_k)$, where $\sum_{i \in [k]} t_i = t$.
For each of block edit error with parameter $(1, t_i), i \in [k]$, 
there are two cases: (1). the $i$-th error is a block insertion or deletion. (2). the $i$-th error is a block transposition moving the substring $[j, j')$ to the position $j''$.

We first prove that each block in Bob's step \ref{Bob5} depends on $O(T'')$ symbols of $\bar{x}$ on its left and its right.
From Lemma~\ref{lemma:block-local}, each symbol of $x'$ depends on $O(\log T) = O(\log \log n)$ neighboring blocks,
and from Lemma~\ref{lemma:block-size}, each blocks contains at most $O(T \log \log n) = O(\log n \log \log n)$ indices of $\bar{x}$. Hence, each symbol of $x'$ depends on $O(\log \log n) \cdot O(\log n \log \log n) = O(\log n \cdot (\log \log n)^2)$ contiguous symbols of $\bar{x}$. Similarly, each symbol of $x''$ depends on $O(\log T') = O(\log \log \log n)$ neighboring blocks, and each block contains most $O(T' \log \log n) = O((\log \log n)^2)$ indices of $x'$. Hence, each symbol of $x''$ depends on $O((\log \log n)^2)\cdot O(\log \log \log n) = O((\log \log n)^2 \log \log \log n)$ contiguous symbols of $x'$. 
Now we can conclude that each symbol of $x''$ depends on $O(((\log \log n)^2 \log \log \log n) + O(\log T))\cdot O(T \log \log n) = O(\log n (\log \log n)^3 \log \log \log n) = O(T'')$ symbols of $\bar{x}$ on its left and right.

For case (1), as each block has size at least $T''$, from the argument above, inserting or deleting a block of size $t_i$ can affect at most $(t_i + O(T'')) / T'' = O(t_i / T'') + O(1)$ blocks. For case (2), from the argument above, we derive that the block transposition can only affect $O(T'') / T'' = O(1)$ blocks in Bob's partition of $y$. Hence, the number of bad blocks created by this error is at most $O(1)$. 

We finish the proof by summing up all bad blocks in $k$ errors.
\end{proof}

\begin{proof}[Proof of Theorem~\ref{randmainthm}]

We use induction to show the claim that at level $L$, the number of unfilled blocks or wrongly filled blocks is at most $c' (k + t/b_{L})$ for some constant $c'$.

For level $i^*$, by Theorem~\ref{stageIthm} there are at most $O(k+ t/T'') = O(k+t/B)$ blocks, each having length $T''$, that contain uncovered bits or incorrectly recovered bits. 
Let $c'$ be the maximum of constant factor here and the number $2$.

Assume for level $i-1 \in [i^*, L)$, our claim holds. In Construction \ref{secondstage}, since $c^*$ is a large enough constant, by Theorem \ref{agcode} $v[i]$ can be recovered correctly by Bob.  
Consider the computing of $w_i$ using $v[i]$, $y$. Note that $1$ block insertion of $a$ bits can cause at most $2 + a/b_{i}$ wrongly filled blocks or unfilled blocks. So
 $k_1$ block insertions/deletions of $t$ bits can create at most $2 k_1 + t/b_{i} $ wrongly filled blocks or unfilled blocks. Also note that one block transposition  can cause at most $ 2 $  wrongly filled blocks or unfilled blocks. So $k_2$ block transpositions can cause at most $2k_2 $ wrongly filled blocks or unfilled blocks. So the total number of wrongly filled or unfilled blocks in $i$ level is at most $2k + t/b_{i} \leq c'(k+ t/b_{i})$.
 
This shows our claim.  Note that by this claim, also since $c_{\rm{final}}$ is a large enough constant, Bob can recover all blocks of $x$ in level $L$ correctly using $z_{\mathrm{final}}$. 

Next we compute the communication complexity.

For stage I,
	the size of $z_V$ is $ O(k \log \log \log n + \frac{t}{TT'})\log n  = O(k \log \log \log n + t)$ bits.

	For stage II,
	note that since $b_{i^*} \geq T \geq b_{i^*+1}$, $i^* = L^* - O(\log \log \log n)$. For every level $i$, $|z[i]| = O(k+ t/b_i)B$.
	So 
	$$\sum_{i=i^*}^L |z[i]|  = \sum_{i=i^*}^{L} O(k+ t/b_i)B =  O(k \log n \log \log \log n + t).$$
Also note that $|z_{\rm{final}}| = O(k+ t/b_L)B$. Thus the overall communication cost is $O(k \log n \log \log \log n + t)$.

	
	
\end{proof}


\section{Binary codes for block edit errors}
\label{sec:BTcode}
\subsection{Encoding and decoding algorithm}
\label{EncDecAlgo}
Given the document exchange protocol for block edit operations, we can now construct codes capable of correcting $(k_1,t)$ block insertions/deletions, and $k_2$ block transpositions,  where $k_1+k_2=k$, and  $t\leq\alpha n$ for some constant $\alpha$. The encoding and decoding algorithms are as follows:

\begin{algo}{Encoding algorithm}
\label{EncodingAlgo}

Let $\ell_\mathsf{buf}=2\log n$ and $\mathsf{buf}=0^{\ell_\mathsf{buf}-1}\circ 1$.

\textbf{Input}: $\mathsf{msg}$ of length $n$.

\textbf{Ingredients}:

\begin{itemize}
\item A pseudorandom generator $\mathsf{PRG}: \{0,1\}^{O(\log n)} \rightarrow \{0,1\}^{n}$, from Theorem \ref{genCJLW18}, s.t. there exists at least one seed $r$ for which $msg\oplus \mathsf{PRG}( r)$ doesn't contain $\textsf{buf}$ as a substring and has B-distinctness.

\item An error correcting code $\mathcal{C}_1$ from Theorem \ref{asympGoodECCforInsdel}  which is capable of correcting $O(k\log n+t)$ edit errors, as well as $k_2$ block transpositions. Denote the encoding map of $\mathcal{C}_1$ as $Enc_1:\{0,1\}^{\mathsf{ml}_1 = O(k\log^2 n + t)} \rightarrow \{0,1\}^{\mathsf{cl}_1 =  O(k\log^2 n + t)}$ and the decoding map as $Dec_1:\{0,1\}^{\mathsf{cl}'_1} \rightarrow \{0,1\}^{\mathsf{ml}_1}$.
\end{itemize}

\textbf{Operations}:

\begin{enumerate}

\item Find a seed $r$ of $\mathsf{PRG}$ s.t. $\mathsf{msg} \oplus \mathsf{PRG}(r)$ does not contain $\mathsf{buf}$ as a substring and satisfies B-distinctness. Let $\mathsf{msg}_P=\mathsf{msg}\oplus \mathsf{PRG}(r)$.
\item Compute the sketch $\mathsf{sk}_m$ for $\mathsf{msg}_P$ for  $\Omega(k)$ block insertions/deletions and $\Omega(k)$ block transpositions, where the number of bits inserted and deleted is $\Omega(k\log n+t)$ in total. \label{enc_op2}
\item Let $\mathsf{sk}=\mathsf{sk}_m\circ r$, and encode $\mathsf{sk}$ with $\mathcal{C}_1$. Let the codeword be $c_1=Enc_1(\mathsf{sk})$.
\item Divide $c_1$ into blocks of length $\log n$. Denote these blocks as $c_1^{(1)},c_1^{(2)},\dots, c_1^{(M)}$ where $M$ is the number of blocks.
\item Insert $\mathsf{buf}$ to the beginning of each block $c_1^{(i)},1\leq i\leq M$.
\item Let $c = (\mathsf{msg}\oplus \mathsf{PRG}(r))\circ \mathsf{buf}\circ c_1^{(1)} \circ \mathsf{buf} \circ c_1^{(2)}\dots \circ \mathsf{buf} \circ c_1^{(M)}$.
\end{enumerate}

\textbf{Output}: $c$.
\end{algo}

The construction of $\mathsf{PRG}$ is left to subsection \ref{EncDecAnalysis}. We call the concatenation $\mathsf{buf}\circ c_1^{(1)} \circ \mathsf{buf} \circ c_1^{(2)}\dots \circ \mathsf{buf} \circ c_1^{(M)}$ as the sketch part and $\mathsf{msg}_P=\mathsf{msg}\oplus \mathsf{PRG}(r)$ as the message part. Now we give the corresponding decoding algorithm.

\begin{algo}{Decoding algorithm}
\label{DecodingAlgo}

\textbf{Input}: the received codeword $c'$.

\textbf{Operations}:

\begin{enumerate}
\item Find out all substrings $\mathsf{buf}$ in $c'$. Number these buffers as $\mathsf{buf}_1,\dots, \mathsf{buf}_{M'}$.
\item Pick the $ \log n$ bits after $\mathsf{buf}_j$ as block ${c_1'}^{(j)}, 1\leq j\leq M'$. Then remove all the buffers $\mathsf{buf}_j$ and  ${c'_1}^{(j)}$, $1\leq j\leq M'$ from $c'$. The rest of $c'$ is regarded as the message part $\mathsf{msg}_P^{'}$.
\item Let $c_1'={c'_1}^{(1)}\circ {c'_1}^{(2)}\circ\dots\circ {c'_1}^{(M')}$. Decode $c_1'$ with the decoding algorithm $Dec_1$ for $\mathcal{C}_1$ and get $\mathsf{sk}=Dec_1(c_1')$.
\item Get $\mathsf{sk}_m$ and $r$ from $\mathsf{sk}$.
\item Use $\mathsf{sk}_m$ and $\mathsf{msg}_P^{'}$ to recover $\mathsf{msg}_P$. \label{dec_op5}
\item Compute $\mathsf{msg} = \mathsf{msg}_P \oplus \mathsf{PRG}(r)$.
\end{enumerate}

\textbf{Output}: $\mathsf{msg}$.
\end{algo}
\subsection{Analysis}
\label{EncDecAnalysis}
In this subsection we'll give the construction of $\mathsf{PRG}$ and prove the correctness of the algorithms.

\subsubsection{Building blocks: $\mathbf{PRG}$}

We recall the pseudorandom generator in Theorem 5.1 in \cite{CJLW18}.
\begin{theorem}[Theorem~5.1 in \cite{CJLW18}]\label{genCJLW18}
For every $  n \in \mathbb{N}$,  there exists an explicit PRG $g:\{0,1\}^{\ell = O(\log n)} \rightarrow \{0,1\}^{n} $ s.t. for every $ x\in \{0,1\}^n$,  with probability $1-1/\poly(n)$,  $g(  U_{\ell}) + x $ satisfies B-distinctness.
\end{theorem}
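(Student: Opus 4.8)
The plan is to instantiate $g$ with the $\eps$-almost $\kappa$-wise independence generator of Theorem~\ref{almostkwiseg}, and to bound the failure of $B$-distinctness by a union bound over all pairs of length-$B$ windows. Concretely, set $B := 3\log n$ (the value used later in Construction~\ref{firststage}; any $B = c\log n$ with $c$ a large enough constant works), $\kappa := 2B$, and $\eps := n^{-6}$, and let $g\colon\{0,1\}^{\ell}\to\{0,1\}^n$ be the generator of Theorem~\ref{almostkwiseg} for these parameters. Then $\ell = O\big(\log(\kappa\log n/\eps)\big) = O(\log n)$, and $g$ is highly explicit (each output bit computable in $\poly(\kappa,\log n,1/\eps)=\poly(n)$ time). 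Fix an arbitrary $x\in\{0,1\}^n$, write $X := g(U_\ell)$ and $A := X\oplus x$ (bitwise XOR); it suffices to show $\Pr[A\text{ is not }B\text{-distinct}] = 1/\poly(n)$.

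Fix a pair $i<j$ in $[n-B+1]$ and let $E_{ij}$ be the event $A[i,i+B)=A[j,j+B)$. Unfolding the XOR, $E_{ij}$ is exactly the event that the bits of $X$ satisfy the $\mathbb{F}_2$-linear system $X[i+m]\oplus X[j+m] = c_m$ for $m=0,\dots,B-1$, where $c_m := x[i+m]\oplus x[j+m]$ are fixed. This system involves only the bits of $X$ in positions $P_{ij} := \{i,\dots,i+B-1\}\cup\{j,\dots,j+B-1\}$, and $|P_{ij}|\le 2B=\kappa$. The key claim is that the $B$ linear forms $X[i+m]\oplus X[j+m]$ are linearly independent over $\mathbb{F}_2$: for any nonempty $T\subseteq\{0,\dots,B-1\}$, the index $i+\min T$ is the smallest position occurring among the forms indexed by $T$ and it occurs in exactly one of them (it cannot equal $j+m$ for any $m\in T$ since $j>i$), so $\sum_{m\in T}(X[i+m]\oplus X[j+m])\neq 0$ as a linear form. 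Hence the system has full row rank $B$, is consistent, and its solution set, viewed as a subset of assignments to the $|P_{ij}|$ bits in $P_{ij}$, has size exactly $2^{|P_{ij}|-B}$. Since those $\le\kappa$ bits of $X$ are $\eps$-close to uniform in the max norm, summing over solutions gives
\[
\Pr[E_{ij}] \;\le\; 2^{|P_{ij}|-B}\big(2^{-|P_{ij}|}+\eps\big) \;=\; 2^{-B} + 2^{|P_{ij}|-B}\,\eps \;\le\; 2^{-B} + 2^{B}\eps .
\]

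A union bound over the at most $n^2$ pairs $(i,j)$ yields $\Pr[A\text{ not }B\text{-distinct}] \le n^2\big(2^{-B}+2^{B}\eps\big)$, which for $B=3\log n$, $\eps=n^{-6}$ is at most $2/n = 1/\poly(n)$; enlarging $B$ and shrinking $\eps$ drives this to $n^{-\Omega(1)}$ as small as desired while keeping $\ell = O(\log n)$. This gives the theorem. As a side remark, the same $g$ simultaneously avoids $\mathsf{buf}=0^{2\log n-1}\circ 1$ as a substring of $A$ with probability $1-1/\poly(n)$, since $\mathsf{buf}$ has length $2\log n\le\kappa$ so each of the $\le n$ candidate occurrences is present with probability at most $2^{-2\log n}+\eps$; this is the form in which the generator is invoked in Algorithm~\ref{EncodingAlgo}.

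The only non-routine step is the linear-independence (equivalently, exact solution-count) claim, and where it genuinely matters is the overlapping case $j<i+B$: there $A[i,i+B)=A[j,j+B)$ forces a periodicity on a window of $A$, and one must be sure this neither makes the system on $X$ inconsistent nor collapses its solution space below $2^{|P_{ij}|-B}$ — otherwise the per-pair bound could exceed $2^{-B}$ and the union bound would fail. A clean way to see the structural fact is that the constraint graph on the positions in $P_{ij}$, with one edge $\{a,a+(j-i)\}$ per constraint, is a disjoint union of paths, hence a forest; once that is in hand, the remainder is the union bound and the parameter choice above.
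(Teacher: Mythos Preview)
The paper does not prove this statement; it is quoted verbatim as Theorem~5.1 of \cite{CJLW18} and used as a black box (the subsequent theorem then XORs this generator with a second almost-$\kappa$-wise generator to additionally avoid $\mathsf{buf}$). So there is no ``paper's own proof'' to compare against here.

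Your proposal supplies a self-contained proof, and it is correct. The approach---instantiate $g$ as an $\eps$-almost $2B$-wise independent generator from Theorem~\ref{almostkwiseg}, observe that the event $A[i,i+B)=A[j,j+B)$ is a full-rank $\mathbb{F}_2$-linear system of $B$ equations in at most $2B$ output bits, bound each pair by $2^{-B}+2^{B}\eps$, and union-bound over the $O(n^2)$ pairs---is exactly the natural argument and almost certainly the one in \cite{CJLW18}. Your linear-independence argument via the minimal index $i+\min T$ is clean and handles the overlapping case $j<i+B$ correctly.

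One small nitpick: the max-norm guarantee in Definition~2.4 is stated for exactly $\kappa$ indices, so for a subset of size $|P_{ij}|<\kappa$ each value has probability at most $2^{-|P_{ij}|}+2^{\kappa-|P_{ij}|}\eps$ rather than $2^{-|P_{ij}|}+\eps$ (marginalize over the remaining $\kappa-|P_{ij}|$ bits, or equivalently pad $P_{ij}$ to exactly $\kappa$ indices). This does not affect your final bound $2^{-B}+2^{B}\eps$, since $\kappa=2B$ gives $2^{\kappa-B}\eps=2^{B}\eps$ either way.
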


\begin{theorem}
	For every $n \in \mathbb{N}, x \in \{0, 1\}^n$,there exists an explicit PRG $g:\{0,1\}^{\ell = O(\log n)} \rightarrow \{0,1\}^{n} $ s.t. for every  $ x\in \{0,1\}^n$, with probability $1 - 1 / \poly(n)$, the following two conditions hold simultaneously.
	\begin{itemize}
		\item $\mathsf{buf}$ is not a substring of $\mathsf{PRG}( U_{\ell}) \oplus x$.
		\item $\mathsf{PRG}(U_{\ell}) \oplus x$ satisfies B-distinctness.
	\end{itemize}
\end{theorem}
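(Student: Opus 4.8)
The plan is to instantiate $\mathsf{PRG}$ with a single $\eps$-almost $\kappa$-wise independence generator $g:\bit^{\ell}\to\bit^{n}$ from Theorem~\ref{almostkwiseg}, taking $\kappa=\Theta(\log n)$ large enough that $\kappa\ge\max\{2B,\ell_{\mathsf{buf}}\}$ and $\eps=1/\poly(n)$ small enough; by Theorem~\ref{almostkwiseg} this costs seed length $\ell=O(\log(\kappa\log n/\eps))=O(\log n)$. The one structural fact used is that XORing a fixed string $x$ into an $\eps$-almost $\kappa$-wise independent source acts as a bijection on each coordinate, so $\mathsf{PRG}(U_{\ell})\oplus x$ is again $\eps$-almost $\kappa$-wise independent. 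I would then bound the probability of each of the two bad events separately and finish with a union bound.

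For the first condition, fix a starting index $i\in[n-\ell_{\mathsf{buf}}+1]$. Since $\ell_{\mathsf{buf}}\le\kappa$, the window $(\mathsf{PRG}(U_{\ell})\oplus x)[i,i+\ell_{\mathsf{buf}})$ is $\eps$-close in max norm to uniform on $\ell_{\mathsf{buf}}$ bits, so the probability it equals the fixed string $\mathsf{buf}$ is at most $2^{-\ell_{\mathsf{buf}}}+\eps$. A union bound over the at most $n$ choices of $i$ bounds the probability that $\mathsf{buf}$ occurs as a substring by $n(2^{-\ell_{\mathsf{buf}}}+\eps)$, which is $1/\poly(n)$ since $\ell_{\mathsf{buf}}=2\log n$ and $\eps$ is chosen polynomially small. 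For the second condition, this is precisely the guarantee of Theorem~\ref{genCJLW18} (itself proved via such a generator), so one may simply invoke it; alternatively one bounds, for each pair $i\ne j$, the probability that $(\mathsf{PRG}(U_{\ell})\oplus x)[i,i+B)=(\mathsf{PRG}(U_{\ell})\oplus x)[j,j+B)$ --- for disjoint windows this is at most $2^{-B}+\eps$ by $2B$-wise near-independence, and for overlapping windows equality forces periodicity with period $|i-j|<B$, which still constrains $\Omega(B)$ free coordinates and yields a $2^{-\Omega(B)}+\eps$ bound (this overlapping case, handled in \cite{CJLW18}, is the only subtlety). Summing over the $O(n^{2})$ pairs and taking $B$ a large enough multiple of $\log n$ makes this $1/\poly(n)$ as well.

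A final union bound over the two events shows that $\mathsf{PRG}(U_{\ell})\oplus x$ simultaneously avoids $\mathsf{buf}$ as a substring and is $B$-distinct with probability $1-1/\poly(n)$. Explicitness is inherited from Theorem~\ref{almostkwiseg}: each output bit of $g$ is computable in time $\poly(\kappa,\log n,1/\eps)=\poly(\log n)$ given the seed, and since $\ell=O(\log n)$ there are only $\poly(n)$ seeds, so in the encoding algorithm one can exhaustively search for a good seed in polynomial time. The main obstacle is merely the bookkeeping of making a single pair $(\kappa,\eps)$ serve both conditions and of correctly treating the overlapping-window case in the $B$-distinctness estimate; neither is a genuine difficulty, the latter already being established in the cited prior work.
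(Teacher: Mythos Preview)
Your argument is correct, but it differs from the paper's in one structural respect worth noting. The paper does \emph{not} use a single $\eps$-almost $\kappa$-wise generator for both conditions; instead it takes the generator $g$ of Theorem~\ref{genCJLW18} (treated as a black box) and an independent $\eps$-almost $\ell_{\mathsf{buf}}$-wise generator $g'$ from Theorem~\ref{almostkwiseg}, and sets $\mathsf{PRG}(r)=g(r_1)\oplus g'(r_2)$ on disjoint seed halves $r_1,r_2$. The buffer-avoidance bound is then carried out over the randomness of $r_2$ alone (exactly the computation you wrote), while $B$-distinctness follows over the randomness of $r_1$ alone by a direct black-box appeal to Theorem~\ref{genCJLW18}; a final union bound combines them. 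The advantage of the paper's decomposition is modularity: neither cited theorem has to be reopened, and no overlapping-window analysis is needed here.

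Your single-generator route is more economical but has one soft spot you already flagged: the sentence ``this is precisely the guarantee of Theorem~\ref{genCJLW18}, so one may simply invoke it'' is not quite valid as written, because Theorem~\ref{genCJLW18} is an existence statement about a particular PRG, not a property of every $\eps$-almost $\kappa$-wise generator. Your alternative---bounding $\Pr[(g(U_\ell)\oplus x)[i,i+B)=(g(U_\ell)\oplus x)[j,j+B)]$ directly using $2B$-wise near-independence, including the overlapping case handled in \cite{CJLW18}---is the right way to close this if you insist on a single generator, and it does go through with $\eps$ a sufficiently small $1/\poly(n)$ (note that the event lives on at most $2B$ coordinates and has $2^{B}$ satisfying assignments there, so the additive error is $2^{B}\eps$, still $1/\poly(n)$ for $B=\Theta(\log n)$). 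So either drop the black-box invocation and commit to your direct argument, or switch to the paper's two-generator XOR, which lets you cite Theorem~\ref{genCJLW18} cleanly.
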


\begin{proof}
	Let $\kappa = \ell_{\mathsf{buf}}$ be the length of $\mathsf{buf}$, $\eps = 1/n^2$.
	From Theorem~\ref{almostkwiseg}, there exists an
	explicit $\eps$-almost $\kappa$-wise independence generator $g': \{0, 1\}^d \rightarrow \{0, 1\}^n$, where $d = O(\log \frac{\kappa \log n }{\eps}) = O(\log n)$.
	Then, for any $x \in \{0, 1\}^n$,
	\begin{align*}
	\Pr_{r' \gets \{0, 1\}^d}[\mathsf{buf} \text{ is a substring of } g'(r') \oplus x] &
	\le \sum_{i \in [n - \ell_{\mathsf{buf}} + 1]}\Pr[\mathsf{buf} = (g'(r') \oplus x)[i, i + \ell_{\mathsf{buf}})] \\
	&\le n \left(1/2^{\ell_{\mathsf{buf}}} + 1/n^2\right) = 1/\poly(n)
	\end{align*}
	
	Let $g$ be the generator in Theorem~\ref{genCJLW18} with seed length $\ell'$.
	Let $\ell = \max(\ell', d)$,
	and construct $\mathsf{PRG}(  r) = g( r_1) \oplus g'( r_2)$ where $r_1, r_2$ are disjoint substrings of $r$ of length $l'$ and $l$.
	Then by the union bound, the probability that at least one of the conditions fails is upper bounded by $1 / \poly(n)$.
\end{proof}

%

%

\subsubsection{Correctness of the construction}
We show that a code $\mathcal{C}$ with encoding algorithm \ref{EncodingAlgo} and decoding algorithm \ref{DecodingAlgo} can correct $(k_1,t)$-block insertions/deletions and $k_2$ block transpositions.

First, we prove the sketch $\mathsf{sk}$ can be correctly recovered.
\begin{lemma}
\label{skDecode}
In the 4th step of decoding algorithm \ref{DecodingAlgo}, the sketch $\mathsf{sk}$ is correctly recovered.
\end{lemma}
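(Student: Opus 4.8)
The statement to prove is that in step 4 of the decoding algorithm, the sketch $\mathsf{sk}$ is recovered correctly. The strategy is to show that the string $c_1'$ reconstructed by Bob in step 3 of the decoder differs from Alice's codeword $c_1 = Enc_1(\mathsf{sk})$ by at most $O(k\log n+t)$ edit errors plus $k_2$ block transpositions, so that the decoder $Dec_1$ of $\mathcal{C}_1$ (Theorem~\ref{asympGoodECCforInsdel}) recovers $\mathsf{sk}$ exactly. The whole argument is a bookkeeping exercise: track how the $(k_1,t)$ block insertions/deletions and the $k_2$ block transpositions applied by the adversary to the codeword $c = \mathsf{msg}_P \circ (\text{sketch part})$ affect (a) which substrings equal $\mathsf{buf}$, and (b) the contents of the blocks that Bob extracts after each detected $\mathsf{buf}$.

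\textbf{Step 1: Control the buffer markers.} First I would argue that, in the honest codeword $c$, the only occurrences of $\mathsf{buf} = 0^{\ell_\mathsf{buf}-1}\circ 1$ are exactly the $M$ buffers Alice inserted. This uses two facts established in the construction: $\mathsf{msg}_P = \mathsf{msg}\oplus\mathsf{PRG}(r)$ contains no $\mathsf{buf}$ substring (by the choice of seed $r$, justified by the PRG theorem just proved), and each block $c_1^{(i)}$ has length $\log n < \ell_\mathsf{buf}-1 = 2\log n - 1$, so no $\mathsf{buf}$ can be contained within a block or straddle a block boundary together with a neighboring buffer's zeros — here one should check the boundary case that $c_1^{(i)}$ ending in zeros followed by the next $\mathsf{buf}$'s zeros cannot create a spurious earlier-aligned $\mathsf{buf}$ whose terminating $1$ lies inside $c_1^{(i+1)}$; since every genuine $\mathsf{buf}$ is immediately followed by the $\log n$-bit block, the run of zeros before any $1$ that terminates a $\mathsf{buf}$-candidate is broken, so no new alignment appears. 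Then I would bound how the adversary can perturb this: each block insertion/deletion or block transposition can destroy at most $O(1)$ existing buffers and create at most $O(t_j/\ell_\mathsf{buf})+O(1)$ new spurious buffers (a block insertion of $t_j$ bits can at worst be $0^{t_j}$, but since $\ell_\mathsf{buf} = 2\log n$ this yields at most $O(t_j/\log n)$ new buffers; a transposition moves a substring and affects only $O(1)$ buffer-alignments at the three cut points). Summing, the set of detected buffers $\mathsf{buf}_1,\dots,\mathsf{buf}_{M'}$ differs from the true set of $M$ buffers by at most $O(k + t/\log n)$ buffers.

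\textbf{Step 2: Translate buffer errors into edit errors on $c_1$.} Each missing, extra, or displaced buffer causes Bob's step-2 extraction to drop, insert, or corrupt at most one $\log n$-bit block of $c_1'$; in addition, the $O(k_1)$ block insertions/deletions directly inside the sketch part corrupt at most $O(t/\log n)+O(k_1)$ blocks with at most $t$ bits total, and each of the $k_2$ transpositions either moves a contiguous run of whole blocks (which, after reconcatenation in step 3, is just $O(1)$ block transpositions on $c_1'$, i.e. $O(1)$ edit errors up to block boundaries, or can be absorbed as $k_2$ block transpositions) or cuts through $O(1)$ blocks. Adding these up: $c_1'$ is obtained from $c_1$ by at most $O(k\log n + t)$ edit errors together with at most $k_2$ block transpositions. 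This is exactly the error budget that $\mathcal{C}_1$ tolerates by the way $\mathsf{ml}_1,\mathsf{cl}_1$ and the parameters were chosen in the Ingredients list of Algorithm~\ref{EncodingAlgo}. Hence $Dec_1(c_1') = \mathsf{sk}$, which is what step 4 returns.

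\textbf{Main obstacle.} The delicate part is Step 1 — the careful case analysis showing that no spurious $\mathsf{buf}$ can appear in the honest codeword beyond the intended ones, and the accounting of how many buffers a single block edit operation can forge or erase. In particular one must be careful that a long all-zeros block insertion does not create a large number of fake buffers; this is precisely why $\ell_\mathsf{buf}$ is taken to be $2\log n$ rather than $O(1)$, so that $t$ inserted zero-bits yield only $O(t/\log n)$ fake buffers, keeping the induced edit-error count on $c_1$ within $O(k\log n+t)$. Everything after that is routine propagation of these bounds through the (linear, alphabet-free) extraction step and an appeal to the decoding guarantee of Theorem~\ref{asympGoodECCforInsdel}.
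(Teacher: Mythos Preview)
Your proposal is essentially the same argument as the paper's: establish that the honest codeword contains exactly the $M$ intended buffers, then do a per-operation case analysis to bound the number of $\log n$-bit sketch blocks that get inserted, deleted, replaced, or transposed, arriving at $O(k+t/\log n)$ block-level edits (hence $O(k\log n+t)$ bit edits) plus $k_2$ block transpositions on $c_1$, which is within $\mathcal{C}_1$'s correction capability. The only organizational difference is that you first count buffer discrepancies and then translate to block errors, whereas the paper directly counts block errors per error type; this is cosmetic.

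One small slip worth fixing: your ``main obstacle'' paragraph says a long all-zeros insertion is the dangerous case. In fact $0^{t_j}$ creates \emph{no} new occurrences of $\mathsf{buf}=0^{\ell_{\mathsf{buf}}-1}1$ (there is no terminating $1$); the genuinely adversarial insertion is something like $(0^{\ell_{\mathsf{buf}}-1}1)^m$, which packs $m\approx t_j/\ell_{\mathsf{buf}}$ fake buffers into $t_j$ bits. Your bound $O(t_j/\log n)$ is still correct, just the illustrative example is off. Also, the reason $\ell_{\mathsf{buf}}$ is chosen larger than the block length $\log n$ is primarily so that no spurious $\mathsf{buf}$ can fit inside (or straddle) the honest blocks $c_1^{(i)}$, not to limit the fake-buffer count from insertions; you do state the former correctly in Step~1, so this is only a matter of emphasis in your concluding paragraph.
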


\begin{proof}
We show that $c_1'$ can be obtained by applying at most $12k\log n+t$ edit errors and $k$ block transpositions over $c_1$.

Note that after inserting buffers to the blocks of $c_1$, the total number of appearance of the buffer in the sketch part is equal to the number of buffers inserted, because the buffer length is longer than the block length of $c_1$.
Also note that concatenating the message part and sketch part will not insert any buffers because by the choice of $r$, $\mathsf{msg} \oplus \mathsf{PRG}(r)$ does not contain $\mathsf{buf}$. As a result, if there are no errors, by the decoding algorithm we can get the correct $c_1$ and thus get the correct $\mathsf{sk}$.

Next we consider the effects of block insertions/deletions and transpositions for the sketch part. Specifically, we consider how the sketch part changes after each of these operations.

\begin{itemize}
\item block insertion: Consider one block insertion of $t_0$ bits. We claim that after this operation,  at most $ \lceil t_0/(3\log n) \rceil  $ new blocks can be introduced to the sketch part, because to insert one new block to the sketch, we only need to insert a new buffer and attach the new block to it.
We also note that this operation may delete one block by damaging a buffer, or replace one block by damaging the block  right after the buffer.

So $k_1$ block insertions of $t$ bits inserted can insert at most $k_1 + t/(3\log n)$ new blocks. It can also delete at most $k_1$ blocks, and replace at most $k_1$ blocks.

\item block deletion:
we first consider a block deletion of $t_0$ bits. After this operation, at most $ \lceil t_0/(3\log n) \rceil $   blocks of the sketch part can be deleted, since there are at most  $ \lceil t_0/(3\log n) \rceil $   blocks in the deleted substring. The operation may also create one extra block, since the remaining bits may combine together to be a buffer. It may also replace an existing block, since the remaining bits may combine together to be a new block after an original buffer.

So $k_1$ block deletions of $t$ bits deleted can delete at most $k_1 + t/(3\log n)$ blocks. It can  insert at most $k_1$ blocks. It can also replace $k_1$ blocks.

\item block transposition:
After one block transposition $(i,j,l)$,  at most $3$ new blocks can be introduced to the sketch part, since a new block may be created at the original position $i$, and two new blocks may appear when inserting the block to the destination $j$. Also it may delete at most $3$ blocks, since two buffers may be damaged when removing the transferred block, and one buffer can be damaged when inserting the transferred block. By a similar argument this operation can replace at most $3$ blocks.
Also, a block transposition can cause one block transposition for the sketch part.

As a result, $k_2$ block transpositions can insert or delete at most $O(k_2)$ blocks  and cause $O(k_2)$ block transpositions.

\end{itemize}

In summary, there are at most $ O(k + t/\log n) $ block insertions/deletions and $k_2$ block transpositions on $c_1$. Note that $ O(k  + t/\log n) $ block insertions/deletions, each of length $O(\log n)$ bits can be regarded as  $ O(k \log n + t) $ edit errors. Since our code $\mathcal{C}_1$ can correct  $ O(k \log n + t) $  edit errors and $k_2$ block transpositions, we can decode $sk$ correctly.

\end{proof}

Next, we show that the message output by the decoding algorithm is correct.
\begin{lemma}
\label{msgDecode}
At the end of algorithm \ref{DecodingAlgo}, the original message is correctly decoded.
\end{lemma}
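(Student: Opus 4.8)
The plan is to chain together the two guarantees we have already established: Lemma~\ref{skDecode} says the sketch $\mathsf{sk}$ (hence $\mathsf{sk}_m$ and the seed $r$) is recovered exactly, and Theorem~\ref{dBTdocexc} (the document exchange protocol) says that $\mathsf{msg}_P$ can be recovered on Bob's side from $\mathsf{sk}_m$ provided that what Bob holds as the ``message part'' $\mathsf{msg}_P'$ differs from Alice's $\mathsf{msg}_P$ by at most the number of block edit errors the sketch $\mathsf{sk}_m$ was designed to handle. So the real content of the proof is to show that $\mathsf{msg}_P'$, the string obtained in step~2 of Algorithm~\ref{DecodingAlgo} by deleting all detected buffers and the $\log n$ bits following each of them, is within $(O(k), O(k\log n + t))$ block edit errors of $\mathsf{msg}_P$. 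Once that is done, step~\ref{dec_op5} correctly reconstructs $\mathsf{msg}_P$, and step~6 outputs $\mathsf{msg}_P \oplus \mathsf{PRG}(r) = \mathsf{msg}$ since $r$ is correct.

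First I would argue that in the absence of errors the decoding is exact: because $\mathsf{buf} = 0^{\ell_{\mathsf{buf}}-1}\circ 1$ has length longer than the block length $\log n$ of $c_1$, no buffer pattern straddles or sits inside a $c_1^{(i)}$ block, and by the choice of $r$ the message part $\mathsf{msg}_P = \mathsf{msg}\oplus\mathsf{PRG}(r)$ contains no occurrence of $\mathsf{buf}$; hence the buffers recovered in step~1 are exactly the $M$ inserted ones, and stripping them and the following $\log n$ bits returns $\mathsf{msg}_P$ verbatim. Then I would account, operation by operation, for how a $(k_1,t)$-block-insertion/deletion and $k_2$ block-transposition adversary can perturb this. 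The key observation, already used in Lemma~\ref{skDecode}, is that each single block operation can forge or destroy only $O(1)$ buffers per ``event'' (plus $O(t_0/\log n)$ extra buffers if $t_0$ bits are inserted/deleted in that operation), and can move only $O(1)$ contiguous chunks across the message-part/sketch-part boundary. Each spuriously detected buffer inside the message part causes us to delete $\ell_{\mathsf{buf}} + \log n = O(\log n)$ bits of $\mathsf{msg}_P$ that should have stayed, i.e.\ one block deletion of $O(\log n)$ bits; each missed buffer in the sketch part causes us to keep $O(\log n)$ bits of $c_1$ inside $\mathsf{msg}_P'$, i.e.\ one block insertion of $O(\log n)$ bits; and a transposition that drags a piece of the sketch part into the middle of the message part (or vice versa) contributes $O(1)$ block insertions/deletions plus possibly one block transposition of $\mathsf{msg}_P'$ relative to $\mathsf{msg}_P$. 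Summing over the $\le k$ operations and the total bit budget $t$, the string $\mathsf{msg}_P'$ is obtained from $\mathsf{msg}_P$ by $O(k + t/\log n)$ block insertions/deletions with $O(k\log n + t)$ bits inserted/deleted in total, together with $O(k)$ block transpositions.

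That bound is exactly the error regime for which $\mathsf{sk}_m$ was computed in step~\ref{enc_op2} of Algorithm~\ref{EncodingAlgo} (it was built for $\Omega(k)$ block insertions/deletions, $\Omega(k)$ block transpositions, and $\Omega(k\log n + t)$ bits inserted/deleted), so by Theorem~\ref{dBTdocexc} Bob recovers $\mathsf{msg}_P$ from $\mathsf{sk}_m$ and $\mathsf{msg}_P'$, and then $\mathsf{msg} = \mathsf{msg}_P \oplus \mathsf{PRG}(r)$ follows. The step I expect to be the main obstacle is the careful bookkeeping in the previous paragraph: one has to be precise that the buffers ``detected'' by Bob in step~1 of Algorithm~\ref{DecodingAlgo} are, up to $O(k + t/\log n)$ discrepancies, the true ones, and that each discrepancy translates into a bounded block edit on $\mathsf{msg}_P'$ — in particular making sure that a single inserted/deleted substring of $t_0$ bits is not double-counted both here and in Lemma~\ref{skDecode}, and that the constants work out so that the total stays within the declared $O(k\log n + t)$ bit budget. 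The transposition case needs the extra remark that we genuinely cannot replace a block transposition by an insertion followed by a deletion (since the moved substring may be arbitrarily long), which is why the document exchange protocol of Theorem~\ref{dBTdocexc} was stated to handle $k_2$ transpositions natively rather than charging them to the bit budget $t$.
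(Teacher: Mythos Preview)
Your proposal follows essentially the same approach as the paper: invoke Lemma~\ref{skDecode} to get $\mathsf{sk}_m$ and $r$, argue the error-free case is exact, then do an operation-by-operation accounting of how the adversary's $(k_1,t)$ block insertions/deletions and $k_2$ transpositions perturb $\mathsf{msg}_P'$ relative to $\mathsf{msg}_P$, and finally invoke the document exchange guarantee plus the XOR with $\mathsf{PRG}(r)$.

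One point to tighten: you state the result of the accounting as $O(k+t/\log n)$ block insertions/deletions on the message part, coming from your remark that a single inserted block of $t_0$ bits can introduce $O(t_0/\log n)$ spurious buffers. The paper gets the sharper count $O(k)$ block operations (with $O(k\log n+t)$ bits), and this matters because the sketch $\mathsf{sk}_m$ in step~\ref{enc_op2} is built for $\Omega(k)$ block operations, not $\Omega(k+t/\log n)$. The reason $O(k)$ suffices is that all of those spurious buffers lie \emph{inside} the single contiguous inserted block; stripping them and their trailing $\log n$ bits only removes subsegments of that block, and what remains is still one contiguous insertion at a single position of $\mathsf{msg}_P$ (plus at most one boundary effect of $O(\log n)$ bits). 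So each adversarial block operation contributes $O(1)$ block operations to $\mathsf{msg}_P'$ versus $\mathsf{msg}_P$, giving $O(k)$ total rather than $O(k+t/\log n)$. With that correction your argument lines up exactly with the paper's.
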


\begin{proof}
According to Lemma \ref{skDecode}, we have correctly recovered $\mathsf{sk}$. Thus we get $\mathsf{sk}_m$ and $r$ correctly.

Note that if there are no errors, then by deleting the buffers and the blocks of $c_1$ appended to these buffers, the remaining string is exactly the original message part, since the original message part does not contain $\mathsf{buf}$ as substrings.

Now we consider the effects of block insertions/deletions and transpositions for the message part. Specifically, we consider how the message part changes after each of these operations.
%
%
%
%
%

\begin{itemize}
\item block insertion:
First consider one block insertion of $t_0$ bits.
It can insert at most $ t_0   $ symbols to the message part if it does not damaging any original buffers. If it damages buffers, it may insert  $ O(\log n)$ more bits to the message part. It can also cause at most one block deletion of $O(\log n)$ bits since the rightmost buffer it inserts may cause our algorithm to delete the $O(\log n)$ bits following that buffer.

\item block deletion:
Consider a block deletion of $t_0$ bits. It can delete at most $ t_0$   blocks of the message part. If it damages buffers, it can cause at most one block insertion of $O(\log n)$ bits, since the rightmost deleted buffer may cause our algorithm to regard the $O(\log n)$ bits following that buffer as part of the message part.

\item block transposition:
now we consider one block transposition. It may cause at most one block transposition of the message part. Also it may create at most 3 new buffers and thus delete $3 \log n$ bits of the message part. Moreover, it may delete three buffers and thus insert $3 \log n$ bits to the message part.

\end{itemize}

Thus $(k_1, t)$-block insertions/deletions  can cause inserting/deleting at most $ O(k_1)  $ blocks of $O(t+ k_1 \log n)$ bits.
Also $k_2$ block transpositions can cause $O(k_2) $ block insertions/deletions of $O(k_2\log n)$ bits in total and $k_2$ block transpositions.

In summary, there are at most $O(k)$ block insertions/deletions of $O(k\log n + t)$ bits in total and $k_2$ block transpositions. Since our sketch $\mathsf{sk}$ can be used to correct  $(O(k),O(k\log n+t))$ block insertions/deletions and $k$ block transpositions, we can get $\mathsf{msg}_P$ correctly. As a result we can compute $\mathsf{msg} = \mathsf{msg}_P \oplus \mathsf{PRG}(r)$ correctly.

\end{proof}

\begin{theorem}

For every $n, k_1, k_2, t \in \mathbb{N}$ with $k = k_1+ k_2 < \alpha n/\log n, t \leq \beta n$, for some constant $\alpha, \beta$, there exists an explicit binary error correcting code for $(k_1, t)$-block insertions/deletions and $k_2$
block transpositions, having message length $n$, codeword length $ n +   O(k\log n \log\log\log n+t)$.
\end{theorem}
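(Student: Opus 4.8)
The plan is to assemble the theorem from the encoding and decoding algorithms (Algorithm~\ref{EncodingAlgo} and Algorithm~\ref{DecodingAlgo}), the document exchange protocol for $B$-distinct strings (Theorem~\ref{randmainthm}), the asymptotically good binary code of Schulman and Zuckerman (Theorem~\ref{asympGoodECCforInsdel}), and the pseudorandom generator constructed above, and then verify the length bound, correctness, and explicitness.

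First I would instantiate the ingredients of Algorithm~\ref{EncodingAlgo}. The $\mathsf{PRG}$ is the one from the theorem stated just after Theorem~\ref{genCJLW18}, so that for every $\mathsf{msg}$ at least one seed $r$ makes $\mathsf{msg}\oplus\mathsf{PRG}(r)$ simultaneously $B$-distinct and free of $\mathsf{buf}$ as a substring; such a seed exists by a union bound on the two failure events, each of probability $1/\poly(n)$, and can be found by exhaustive search over the $O(\log n)$-bit seed space in polynomial time. The sketch $\mathsf{sk}_m$ in step~\ref{enc_op2} is the sketch of the protocol of Theorem~\ref{randmainthm} applied to the $B$-distinct string $\mathsf{msg}_P=\mathsf{msg}\oplus\mathsf{PRG}(r)$, set up to tolerate $(\Theta(k),\Theta(k\log n+t))$ block edit errors; by Theorem~\ref{randmainthm} its length is $O(k\log n\log\log\log n + (k\log n+t)) = O(k\log n\log\log\log n + t)$. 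Appending the seed $r$ (length $O(\log n)$) and encoding with $\mathcal C_1$ increases the length by only a constant factor, and then inserting a copy of $\mathsf{buf}$ (of length $\ell_{\mathsf{buf}}=2\log n$) before each length-$\log n$ block again at most triples it. Hence the sketch part has length $O(k\log n\log\log\log n + t)$, and since the message part has length exactly $n$, the codeword $c$ has length $n + O(k\log n\log\log\log n + t)$.

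For correctness, suppose a $(k_1,t)$-block-insertion/deletion and $k_2$-block-transposition adversary turns $c$ into $c'$, with $k=k_1+k_2$. I would first invoke Lemma~\ref{skDecode}: since $\mathsf{buf}$ is longer than each block of $c_1$ and does not occur in the message part, recognizing the buffers and peeling off the $\log n$ bits after each one recovers a string $c_1'$ differing from $c_1$ by only $O(k+t/\log n)$ block insertions/deletions and $k_2$ block transpositions; treating each block insertion/deletion of $O(\log n)$ bits as $O(\log n)$ edit errors, this is $O(k\log n+t)$ edit errors plus $k_2$ block transpositions, which $\mathcal C_1$ corrects, so $\mathsf{sk}$, and hence $\mathsf{sk}_m$ and $r$, is recovered exactly. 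Then I would invoke Lemma~\ref{msgDecode}: the same errors restricted to the message part amount to $O(k)$ block insertions/deletions totalling $O(k\log n+t)$ bits plus $k_2$ block transpositions on $\mathsf{msg}_P$; since $\mathsf{sk}_m$ was built to correct exactly this many block edit errors on a $B$-distinct string, the Bob side of the protocol of Theorem~\ref{randmainthm} reconstructs $\mathsf{msg}_P$, after which $\mathsf{msg}=\mathsf{msg}_P\oplus\mathsf{PRG}(r)$ is computed. Explicitness is immediate because every ingredient---the PRG and its exhaustive seed search, the protocol of Theorem~\ref{randmainthm}, the code $\mathcal C_1$ of Theorem~\ref{asympGoodECCforInsdel}, and the buffer manipulations---runs in $\poly(n)$ time.

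The step I expect to be the main obstacle is the bookkeeping of the error budget as it passes through the buffer mechanism: one must check that forging or destroying buffers, and blocks migrating between the message part and the sketch part under a block transposition, only inflate the number of block operations by a constant factor and the number of inserted/deleted bits by an additive $O(k\log n)$ term, so that the "$\Omega(\cdot)$" parameters with which $\mathsf{sk}_m$ and $\mathcal C_1$ are instantiated in Algorithm~\ref{EncodingAlgo} are genuinely large enough. This is precisely what Lemmas~\ref{skDecode} and~\ref{msgDecode} establish; with those in hand the theorem follows by putting the pieces together, the only remaining point being to fix the hidden constants in step~\ref{enc_op2} and in $\mathcal C_1$'s error tolerance consistently with the constants produced by those two lemmas.
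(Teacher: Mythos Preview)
Your proposal is correct and follows essentially the same route as the paper: instantiate Algorithms~\ref{EncodingAlgo} and~\ref{DecodingAlgo} with the sketch of Theorem~\ref{randmainthm}, invoke Lemmas~\ref{skDecode} and~\ref{msgDecode} for correctness, and compute the redundancy from $|\mathsf{sk}_m|=O(k\log n\log\log\log n+t)$ together with the constant-factor blowups from $\mathcal C_1$ and the buffer insertions. If anything, you spell out more than the paper does (the seed search, the explicitness, and the error-budget accounting), but the structure and the ingredients are identical.
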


\begin{proof}

We construct the encoding   as Algorithm~\ref{EncodingAlgo} where the sketch in Stage~\ref{enc_op2} is computed by using Alice's algorithm (encoding) of the protocol of Theorem \ref{randmainthm}. The decoding is as Algorithm~\ref{DecodingAlgo}, where its stage~\ref{dec_op5} is computed by using Bob's algorithm of the protocol of Theorem \ref{randmainthm}.

The correctness of the decoding algorithm in \ref{EncodingAlgo} is shown by Lemma \ref{msgDecode}.

The sketch length $|\mathsf{sk}|$ is $O(k\log n\log\log \log n+t)$ by Theorem \ref{randmainthm}. The length of $c_1 $ is $O(|\mathsf{sk}|) = O(k\log n \log\log\log n+t)$ by Theorem \ref{asympGoodECCforInsdel}.
As there are $|c_1|/\log n$ number of length $l_{\mathsf{buf}}= O(\log n)$ buffers, each followed by a  length $\log n$ block of $c_1$, the total length of the sketch part is $O(|c_1|) = O(k\log n\log\log\log n+t)$.
\end{proof}

We can also directly using our document protocol to get an ECC.
\begin{theorem}
For every $n, k_1, k_2, t \in \mathbb{N}$ with $k = k_1+ k_2 < \alpha n/\log n, t \leq \beta n$, for some constant $\alpha, \beta$, there exists an explicit binary error correcting code for $(k_1, t)$-block insertions/deletions and $k_2$
block transpositions, having message length $n$, codeword length $ n +   O(( k  \log n +t) \log^2 \frac{n}{k\log n + t}  )$.
\end{theorem}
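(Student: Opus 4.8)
The plan is to reuse the encoding and decoding framework of Algorithms~\ref{EncodingAlgo} and \ref{DecodingAlgo} verbatim, except that the sketch $\mathsf{sk}_m$ of $\mathsf{msg}_P = \mathsf{msg}\oplus\mathsf{PRG}(r)$ is now computed with the document exchange protocol of Construction~\ref{dBTprotocol} (Theorem~\ref{dBTdocexc}) rather than the one of Theorem~\ref{randmainthm}. Since the protocol of Theorem~\ref{dBTdocexc} works for an \emph{arbitrary} binary input string, we no longer need $\mathsf{msg}_P$ to be $B$-distinct; the only property required of $\mathsf{PRG}$ is that for some seed $r$, $\mathsf{msg}\oplus\mathsf{PRG}(r)$ does not contain $\mathsf{buf}$ as a substring. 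This follows directly from Theorem~\ref{almostkwiseg}: taking an $\eps$-almost $\ell_\mathsf{buf}$-wise independence generator with $\eps = 1/n^2$ and seed length $O(\log n)$, a union bound over the $n-\ell_\mathsf{buf}+1$ candidate positions shows a random seed works with probability $1-1/\poly(n)$, so a good seed exists and can be found by exhaustive search; its description is appended to the sketch exactly as before.

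First I would run the encoding as in Algorithm~\ref{EncodingAlgo}: pick a good seed $r$, form $\mathsf{msg}_P$, compute $\mathsf{sk}_m$ with Construction~\ref{dBTprotocol} set against $\Omega(k)$ block insertions/deletions with $\Omega(k\log n + t)$ total inserted/deleted bits and $\Omega(k)$ block transpositions, let $\mathsf{sk} = \mathsf{sk}_m\circ r$, encode $\mathsf{sk}$ with the asymptotically good code $\mathcal{C}_1$ of Theorem~\ref{asympGoodECCforInsdel}, cut the codeword $c_1$ into length-$\log n$ blocks, prepend $\mathsf{buf}$ to each, and concatenate the result with $\mathsf{msg}_P$. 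The correctness analysis is then identical to Lemmas~\ref{skDecode} and \ref{msgDecode}, since those lemmas only used the buffer mechanism and the error bounds, not the particular document exchange protocol: a $(k_1,t)$ block insertion/deletion together with $k_2$ block transposition attack on the codeword, after the buffer-recognition step of Algorithm~\ref{DecodingAlgo}, induces at most $O(k + t/\log n)$ block insertions/deletions on $c_1$ (each of length $O(\log n)$, i.e.\ $O(k\log n + t)$ edit errors in total) plus $O(k_2)$ block transpositions, all correctable by $\mathcal{C}_1$; and it induces on $\mathsf{msg}_P$ at most $O(k)$ block insertions/deletions with $O(k\log n + t)$ total bits plus $k_2$ block transpositions, exactly the regime the sketch $\mathsf{sk}_m$ is built to handle by Theorem~\ref{dBTdocexc}. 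Hence Bob recovers $\mathsf{sk}$, then $\mathsf{msg}_P$, then $\mathsf{msg} = \mathsf{msg}_P\oplus\mathsf{PRG}(r)$.

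Finally, for the parameter count: applied to a string of length $n$ undergoing $O(k)$ block operations with $O(k\log n + t)$ total inserted/deleted bits, Theorem~\ref{dBTdocexc} yields a sketch of size $O\!\left((k\log n + t)\log^2\frac{n}{k\log n + t}\right)$; encoding it with $\mathcal{C}_1$ (Theorem~\ref{asympGoodECCforInsdel}) preserves this up to a constant factor, and prepending the length-$O(\log n)$ buffers to length-$\log n$ blocks adds only a further constant factor, so the codeword length is $n + O\!\left((k\log n + t)\log^2\frac{n}{k\log n + t}\right)$, with both parties running in polynomial time. The one point that needs care—handled exactly as in the previous subsection—is that the distance of $\mathcal{C}_1$ and the error parameters passed to Construction~\ref{dBTprotocol} must be chosen with large enough hidden constants to absorb the blocks that a block edit operation can forge or destroy at buffer boundaries; this is a routine verification and presents no real obstacle.
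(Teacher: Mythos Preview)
Your proposal is correct and follows essentially the same approach as the paper: instantiate Algorithms~\ref{EncodingAlgo} and~\ref{DecodingAlgo} with the sketch computed by the protocol of Theorem~\ref{dBTdocexc}, invoke Lemmas~\ref{skDecode} and~\ref{msgDecode} for correctness, and count parameters. You are slightly more explicit than the paper in noting that $B$-distinctness of $\mathsf{msg}_P$ is no longer needed (only $\mathsf{buf}$-freeness), but this does not constitute a different route.
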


\begin{proof}

We construct the encoding   as Algorithm~\ref{EncodingAlgo} where the sketch in Stage~\ref{enc_op2} is computed by using Alice's algorithm (encoding) of the protocol of Theorem \ref{dBTdocexc}. The decoding is as Algorithm~\ref{DecodingAlgo}, where its stage~\ref{dec_op5} is computed by using Bob's algorithm of the protocol of Theorem \ref{dBTdocexc}.
	
	The correctness of the construction is similar to Lemma~\ref{skDecode}, \ref{msgDecode}, the $(k_1, t)$-block insertions/deletions and $k_2$ block transpositions causes $(k, O(k \log n + t))$-block insertions/deletions and transpositions on the message and sketch part. Hence, according to  Theorem \ref{dBTdocexc}, a sketch of size $O((k \log n+t) \log^2\frac{n}{k \log n + t})$ for the document exchange protocol is enough to correct the errors.
	
	By Algorithm \ref{EncodingAlgo} and Theorem~\ref{asympGoodECCforInsdel}, the size of $c_1$ is  $O((k \log n+t) \log^2\frac{n}{k \log n + t})$ .
	The total length of the buffer inserted is $O(\log n) \cdot |c_1|/\log n = O(|c_1|)$. Hence, the total length of the redundancy is  $O((k \log n+t) \log^2\frac{n}{k \log n + t})$ .
\end{proof}

\section{Acknowledgements}
We thank an anonymous referee for catching a bug in the previous version of this paper, and Bernhard Haeupler for very useful feedbacks.

\bibliographystyle{plain}
\bibliography{ref}

\newpage

\appendix
\begin{center}
\bfseries \huge Appendices
\end{center}
\section{}
\label{appendix}

\begin{theorem}\label{docexlowerbnd}
	Suppose there is a deterministic document exchange protocol for strings of length $n$,
	and can resist $k$ block insertions/deletions and block transposition errors,
	where the total number of bits inserted or deleted is bounded by $t$, and $t < n/2$,
	then the sketch size is at least $\Omega(k \log n + t)$.
\end{theorem}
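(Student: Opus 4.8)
The plan is a counting (incompressibility) argument. The key observation is that for a \emph{deterministic} protocol the sketch map $x\mapsto \mathsf{sk}(x)$ must be injective on any set $S$ of length‑$n$ strings that share a common string $y_0$ within the allowed $(k,t)$ block edit distance of each of them: if $x\neq x'$ both lie in such an $S$, then Bob on input $(y_0,\mathsf{sk}(x))$ must output $x$ while on input $(y_0,\mathsf{sk}(x'))$ he must output $x'$, so $\mathsf{sk}(x)=\mathsf{sk}(x')$ would feed Bob identical inputs yet require different outputs. Hence the number of distinct sketches is at least $|S|$, and the sketch size is at least $\log_2|S|$. It remains to exhibit, for a fixed $y_0$, a set $S$ with $\log_2|S|=\Omega(k\log n+t)$ all of whose elements are within $(k,t)$ block edit distance of $y_0$; throughout we assume $k\ge 1$ (the statement is vacuous otherwise) and work in the regime $k\le \alpha n/\log n$ of this paper. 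The $\Omega(t)$ part is immediate: take $y_0=0^{\,n-t}$ and $S=\{0^{\,n-t}\circ w: w\in\{0,1\}^t\}$, a set of size $2^t$; a single block deletion of the last $t$ bits turns any element of $S$ into $y_0$, and this is within budget since $t<n/2$.

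For the $\Omega(k\log n)$ part — the crux — I would exploit that a block transposition can move a block to any of $\Theta(n)$ destinations and costs nothing against the $t$‑budget. Fix $\ell=O(\log n)$ and \emph{distinct} markers $M_1,\dots,M_k\in\{0,1\}^{\ell}$ chosen so that any string of the form $0^{a_0}M_{\sigma(1)}0^{a_1}\cdots M_{\sigma(k)}0^{a_k}$, with the interior gaps $a_1,\dots,a_{k-1}$ bounded below by a small constant, parses uniquely into its markers and gaps (e.g.\ let each $M_i$ begin with a short pattern that occurs nowhere else in such a string). Let $y_0=0^{\,n-k\ell}M_1M_2\cdots M_k$ and let $S$ consist of all length‑$n$ strings $x_{\sigma,\vec a}=0^{a_0}M_{\sigma(1)}0^{a_1}\cdots M_{\sigma(k)}0^{a_k}$ over all $\sigma\in S_k$ and all admissible gap vectors $\vec a$. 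A selection‑sort argument shows each $x_{\sigma,\vec a}$ can be transformed into $y_0$ (and vice versa) using exactly $k$ block transpositions: in step $j$ extract the block $M_{\sigma(j)}$ from the still‑bunched remaining markers and insert it into its final location, maintaining the invariant that after step $j$ the prefix through $M_{\sigma(j)}$ already agrees with $x_{\sigma,\vec a}$, all not‑yet‑placed markers sit consecutively just to its right, and the leftover zeros lie in between; the zero count always suffices because $\sum_i a_i=n-k\ell$. Counting, $|S|\ge k!\cdot\binom{n/2}{k}$ (using $k\ell\le n/2$), and $\log_2\!\big(k!\cdot\binom{n/2}{k}\big)\ge k\log_2 k+k\log_2\frac{n}{2k}-O(k)=k\log_2\frac{n}{2}-O(k)=\Omega(k\log n)$. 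It is precisely the $k!$ from permuting \emph{distinct} markers that upgrades the naive $\Omega(k\log(n/k))$ bound (all one gets with identical markers, or with $k$ block insertions/deletions of constrained total length) to the claimed $\Omega(k\log n)$.

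Combining the two families gives sketch size $\ge\max\{t,\ \Omega(k\log n)\}=\Omega(k\log n+t)$; alternatively one may merge the constructions — append a free $t$‑bit block to be removed by one block deletion and permute/relocate $k-1$ distinct markers by $k-1$ transpositions — to obtain $\Omega(k\log n+t)$ directly when $k\ge 2$. I expect the main obstacle to be the combinatorics of the transposition family: pinning down a marker design for which distinct $(\sigma,\vec a)$ yield distinct, uniquely parseable strings, carefully verifying the selection‑sort invariant so that exactly $k$ transpositions always suffice, and checking that the gap constraints needed for parseability cost only a constant factor in $|S|$ so that $\log_2|S|$ stays $\Omega(k\log n)$.
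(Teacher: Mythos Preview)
Your proposal is correct and follows essentially the same approach as the paper's proof: an incompressibility argument showing that many distinct $x$'s share a common neighbor $y_0$, where the $\Omega(k\log n)$ contribution comes from the $k!$ orderings of $k$ distinguishable blocks together with a $\binom{\Theta(n/\log n)}{k}$ positional choice, and the $\Omega(t)$ contribution from a freely chosen block of $\Theta(t)$ bits. The only cosmetic differences are that the paper fixes $y$ to be a $B$-distinct string (so its length-$B$ blocks are automatically distinct) and merges the two contributions into a single family, while you explicitly design distinct markers starting with $1$ and handle the $t$ and $k$ parts separately before combining.
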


\begin{proof}
	Suppose Alice has string $x$ and Bob has string $y$, and
	Alice sends a sketch $\mathsf{sk}(x)$ to allow Bob recovering her string $x$.
	For a fixed string $y$,
	each different strings $x_1, x_2$ satisfy $\mathsf{sk}(x_1) \neq \mathsf{sk}(x_2)$, otherwise the correctness of the document exchange protocol will be violated.
	Now suppose $y$ is a fixed string of length $n$ satisfying
	$B$-distinct property, where $B = O(\log n)$, we give a lower bound on the number of possible strings of $x$.
	
	Consider the following adversarial tempering of the string $x$:
	delete the last $t/2$ bits as a block, then insert arbitrary $t/2$ bits at the end as a block. Next, divide the $(n-t/2)$-prefix evenly to small blocks of length $B$.
	Arbitrary choose $k-2$ different small blocks and transpose them to the begining of the string in an arbitrary order.
	Then any differences in the $t/2$ bits inserted, the choice of the blocks or the ordering will result to different strings.
	Hence, the number of strings $x$ is lower bounded by
	\begin{align*}
	2^{t/2} \begin{pmatrix}
	\frac{n - t/2}{B} \\
	k-2
	\end{pmatrix} k! \ge 2^{t/2} \left(\frac{3n}{4(k-2)B}\right)^{k-2} \left(\frac{k-2}{e}\right)^{k-2} = 2^{t/2} \left(\frac{3n}{4eB}\right)^{k-2}
	\end{align*}
	
	Taking the $\log$, we obtain $|\mathsf{sk}| \ge \Omega(k \log n + t)$.
\end{proof}

\begin{theorem}\label{redlowerbnd}
	Let $n', n$ be two integers,
	if $\mathcal{C} \subseteq \{0, 1\}^{n'}, |\mathcal{C}| = 2^n$ is an Error Correcting Code for $k$ block insertions/deletions and block transpositions,
	where the total number of bits inserted or deleted is bounded by $t$, and $t < n/100$,
	then the redundancy size $n' - n \ge \Omega(k \log n + t)$.
\end{theorem}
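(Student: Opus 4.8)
The plan is a volume/packing argument based on the standard confusability principle: if the code has a correct decoder, then the sets of corrupted strings reachable from two distinct codewords must be disjoint, since any common reachable string would have to be decoded to two different messages. I would prove the two summands separately, namely $n'-n\ge t/2$ and $n'-n\ge\Omega(k\log n)$, and combine them (the maximum of two quantities is at least their average) to obtain $n'-n\ge\Omega(k\log n+t)$. Throughout one can stay inside received words of length exactly $n'$: block transpositions preserve length, and the suffix-rewriting trick below keeps length $n'$ as well.

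For the $t$ summand (assuming $k\ge 2$; the cases $k\le 1$ are degenerate): given any codeword $c$, the adversary may delete the last $\lceil t/2\rceil$ bits of $c$ as a block and then insert an arbitrary block of $\lceil t/2\rceil$ bits at the end. This uses two block operations and exactly $t$ inserted/deleted bits, so it is within the adversary's power, and it reaches every string that agrees with $c$ on its first $n'-\lceil t/2\rceil$ bits. If two distinct codewords shared that prefix their reachable sets would intersect, contradicting decodability; hence the $2^n$ codewords have pairwise distinct length-$(n'-\lceil t/2\rceil)$ prefixes, so $2^n\le 2^{\,n'-\lceil t/2\rceil}$ and $n'-n\ge t/2$.

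For the $k\log n$ summand, the idea is that a codeword with many distinct length-$B$ substrings has an enormous transposition-ball. First a counting step: a uniformly random length-$n'$ string is non-$B$-distinct with probability at most $\binom{n'}{2}2^{-B}$ (union bound over pairs of length-$B$ windows), so for a suitable $B=\Theta(\log n)$ the set of non-$B$-distinct strings in $\{0,1\}^{n'}$ has size below $2^{n-1}$; consequently at least $2^{n-1}$ codewords are $B$-distinct. Now fix a $B$-distinct codeword $c$, cut it into its $\lfloor n'/B\rfloor$ consecutive length-$B$ blocks (all distinct), choose any $k$ of them, and use $k$ block transpositions to move them to the front in any order. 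Two such modified strings agree on their first $kB$ bits only if their $k$ leading blocks agree one by one --- these blocks are aligned at multiples of $B$ --- and since the blocks are distinct, distinct choices of the $k$ blocks and their ordering yield distinct strings. Hence $c$ reaches at least $\binom{\lfloor n'/B\rfloor}{k}k!\ge(n'/(2B))^{k}=2^{\Omega(k\log n)}$ distinct length-$n'$ strings. These reachable sets, over the $\ge 2^{n-1}$ $B$-distinct codewords, are pairwise disjoint subsets of $\{0,1\}^{n'}$, so $2^{n-1}\cdot 2^{\Omega(k\log n)}\le 2^{n'}$, giving $n'-n\ge\Omega(k\log n)$.

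I expect the main obstacle to be the tension in the last argument between the counting step --- which only forces most codewords to be $B$-distinct when the redundancy $n'-n$ is not too large (this is where the hypothesis $t<n/100$, pinning $n'$ close to $n$, is used) --- and the transposition-ball estimate, whose exponent $k\log(n'/(2B))$ only stays $\Omega(k\log n)$ while $B=\Theta(\log n)$. Cleanly handling all parameters likely needs a case split: when $n'-n$ is large there is nothing to prove, when it is small the counting argument applies with $B=\Theta(\log n)$, and the delicate middle case (large $k$, moderate redundancy) is probably handled by fusing the suffix-rewriting trick with the transposition argument --- spending two operations to overwrite a $\Theta(\log n)$-length suffix of every codeword by a fixed $B$-distinct pattern and then transposing blocks of that pattern --- so that no structural assumption on the original codeword is used. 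A minor technical point, already dispatched above, is that a transposed block stays identifiable, which is exactly the multiple-of-$B$ alignment.
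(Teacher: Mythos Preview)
Your overall plan---a packing argument via disjoint confusability balls, lower-bounding each ball by suffix-rewriting for the $t$ term and by block transpositions for the $k\log n$ term---is exactly the paper's approach, and your proof of $n'-n\ge t/2$ is clean and correct. The gap is in the $k\log n$ half, and it is precisely the one you flag but do not actually close. Your counting step needs $B>(n'-n)+2\log n'+O(1)$ in order for fewer than $2^{n-1}$ strings of length $n'$ to fail $B$-distinctness; once $B$ is that large you have only $\lfloor n'/B\rfloor$ disjoint blocks to transpose. When $k$ is large (concretely, $k\gtrsim\sqrt{n/\log n}$), the assumption $n'-n<c_0 k\log n$ already pushes $B$ up to order $k\log n$, so $\lfloor n'/B\rfloor$ can drop below $k$ and the falling factorial collapses. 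Your suffix-rewrite patch does not rescue this: a $\Theta(\log n)$-length fixed pattern holds only $O(1)$ blocks of length $\Theta(\log n)$; if you shrink the block length to $\Theta((\log n)/k)$ you recover $k!$ arrangements, but the rewrite spends $\Theta(\log n)$ of the $t$ budget, and $t$ may be zero. (Also, the hypothesis $t<n/100$ does not pin $n'$ close to $n$; that is handled by the separate case split $n'-t/2<2n$.)

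The paper's remedy is to replace full $B$-distinctness by a much weaker structural condition: among the $n''/(10\log n'')$ \emph{disjoint} length-$(10\log n'')$ blocks of the first $n''=n'-t/2$ bits, at least $a=n''/(1000\log n'')$ are pairwise distinct. Strings violating this are counted by choosing $a$ block-values and assigning each block to one of them, giving at most $a^{\,n''/(10\log n'')}(n'')^{10a}2^{t/2}\le 2^{3n/5}$ such strings under the case assumption $n''<2n$; hence at least $2^{n-1}$ codewords satisfy the condition regardless of $n'-n$. Since $a=\Theta(n/\log n)\ge k$ throughout the relevant range, the transposition-ball bound $\binom{a}{k-2}(k-2)!\ge (n''/(C\log n''))^{k-2}$ goes through unimpeded. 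The paper also merges your two summands by performing the suffix rewrite and the $k-2$ transpositions on the same codeword, but your separate treatment would be fine once the counting step is repaired along these lines.
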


\begin{proof}
	Denote $n'' = n' - t/2$.
	It suffices to consider the case $n'' < 2n$.
	We evenly divide the interval $[1, n'')$ into smaller intervals of length $10 \log n''$, and denote these intervals as
	$I_1, I_2, \dots, I_{n''/10 \log n''}$.
		
	Let $\mathcal{C}'$ be a subset of $\mathcal{C}$ containing all the codewords $c$ such that the number of distinct strings in $\{c_{I_1}, c_{I_2}, \dots, c_{I_{n''/10 \log n''}} \}$ is at least $n'' / 1000 \log n''$. We will show that $\mathcal{C}'$ contains a large fraction of the codewords.

	For simplicity, we denote $a = n'' / 1000 \log n''$.
	Now we bound the size of the set $\mathcal{C} \setminus \mathcal{C}'$.
	Note that any codewords $c \in \mathcal{C} \setminus \mathcal{C}'$
	satisfies that the number of distinct strings in $\{c_{I_{1}}, c_{I_{2}}, \dots, c_{I_{n''/10 \log n''}}\}$ is smaller than $a$.
	Hence we have
	\begin{align*}
	|\mathcal{C} \setminus \mathcal{C}'|
	&\le a^{n'' / 10 \log n''} (2^{10 \log n''})^a 2^{t/2}
	= 2^{n'' \log a / 10 \log n'' + 10 a \log n'' + t/2} \\
	&\le 2^{n'' / 10 + n'' / 100 + t/2} \le 2^{3 n / 5}
	\end{align*}
	
	Now we obtain the lower bound of $|\mathcal{C}'|$. When $n \ge 2$,
	\begin{align*}
	|\mathcal{C}'| = |\mathcal{C}| - |\mathcal{C} \setminus \mathcal{C}'| \ge 2^n - 2^{3n / 5} \ge 2^{n} / 2
	\end{align*}
	
	For any codeword $c \in \mathcal{C}'$, define the ball $\mathcal{B}_c(k, t)$ to be the set
	containing all strings obtained by applying $k$ block insertions/deletions and block transpositions to $c$, where the total number of bits inserted or deleted is bounded by $t$.
	
	Consider the following adversarial tempering of the codeword $c$: delete the last $t/2$ bits of $c$ as a block deletion, then insert arbitrary $t/2$ bits at the ending of the tempered string as a block insertion. Next, arbitrary choose $k-2$ distinct strings from $c_{I_1}, c_{I_2}, \dots, c_{I_{n''/10 \log n''}}$,
	and transport them to the begining of the string in an arbitrary order.
	Then, any differences in the $t$ bits inserted, the choice of the $(k-2)$ substrings or the order of transpositions will result in different strings in $\mathcal{B}_c(k, t)$.
	Hence,
	
	\begin{align*}
		|\mathcal{B}_c(k, t)| \ge
		2^{t/2} \begin{pmatrix}
			a \\
			k-2
		\end{pmatrix} (k-2)!
		\ge
		2^{t/2} \left(\frac{n''/1000 \log n''}{k-2}\right)^{k-2} \left(\frac{k-2}{e}\right)^{k-2} =
		2^{t/2}\left(\frac{n''}{1000e \log n''}\right)^{k-2}
	\end{align*}
	
	As $\mathcal{C}$ is a code, the ball $\mathcal{B}_c(k, t)$ should be disjoint,
	so we have
	
	\begin{align*}
	2^{n'} \ge \sum_{c \in \mathcal{C}} |\mathcal{B}_c(k, t)|
	\ge |\mathcal{C'}| 2^{t/2} \left( \frac{n''}{1000e \log n''} \right)^{k-2} \ge 2^{n + t/2 - 1} \left( \frac{n''}{1000e \log n''} \right)^{k-2}
	\end{align*}
	
	Taking a $\log$ on both sides of the equation, we obtain $n' \ge n + \Omega(k \log n'' + t) \ge n + \Omega(k \log n + t)$.
\end{proof}

\begin{theorem}
	There exists a deterministic document exchange protocol running in exponential time in $n$ with sketch size $O(k \log n + t)$.
	Moreover, we can construct an Error Correting Code with redundancy size $O(k \log n + t)$ from the document exchange protocol.
	Hence the lower bounds in Theorem~\ref{docexlowerbnd} and Theorem~\ref{redlowerbnd} are tight.
\end{theorem}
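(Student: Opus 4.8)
The plan is to use the classical greedy graph–colouring argument (in the spirit of Orlitsky~\cite{185373}), instantiated for our adversary. For $x\in\bit^n$ let $\mathcal B(x)$ denote the set of all strings reachable from $x$ by some sequence of $(k,t)$ block edit errors. The first step is a counting bound. Since $n-t\le|y|\le n+t$ for every $y\in\mathcal B(x)$, and since a sequence of $k$ operations is specified by its $k$ operation types ($3^k$ choices), by $O(k)$ positions and lengths each ranging over $O(n)$ values (using $t\le\beta n$), and by at most $t$ freshly inserted bits ($2^t$ possibilities for the inserted content in total), one gets $|\mathcal B(x)|\le 2^{O(k)}\,n^{O(k)}\,2^t=2^{O(k\log n+t)}$. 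Running the same count on the \emph{reversed} operations (a block deletion reverses a block insertion and vice versa, a block transposition reverses a block transposition) shows that $|\mathcal B^{-1}(y)|:=\bigl|\{x'\in\bit^n: y\in\mathcal B(x')\}\bigr|\le 2^{O(k\log n+t)}$ for every $y$.

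Next I would form the \emph{confusability graph} $G$ on vertex set $\bit^n$, placing an edge between $x_1\neq x_2$ whenever $\mathcal B(x_1)\cap\mathcal B(x_2)\neq\emptyset$. Then $\deg_G(x)\le\sum_{y\in\mathcal B(x)}|\mathcal B^{-1}(y)|\le 2^{O(k\log n+t)}=:D$, so $G$ admits a proper colouring with $D+1$ colours, and a fixed such colouring can be produced by the greedy algorithm that processes the $2^n$ vertices in lexicographic order and, for each vertex, computes its already–coloured neighbourhood by brute force (enumerating $\mathcal B(x)$ and testing membership of each string in the various $\mathcal B^{-1}(\cdot)$) — all in time $2^{O(n)}$. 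Alice and Bob fix this colouring in advance; Alice sends the colour of $x$, which costs $\lceil\log(D+1)\rceil=O(k\log n+t)$ bits (note $k\log n+t=O(n)$). Given $y$ and this colour, Bob enumerates all $x'\in\bit^n$ with $y\in\mathcal B(x')$ — this set is a clique in $G$, hence rainbow–coloured — and outputs the unique one whose colour matches. Correctness is immediate and both parties run in exponential time.

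\paragraph{Error correcting code.} For the code I would reuse the buffer framework of Section~\ref{sec:BTcode} (Algorithms~\ref{EncodingAlgo} and~\ref{DecodingAlgo}) unchanged, except that the sketch $\mathsf{sk}_m$ computed in Step~\ref{enc_op2} is now the $O(k\log n+t)$-bit sketch of the exponential-time protocol above rather than that of Theorem~\ref{dBTdocexc}. Since this protocol works for arbitrary binary strings, here the generator $\mathsf{PRG}$ only needs to guarantee that $\mathsf{buf}$ is not a substring of $\mathsf{msg}\oplus\mathsf{PRG}(r)$, which an $\eps$-almost $\ell_{\mathsf{buf}}$-wise independent generator with $\eps=1/\poly(n)$ and seed length $O(\log n)$ (Theorem~\ref{almostkwiseg}) provides by the same union bound as before. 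The analysis of Lemmas~\ref{skDecode} and~\ref{msgDecode} then goes through verbatim: the $(k_1,t)$ block insertions/deletions and $k_2$ block transpositions applied to the codeword induce $(O(k),O(k\log n+t))$ block edit errors on the message and sketch parts, which an $O(k\log n+t)$-bit sketch corrects; encoding $\mathsf{sk}=\mathsf{sk}_m\circ r$ via Theorem~\ref{asympGoodECCforInsdel} yields $c_1$ of length $O(k\log n+t)$, and inserting the length-$O(\log n)$ buffers before each length-$\log n$ block multiplies this by only a constant. Hence the redundancy is $O(k\log n+t)$, and combining with the lower bounds of Theorem~\ref{docexlowerbnd} and Theorem~\ref{redlowerbnd} gives tightness.

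\paragraph{The main difficulty.} The only genuinely delicate point is the counting estimate $|\mathcal B(x)|,|\mathcal B^{-1}(y)|\le 2^{O(k\log n+t)}$: although a block transposition may move an arbitrarily long substring and a block deletion may remove many bits in one step, each such operation is nevertheless pinned down by only $\poly(n)$ structural choices, and it is the parameter $t$ — not $n$ — that bounds the number of possibilities for the freshly inserted content. Once this estimate is in place, the remainder is the standard greedy–colouring / sphere-packing argument together with a black-box invocation of the construction of Section~\ref{sec:BTcode}.
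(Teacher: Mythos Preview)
Your proposal is correct and follows the same greedy graph-colouring approach as the paper: bound the size of the $(k,t)$ ``ball'' by $2^{O(k\log n+t)}$, use this to bound the degree of a confusability graph, colour it greedily, and let the colour serve as the sketch; for the code, plug this sketch into the buffer construction of Section~\ref{sec:BTcode}. Your formulation is in fact a bit more careful than the paper's (you explicitly form the confusability graph on $\bit^n$ and verify that the candidate set $\{x':y\in\mathcal B(x')\}$ is a clique, whereas the paper places edges directly via a single $(k,t)$ transformation and is slightly terse about why Bob's decoding is unambiguous), but the two arguments are essentially identical.
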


\begin{proof}
	We build a graph. Each string with length smaller than $n + t$ corresponds to a vertex in the graph.
	For every two different strings $x$ and $y$, if one can transform $x$ to $y$ using $k$ block insertions/deletions and transpositions, and the total number of inserted and deleted bits is bounded by $t$, then add an edge between $x$ and $y$.
	Now the degree of the graph is at most $(2n)^{O(k)} 2^t = 2^{O(k \log n + t)}$, hence we can use $2^{O(k \log n + t)}$ colors to color the graph.
	
	We construct the document exchange protocol as follows.
	Given the input string, Alice sends the color of the string as the sketch, so the sketch has size $O(k \log n + t)$ bits.
	Then Bob looks at the strings connected to his string, and
	find the string whose color matches the sketch.
	
	In fact, the construction of the Error Correcting Code in Section~\ref{sec:BTcode} can be applied to any document exchange protocol, so we obtain an Error Correcting Code of redundancy $O(k \log n + t)$.
\end{proof}

\end{document}